\newtheorem{theorem}{Theorem}
\newtheorem{proposition}[theorem]{Proposition}
\newtheorem{lemma}[theorem]{Lemma}
\newtheorem{claim}[theorem]{Claim}
\newtheorem{definition}[theorem]{Definition}
\theoremstyle{definition}
\newenvironment{customcorollary}[1]
  {\innercustomthm}
  {\endinnercustomthm}
\newcommand{\ket}[1]{|#1\rangle}
\newcommand{\bra}[1]{\langle#1|}
\newcommand{\ketbra}[2]{|#1\rangle\! \langle #2|}
\newcommand{\Tr}{\mbox{\rm Tr}}
\newcommand{\Id}{\mathrm{Id}}
\DeclareMathOperator*{\Ex}{\mathbb{E}}
\newcommand{\Ind}{\mathbf{1}}
\newcommand{\C}{\ensuremath{\mathbb{C}}}
\newcommand{\R}{\ensuremath{\mathbb{R}}}
\renewcommand{\P}{\mathsf{P}}
\newcommand{\Q}{\mathsf{Q}}
\newcommand{\setft}[1]{\mathrm{#1}}
\newcommand{\density}[1]{\setft{Dens}\!\left(#1\right)}
\DeclareMathOperator{\poly}{poly}
\newcommand{\eps}{\varepsilon}
\newcommand{\paren}[1]{(#1)}
\newcommand{\Paren}[1]{\left(#1\right)}
\newcommand{\brac}[1]{[#1]}
\newcommand{\Brac}[1]{\left[#1\right]}
\newcommand{\dbrac}[1]{\llbracket #1 \rrbracket}
\newcommand{\set}[1]{\{#1\}}
\newcommand{\norm}[1]{\lVert#1\rVert}
\newcommand{\Norm}[1]{\left\lVert#1\right\rVert}
\begin{document}

\title{Noise-tolerant testing of high entanglement of formation}
\author{Rotem Arnon-Friedman\thanks{RAF is supported by the Swiss National Science Foundation (grant No. 200020-135048) via the National Centre of Competence in Research ``Quantum Science and Technology'' and by the US Air Force Office of Scientific Research (grant No. FA9550-16-1-0245).}\\ ETH-Z\"{u}rich \and Henry Yuen\thanks{HY is supported by ARO Grant W911NF-12-1-0541 and NSF Grant CCF-1410022. } \\ UC Berkeley}

\date{\empty}
\maketitle

\newcommand{\mj}{{-j}}
\newcommand{\qval}{\mathrm{qval}}
\newcommand{\cval}{\mathrm{cval}}

\newcommand{\DivMid}{\, \Big \| \,}

\newcommand{\Qvec}{\mathbf{Q}}
\newcommand{\Rvec}{\mathbf{R}}
\newcommand{\Dvec}{\mathbf{D}}
\newcommand{\Dbar}{\overline{D}}
\newcommand{\Cbar}{\overline{C}}
\newcommand{\Sbar}{\overline{S}}

\newcommand{\fulla}{\avec}
\newcommand{\fullb}{\bvec}
\newcommand{\aC}{\avec_C}
\newcommand{\bC}{\bvec_C}
\newcommand{\X}{\mathsf{X}}
\newcommand{\Y}{\mathsf{Y}}
\newcommand{\Z}{\mathsf{Z}}
\newcommand{\A}{\mathsf{A}}
\newcommand{\B}{\mathsf{B}}
\newcommand{\W}{\mathsf{W}}
\newcommand{\U}{\mathsf{U}}
\newcommand{\Hilb}{\mathcal{H}}
\newcommand{\mi}{{-i}}

\newcommand{\wt}[1]{\widetilde{#1}}
\newcommand{\what}[1]{\widehat{#1}}
\newcommand{\eval}{\mathrm{val}^*}
\newcommand{\val}{\mathrm{val}}
\newcommand{\ac}{\mathrm{Z}}

\newcommand{\xvec}{\mathbf{x}}
\newcommand{\yvec}{\mathbf{y}}
\newcommand{\avec}{\mathbf{a}}
\newcommand{\bvec}{\mathbf{b}}

\newcommand{\Xvec}{\mathbf{X}}
\newcommand{\Yvec}{\mathbf{Y}}
\newcommand{\Avec}{\mathbf{A}}
\newcommand{\Bvec}{\mathbf{B}}

\newcommand{\comp}[1]{\overline{#1}}

\newcommand{\ScupT}{{S \cup T}}
\newcommand{\theevent}{W^{\geq 1 - \tau}_S}

\begin{abstract}
In this work we construct tests that allow a classical user to certify high dimensional entanglement in uncharacterized and possibly noisy quantum devices.
We present a family of non-local games $\{G_n\}$ that for all $n$ certify states with entanglement of formation $\Omega(n)$.
These tests can be derived from \emph{any} bipartite non-local game with a classical-quantum gap.
Furthermore, our tests are noise-tolerant in the sense that fault tolerant technologies are not needed to play the games; entanglement distributed over noisy channels can pass with high probability, making our tests relevant for realistic experimental settings. 
This is in contrast to, e.g., results on self-testing of high dimensional entanglement, which are only relevant when the noise rate goes to zero with the system's size $n$.  
As a corollary of our result, we supply a lower-bound on the entanglement cost of any state achieving a quantum advantage in a bipartite non-local game.
Our proof techniques heavily rely on ideas from the work on classical and quantum parallel repetition theorems.

\end{abstract}

\section{Introduction}

Non-local games offer a powerful method to experimentally study the properties and behavior of uncharacterized quantum systems. In a non-local game, an experimenter can play a game with two non-communicating players (representing spatially separated quantum systems) via classical interaction only. Based on the outcome of the game, the experimenter draws conclusions about, e.g., whether the players used an entangled quantum state to win the game. This idea dates back to John Bell's seminal paper~\cite{bell1964}, in which he presents a game to test the non-classicality of nature. Today, such games are not only relevant for our understanding of the foundations of quantum physics but are at the heart of device-independent quantum information processing, where a classical user can certify that an unknown quantum device is performing a desired computational or cryptographic task (such as, e.g., device-independent quantum key distribution~\cite{barrett2005no,pironio2009device,vazirani2014fully,miller2014robust,arnon2016simple} or delegated quantum computation~\cite{reichardt2013classical,hajduvsek2015device,gheorghiu2015robustness,natarajan2017quantum,coladangelo2017verifier}).

In this work we ask the following question: 
\begin{center}
	\emph{Is it possible to classically test for high dimensional entanglement, even in the presence of noise?}
\end{center}
Whereas Bell's original test is a classical method to certify the \emph{presence} of entanglement, we are instead interested in non-local games that would allow us to quantify the \emph{amount}. In particular, we are interested in certifying the amount of entanglement of \emph{noisy quantum systems}.

Designing noise-tolerant tests for high dimensional entanglement is an important and timely challenge for both computer science and physics. First, our understanding of complexity theory indicates that unless ${\sf BQP} \subseteq {\sf BPP}$ (i.e., quantum computers are classically simulable), general quantum computations must involve highly entangled states. Thus if we hope to achieve super-classical speedups in quantum computers, at the very least we must be able to generate high dimensional entanglement. 

Second, we are seeing increasingly sophisticated experiments involving quantum information, from loophole-free Bell tests~\cite{hensen2015loophole,shalm2015strong,giustina2015significant} to small scale quantum computers~\cite{boixo2016characterizing,IBM}. However, full-fledged quantum fault tolerance appears to be a faraway prospect; in the near-term, our explorations of complex quantum states will be done using noisy gates and little (if any) error correction. Despite this obstacle, researchers have been enthusiastically proposing uses of noisy quantum computers, from approximate optimization to investigation of exotic physics phenomena. Interesting questions will emerge in tandem with these efforts, namely: how can one verify that a noisy quantum computer has succeeded in these proposed experiments? Finding noise-tolerant tests to certify high dimensional entanglement is a prerequisite step towards verifying other complex quantum behavior in this noisy regime.

\paragraph{What do we mean by certifying entanglement?} There are a variety of ways to formulate this task; our work is most directly motivated by recent work on \emph{self-tests}, which are games that certify the presence of entanglement \emph{of a specific form}. The works of~\cite{mckague2016self,chao2016test,coladangelo2017parallel,coudron2016parallel,natarajan2017quantum} construct families of games $\{G_n\}$ where any optimal quantum strategy for $G_n$ must use a large amount of entanglement, e.g., a tensor product of $n$ EPR pairs. These self-testing results are also \emph{robust}, in that near-optimal strategies must use states that are near a specific highly entangled state. However, these tests will also reject a natural class of highly entangled states such as $\sigma^{\otimes n}$ where $\sigma$ has fidelity $1 - \nu$ with a single EPR pair. Here, think of $\nu$ as a small (but fixed) noise parameter that represents the level of imperfection of a state preparation process.

Thus, even though $\ketbra{EPR}{EPR}^{\otimes n}$ can be used to pass the tests of~\cite{mckague2016self,chao2016test,coladangelo2017parallel,coudron2016parallel,natarajan2017quantum} with high probability, the ``similar-looking'' state  $\sigma^{\otimes n}$ will fail with high probability. A key observation we wish to emphasize in this paper is that robustness of a self-test is \emph{not} equivalent to noise tolerance!

More formally, the robust self-tests in the above works show the following: let $\qval(G_n)$ denote the optimal quantum winning probability for the game $G_n$. Then there exists a function $f(n,\eps)$ and an \emph{ideal state} $\rho^*_n$ such that for all $\eps$, any quantum strategy that achieves a winning probability of at least $\qval(G_n) - \eps$ must use a state $\rho$ that is $f(n,\eps)$-close to $\rho^*_n$. In these works, $\rho^*_n$ is a state whose entanglement grows with $n$ (like a maximally entangled state on $n$ qubits). ``Closeness'' can be defined in terms of the fidelity of the two states up to local isometries acting on each of the players' systems.

Given a game $G_n$ as above, an experiment to test the entanglement of an unknown state $\rho$ can be the following: play the game $G_n$ using $\rho$, and check whether the game is won.\footnote{In an experiment one actually needs to prepare many identical and independent copies of $\rho$ and play the game $G_n$ many times. Then the average winning probability can be calculated, and high amount of entanglement is certified (with high probability) if the average winning probability is at least $\qval(G_n) - \eps$.} 
In order to obtain a non-trivial guarantee about $\rho$, we require that $f(n,\eps) < 1$; one can think of this function as specifying the amount of experimental imperfection/noise that can be tolerated by the \emph{test} itself. In the works of~\cite{mckague2016self,chao2016test,coladangelo2017parallel,coudron2016parallel}, the function $f(n,\eps)$ scales as $a \cdot n^b \cdot \eps^c$ for constants $a,b,c$. Thus we get no guarantees about $\rho$ unless $\eps$ scales as $1/\poly(n)$. In other words, as we increase the amount of entanglement we want to certify, the test becomes less tolerant of noise! 

The strongest self-testing result (in this context) is presented in the work of Natarajan and Vidick~\cite{natarajan2017quantum}. There, a self-test for $n$ EPR pairs is given where the associated function is $f(n,\eps) = O(\sqrt{\eps})$. 
While the closeness parameter is independent of the parameter $n$, such $f(n,\eps)$ still requires that, in order to pass the test with high probability, the players share a state $\rho$ that is \emph{globally} $O(\sqrt{\eps})$-close to $\ketbra{EPR}{EPR}^{\otimes n}$. Using a state like $\sigma^{\otimes n}$  where $\sigma$ has $1 - \nu$ fidelity with a single EPR pair would fail their test with high probability, because $\sigma^{\otimes n}$  has \emph{exponentially small} fidelity $(1 - \nu)^n \approx e^{-n/\nu}$ with $\ketbra{EPR}{EPR}^{\otimes n}$.

In this paper we seek an entanglement test that is both \emph{sound} --- meaning that any strategy that passes the test with good probability must have high entanglement --- and also \emph{noise tolerant}, meaning that they do not reject noisy implementations of an ideal strategy. The self-tests above are sound, but they are not noise tolerant. Part of the difficulty stems from the fact that it is not even clear how one should \emph{formulate} the soundness guarantee of a desired noise-tolerant self-testing result.

\paragraph{Noise model.} 

As discussed above, we wish to define a testing procedure that can also certify entanglement in noisy entangled states. While our work can be used to certify different types of noisy states, we briefly discuss a specific noise model here for the sake of concreteness. 
The noise model that we have in mind produces a state of the form $\sigma^{\otimes n}$ where each $\sigma$ has fidelity $1-\nu$ with some optimal state defined via the considered non-local game.  Such a state can be produced, e.g., by sending many copies of the optimal state via noisy channels. 

We emphasize that by saying that this is the noise model that we consider we merely mean that we require that our tests will be able to certify the entanglement of $\sigma^{\otimes n}$. However, we do not assume that all of the states on which the procedure is applied must have this form (i.e., the soundness part of the statement is independent of the considered noise model).

\subsection{Results and contributions}
In this work, instead of trying to certify the presence of a specific state like in self-testing statements, we address the question of certifying an entanglement measure. This allows us to sidestep the difficulty of formulating a noise-tolerant self-testing result. 

We present a family of simple non-local games $\{G_n\}$ where each game $G_n$ certifies that the shared state of the players has $\Omega(n)$ bits of \emph{entanglement of formation}. The entanglement of formation, denoted by $E_F(\rho)$, is a   well-studied entanglement measure for bipartite mixed states that, in the case of pure states, is equal to the entanglement entropy. As the name suggests, the entanglement of formation captures, roughly speaking, the amount of entanglement needed in order to produce a given state $\rho$. It is also closely related to another important, perhaps more well known, entanglement measure which will be of use below -- the \emph{entanglement cost} $E_C(\rho)$. The entanglement cost of a mixed state roughly describes how many EPR pairs are needed to create $\rho$ via local operations and classical communication~\cite{bennett1996mixed}. 
We provide a more thorough discussion of the entanglement measures relevant for our work in Section~\ref{sec:ent_of_formation}.

The family of non-local games that we consider are the so called \emph{threshold games}.
Before stating our main result, we define these games. Let $G$ be a two-player non-local game with classical value\footnote{The classical value of a game is the maximum winning probability when the players employ classical strategies, i.e., do not use entanglement. Similarly, the quantum value of a game is the optimal winning probability when using quantum strategies. See Section~\ref{sec:games_related_def} for the formal definition.} $\cval(G)$ and quantum value $\qval(G)$. Given an integer $n \geq 1$ and a noise threshold $0 \leq \nu < \qval(G) - \cval(G)$, define the threshold game $G^n_{\qval(G) - \nu}$ to be a game where the two-players now play $n$ independent instances of $G$ in parallel, and win if they win at least $\qval(G) - \nu$ fraction of instances of $G$. 

The main theorem of this paper is as follows:
\begin{theorem}[Main theorem]
\label{thm:main}
	Let $G$ be a two-player game with a classical-quantum gap: i.e., $\Delta := \qval(G) - \cval(G) > 0$. Let $0 \leq \nu < \Delta$ be a noise parameter. 

	\medskip 
		
	\noindent \textbf{Completeness (Noise tolerance).} Let $n \geq 1$ be an integer. Consider a quantum strategy for $G$ that succeeds with probability $\qval(G) - \eta$ for $0\leq\eta<\nu$. Playing this strategy $n$ times independently in parallel in the threshold game $G^n_{\qval(G) - \nu}$ succeeds with probability at least $1 - \exp \paren{-(\nu - \eta)^2 n/3}$.\footnote{Alternatively, a simpler (but slightly weaker) statement is that playing a strategy the succeeds with probability $\qval(G) - \nu$ in $G$ $n$ times independently in parallel succeeds in the threshold game $G^n_{\qval(G) - \nu}$ with probability $\frac{1}{2}$. This is sufficient for an experiment certifying entanglement.}
	
	\medskip 
	
	\noindent \textbf{Soundness (Entanglement certification).} There exists constants $0 < c_1,c_2 < 1$ such that for sufficiently large $n > \frac{1}{c_1}$, any strategy that wins the threshold game $G^{n}_{\qval(G) - \nu}$ with probability $\kappa \geq \exp(-c_1 n)$ must use a quantum state $\rho$ such that its entanglement of formation satisfies $E_F(\rho) \geq c_2\kappa^2 n$.

The constants $c_1,c_2$ depend only on $\Delta$, $\nu$, and the number of possible answers in $G$.
\end{theorem}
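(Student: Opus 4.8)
The completeness statement is a routine concentration bound: playing the two copies of the $(\qval(G)-\eta)$-strategy independently in parallel makes the win-indicators of the $n$ instances i.i.d.\ Bernoulli with mean $\qval(G)-\eta$, so the fraction of instances won concentrates around $\qval(G)-\eta$ and the threshold game is lost only when this fraction dips by at least $\nu-\eta>0$ below its mean; a Chernoff--Hoeffding bound then yields failure probability at most $\exp\bigl(-(\nu-\eta)^2 n/3\bigr)$, and the footnote variant follows from a simple anti-concentration fact (a sum of independent Bernoullis attains at least its mean with constant probability). The soundness statement is the real content, and I would prove it by passing to an optimal entanglement-of-formation decomposition of the shared state, reducing to a pure-state bound, and establishing that bound by Schmidt truncation, quantum-to-classical rounding, and the concentration form of the classical parallel repetition theorem.

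\textbf{Soundness: reduction to a pure-state estimate.} Fix a strategy whose shared state $\rho$ wins $G^n_{\qval(G)-\nu}$ with probability $\kappa\ge\exp(-c_1 n)$, and write $\rho=\sum_i p_i\ketbra{\psi_i}{\psi_i}$ for an ensemble attaining $E_F(\rho)=\sum_i p_i E(\psi_i)$, where $E(\psi_i)$ is the entanglement entropy of $\ket{\psi_i}$. Since winning the threshold game is implemented by a fixed measurement on the shared state, the success probability is linear in that state, $\kappa=\sum_i p_i\kappa_i$, where $\kappa_i$ is the success probability of the \emph{same} measurements applied to $\ket{\psi_i}$. It therefore suffices to prove: there are constants $c_2',\beta>0$, depending only on $\Delta-\nu$ and the number of answers of $G$, such that for every bipartite pure state $\ket\psi$ and every quantum strategy using it in $G^n_{\qval(G)-\nu}$ with success probability $v$, one has $E(\psi)\ge c_2'\,n\,(v^2-e^{-\beta n})$. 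Granting this, applying it with $v=\kappa_i$, averaging against $p_i$, and using the convexity inequality $\sum_i p_i\kappa_i^2\ge\bigl(\sum_i p_i\kappa_i\bigr)^2=\kappa^2$ gives $E_F(\rho)\ge c_2' n(\kappa^2-e^{-\beta n})$; taking $c_1<\beta/2$ makes this at least $\tfrac{1}{2}c_2'\,n\,\kappa^2$ for $n$ large, so one sets $c_2=c_2'/2$.

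\textbf{Proof of the pure-state estimate.} I would argue in three steps. (i) \emph{Truncation.} Cutting the Schmidt decomposition of $\ket\psi$ to its $d$ largest coefficients produces a state $\ket\phi$ of Schmidt rank at most $d$ with $\norm{\ket\psi-\ket\phi}=O\bigl(\sqrt{E(\psi)/\log d}\bigr)$: every discarded Schmidt eigenvalue is at most $1/d$ and hence contributes entropy at least its value times $\log d$, so the discarded probability mass is at most $E(\psi)/\log d$. Choosing $\log d=\Theta(E(\psi)/v^2)$ keeps the success probability of the (unchanged) measurements on $\ket\phi$ at least $v/2$. (ii) \emph{Rounding.} A quantum strategy using a state of Schmidt rank $d$ can be simulated by a classical shared-randomness strategy at the cost of a $\poly(d)$ multiplicative factor: have the players share a Haar-random unit vector $w\in\C^d$ with weights given by the Schmidt spectrum, each player outputting according to the probabilities $\iprod{w|M^{\mathrm{in}}_{\mathrm{out}}|w}$ induced by its POVM; one checks that this reproduces each quantum probability up to a factor $\poly(d)$ together with an extra nonnegative contribution, so the classical winning probability is at least a $1/\poly(d)$ fraction of the quantum one. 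Hence $G^n_{\qval(G)-\nu}$ has a classical strategy of value at least $v^2/2^{O(E(\psi)/v^2)}$. (iii) \emph{Classical parallel repetition.} Since $\qval(G)-\nu=\cval(G)+(\Delta-\nu)$ with $\Delta-\nu>0$, the concentration (threshold) version of the classical parallel repetition theorem gives $\cval\bigl(G^n_{\qval(G)-\nu}\bigr)\le\exp\bigl(-\Omega\bigl((\Delta-\nu)^2 n/\log|A|\bigr)\bigr)=:e^{-\beta n}$. Combining (ii) and (iii), $v^2/2^{O(E(\psi)/v^2)}\le e^{-\beta n}$; taking logarithms and solving for $E(\psi)$ yields $E(\psi)\ge c_2'\,n\,v^2$ whenever $v\ge e^{-\beta n/4}$, while for $v<e^{-\beta n/4}$ the claimed inequality is trivial. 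The quadratic dependence on $\kappa$ thus arises twice: once because pushing the truncation error below $v$ in step (i) forces Schmidt rank $2^{\Theta(E(\psi)/v^2)}$, and once through the convexity step of the previous paragraph.

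\textbf{The main obstacle.} The crux is the pure-state estimate, and within it the quantitative rounding of step (ii) for \emph{non-maximally-entangled} states of bounded Schmidt rank: one must control the small Schmidt coefficients (which otherwise appear in denominators of the rounding --- e.g.\ by discretizing/flattening the Schmidt spectrum, or by a further truncation that discards the tiniest coefficients at negligible cost), and bound both the multiplicative loss and any additive error by a power of $d$, all while keeping that loss mild enough that, against the exponentially small classical value from step (iii), the final bound remains linear in $n$. A secondary point is to invoke the classical concentration theorem in a form whose exponent has the right dependence on the answer-alphabet size, since this (together with $\Delta-\nu$) is exactly what pins down the constants $c_1,c_2$.
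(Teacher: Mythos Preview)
Your proposal is correct and reaches the stated bound, but it follows a genuinely different route from the paper's proof. The paper establishes the pure-state estimate $E(\psi)\ge c_2'\,\kappa\,n$ by a self-contained information-theoretic argument in the style of quantum parallel repetition: it finds a small set $S$ and a conditioning event $W_S^{\ge 1-\tau}$ under which a random coordinate $j$ is won with probability exceeding $\cval(G)$, and then shows---via dependency-breaking variables, classical correlated sampling, and a ``quantum Raz's lemma'' step in which the entanglement entropy enters through the bound $I(\Xvec_j\Avec_j:\Q_B\mid\cdots)\le E(\psi)/\P(W_S^{\ge1-\tau})$ on average over $j$---that this coordinate can be simulated by a purely classical strategy, a contradiction. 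Your argument instead black-boxes three ingredients: Schmidt truncation (step~(i)), the dimension bound $\qval_d(F)\le d\cdot\cval(F)$ (step~(ii)), and Rao's classical threshold parallel repetition (step~(iii)). The paper explicitly discusses the combination of the latter two in its related-work section and points out that it yields only a linear-in-$n$ lower bound on the $1/2$-R\'enyi entropy; your truncation step is precisely the extra observation that upgrades this to von Neumann entropy, at the price of a weaker pure-state exponent $E(\psi)\ge c\,v^2 n$ versus the paper's $c\,v\,n$. After the mixed-to-pure reduction (where the paper uses a reverse-Markov argument and you use Jensen) both routes land at $E_F(\rho)\ge c_2\kappa^2 n$, so the final statements match. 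Two minor comments: your step~(ii) need not be re-derived by the Haar-random-vector rounding you sketch---the clean fact is simply that $\ketbra{\psi}{\psi}\preceq d\cdot\sigma_\lambda$ for the separable state $\sigma_\lambda=\sum_i\lambda_i\,\ketbra{ii}{ii}$, which immediately gives the factor $d$ for \emph{any} Schmidt-rank-$d$ state and dissolves your worry about small Schmidt coefficients; and your closing remark that the quadratic loss ``arises twice'' is slightly off---it arises once, in the truncation step, while the convexity step only transports $v^2$ to $\kappa^2$ without further loss.
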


To gain a better understanding of our theorem we now give an example. 
Consider the famous CHSH game, which has classical value $\cval(CHSH) = 3/4$ and quantum value $\qval(CHSH) \approx 0.854$. Any strategy for winning a single instance of CHSH with probability $\qval(CHSH) - \eta$ for some parameter $0 \leq \eta < 0.1$ must use some entangled state $\sigma$. 
An ``honest'' strategy for playing the threshold game $CHSH^n_{.854 - 2\eta}$ would be to play each instance of $CHSH$ independently using $\sigma^{\otimes n}$ as the entangled resource state. Via a simple Chernoff-Hoeffding bound it is easy to see that this strategy will pass $CHSH^n_{.854 - 2\eta}$ with overwhelming probability. Thus this game is noise-tolerant. The entanglement of formation of $\sigma^{\otimes n}$ is indeed $\Omega(n)$.

But what about other strategies? Is there a state with entanglement of formation $o(n)$ that can be used to win  $CHSH^{n}_{.854 - 2\eta}$ sufficiently well? Theorem~\ref{thm:main} shows that this is not possible. 

We list several features of Theorem~\ref{thm:main}:
\begin{enumerate}
	\item It holds for \emph{any} two-player game $G$. In other words, any game with a classical-quantum gap can be ``lifted'' to another game that tests for large entanglement in a noise-tolerant manner.
	
	\item The players are able to pass our test with high probability by holding a tensor product of noisy few-qubit states (such as $\sigma^{\otimes n}$ where $\sigma$ has fidelity $1 - \nu$ with an EPR pair for any amount).
	The theorem gives non-trivial guarantees for any $0 \leq \nu < \qval(G)-\cval(G)$, i.e., it is robust to \emph{any} amount of noise up to the classical limit.

	\item It gives non-trivial guarantees even for strategies whose success probability is far from optimal; for any constant~$\kappa$, Theorem~\ref{thm:main} still guarantees that $E_F(\rho) \in \Omega(n)$.\footnote{However, the constants $c_1$ and $c_2$ are probably not optimal and can be improved.} 
	\end{enumerate}

Theorem~\ref{thm:main} thus shows that by playing the simple threshold game $G^{n}_{\qval(G) - \nu}$ with an uncharacterized device we can classically test for large amounts of entanglement (as measured by the entanglement of formation), even when the device is highly noisy, as current devices are. As far as we are aware, previous results~\cite{mckague2016self,chao2016test,coladangelo2017parallel,natarajan2017quantum,coladangelo2017robust} cannot be used to derive conclusions which are quantitively strong as Theorem~\ref{thm:main}, even when considering more complex games and proof techniques.\footnote{This is not to say that our work supersedes the mentioned works; these derived self-testing statements which certify the \emph{state} and not just its \emph{entanglement} as we do here.}

Our main theorem presented above can be easily used to derive another quantitive relation between the advantage in a non-local game $G$ and the \emph{entanglement cost} required to achieve this advantage. Specifically, we prove the following.

\begin{theorem}\label{cor:lwb_ent_single_game}
	Let $G$ be a two-player game with a classical-quantum gap: i.e., $\Delta := \qval(G) - \cval(G) > 0$. Let $0 \leq \nu < \Delta$ be a noise parameter. Then, for any state $\sigma$ that can be used to win~$G$ with probability at least $\qval(G) - \nu$, its entanglement cost satisfies $E_C(\sigma)\geq c_2/4$, where~$c_2$ is the constant from Theorem~\ref{thm:main}.
\end{theorem}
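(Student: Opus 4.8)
The plan is to derive this as a direct corollary of Theorem~\ref{thm:main}, by feeding the ``honest'' parallel strategy into the soundness bound and then passing to the regularized entanglement of formation. Recall from Section~\ref{sec:ent_of_formation} that the entanglement cost of a bipartite state $\sigma$ is the regularization of the entanglement of formation,
\[
	E_C(\sigma) \;=\; \lim_{n\to\infty} \frac1n\, E_F\!\left(\sigma^{\otimes n}\right),
\]
where the limit exists by Fekete's subadditivity lemma because $E_F$ is subadditive under tensor products. So it suffices to show that $\tfrac1n E_F(\sigma^{\otimes n}) \geq c_2/4$ for all sufficiently large $n$.

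First I would fix a single-copy strategy using $\sigma$ that wins $G$ with probability at least $\qval(G)-\nu$, and consider the strategy for the threshold game $G^{n}_{\qval(G)-\nu}$ that holds the resource state $\sigma^{\otimes n}$ and plays the $i$-th instance of $G$ on the $i$-th tensor factor using this single-copy strategy, independently across instances. By the completeness part of Theorem~\ref{thm:main} (in particular the simplified statement in its footnote), this strategy wins $G^{n}_{\qval(G)-\nu}$ with probability $\kappa_n \geq \tfrac12$. Next I would choose $n$ large enough that $n > 1/c_1$ and $\tfrac12 \geq \exp(-c_1 n)$, so that the hypotheses of the soundness part of Theorem~\ref{thm:main} are met with $\kappa = \kappa_n$; applying it to the strategy above, whose resource state is exactly $\sigma^{\otimes n}$, gives $E_F(\sigma^{\otimes n}) \geq c_2 \kappa_n^2 n \geq (c_2/4)\, n$. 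Dividing by $n$ and taking $n \to \infty$ then yields $E_C(\sigma) \geq c_2/4$.

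There is no real obstacle here --- the argument is essentially bookkeeping --- but two points deserve care. The first is to use the operational characterization ``$E_C$ equals the regularized $E_F$'' rather than $E_F$ itself; this matters because $E_F$ is known not to be additive, so a bound on $E_F(\sigma)$ alone would not suffice, whereas the regularized quantity is exactly what ``number of EPR pairs per copy needed to create $\sigma$ by LOCC'' refers to. The second is the routine check that the honest strategy's winning probability $\kappa_n \geq \tfrac12$ satisfies the soundness hypothesis $\kappa \geq \exp(-c_1 n)$, which indeed holds once $n$ is large; the constant $1/4 = (1/2)^2$ in the conclusion is precisely $\kappa_n^2$ for this strategy.
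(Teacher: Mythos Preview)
Your proposal is correct and follows essentially the same route as the paper: feed the honest $n$-fold parallel strategy on $\sigma^{\otimes n}$ into the soundness clause of Theorem~\ref{thm:main} with $\kappa\geq 1/2$, obtain $E_F(\sigma^{\otimes n})\geq (c_2/4)n$, and pass to the regularization $E_C(\sigma)=\lim_n E_F(\sigma^{\otimes n})/n$ via~\cite{hayden2001asymptotic}. Your additional remarks (checking $\kappa\geq \exp(-c_1 n)$ for large $n$, and noting that non-additivity of $E_F$ is exactly why the regularization is needed) are accurate and simply make explicit what the paper leaves implicit.
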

Put in other words: the minimum entanglement cost\footnote{For any $\sigma$, $E_F(\sigma) \geq E_C(\sigma)$. Thus, Theorem~\ref{cor:lwb_ent_single_game} could have been phrased in terms of the entanglement of formation as well.} needed to obtain a super-classical success probability in a non-local game only depends on the classical-quantum gap as well as the number of possible answers in the game.

As we explain in Section~\ref{sec:ent_of_formation}, even given the full description of a state $\sigma$, calculating $E_C(\sigma)$ is not easy and no ``single letter'' formula is known to describe it.  Theorem~\ref{cor:lwb_ent_single_game} gives a simple lower bound on $E_C(\sigma)$ in terms of $\sigma$'s advantage in any non-local game $G$.

The only lower-bound with a similar flavour which was known before is the one given in~\cite{verstraete2002entanglement}. There, a (tight) relation between $E_F(\sigma)$ and $\sigma$'s winning probability in the CHSH game was derived. Self-testing results can, of course, also be used to achieve similar bounds (by taking into account the continuity of the considered entanglement measures), but so far most of the results are non-trivial for a very limited amount of noise and only apply to specific two-player games. In contrast, Theorem~\ref{cor:lwb_ent_single_game} holds for any non-local game and amount of noise.

\subsection{Why entanglement of formation?}
\label{sec:ent_of_formation}

In this section we motivate and explain the relations between the entanglement measures certified by our tests in Theorem~\ref{thm:main}  and Theorem~\ref{cor:lwb_ent_single_game}.

Myriad entanglement measurements have been studied by researchers, each possessing various properties~\cite{plenio2005introduction,horodecki2009quantum}. 
For pure bipartite states $\ket{\psi}^{\A\B}$, the coarsest quantity describing entanglement is the \emph{entanglement rank}, which is simply the Schmidt rank of $\ket{\psi}$. However, this is not a very useful measure of entanglement as one can have a state arbitrarily close to a product state, yet have high entanglement rank.

A more natural measure of entanglement is the \emph{entanglement entropy} $E(\psi)$, which is the von Neumann entropy of the reduced density matrix of $\ket{\psi}$ on system~$\A$ or equivalently~$\B$~\cite{bennett1996concentrating,popescu1997thermodynamics}. In fact, the entanglement entropy is the \emph{unique} entanglement measure for pure bipartite states that satisfies a few natural axioms, such as monotonicity under local operations and classical communication (LOCC) and asymptotic continuity~\cite{horodecki2009quantum}. 

For mixed states the situation is more complicated --- there is no clear ``best'' entanglement measure.
The most natural and operational entanglement measures are considered to be the \emph{entanglement cost} $E_C$ and the \emph{distillable entanglement} $E_D$. In fact, for any entanglement measure $M$ satisfying some natural properties we have that $E_D \leq M \leq E_C$~\cite{horodecki2009quantum}. Thus the entanglement cost and distillable entanglement are in a sense ``extremal'' entanglement measures. For pure states, both $E_C$ and $E_D$ are equal to the entanglement entropy.

In the following we focus on $E_C$. Informally, the entanglement cost of a bipartite quantum state~$\rho_{AB}$ describes the number of maximally entangled states required to produce $\rho$ using only LOCC. As LOCC cannot increase entanglement, the pre-shared maximally entangled states describe the sole source of entanglement in such a process and hence quantify how entangled $\rho$ is in a meaningful way.\footnote{Another way of thinking about the operational meaning of entanglement cost is by considering the task of entanglement dilution. There, the goal is to start with initial noiseless entanglement and dilute it to create a target state $\rho$ using LOCC.} 

Formally, the entanglement cost is defined as the following asymptotic quantity:
\begin{equation*}
	E_C(\rho) = \inf \left\{ r : \lim_{n \rightarrow \infty} \left( \inf_{\Lambda} \| \rho^{\otimes n} - \Lambda(\Phi^+_{2^{rn}}) \|_1 \right) =0 \right\} \;,
\end{equation*}
where the infimum ranges over all LOCC maps $\Lambda$ and $\Phi^+_{2^{rn}}$ is the maximally entangled state of rank $2^{rn}$. 
That is, it is the maximal possible rate $r$ at which one can convert $\Phi^+_{2^{rn}}$ into $\rho^{\otimes n}$ with vanishing error in the limit $n \rightarrow \infty$. 

Computing $E_C(\rho)$ is considered to be a difficult task in general.
Due to this reason one usually considers a closely related entanglement measure called the \emph{entanglement of formation}. It is formally defined as follows~\cite{bennett1996mixed}:
\begin{equation*}
	E_F(\rho) = \inf\left\{ \sum_i p_i E(\Psi_i) : \rho=\sum_i p_i \ket{\Psi_i}\bra{\Psi_i} \right\} \;.
\end{equation*}
That is, $E_F(\rho)$ is the minimum average entanglement entropy
over all pure-state decompositions of $\rho$.

The entanglement of formation derives its relevance from its relation to the entanglement cost~$E_C(\rho)$ discussed above. It describes the rate in which maximally entangled states are converted to~$\rho$ using a specific type of LOCC protocols~\cite{wotters2002ent} (whereas $E_C(\rho)$ is the minimum over all LOCC protocols). Furthermore,~\cite{hayden2001asymptotic} showed that the entanglement cost is equal to the \emph{regularised} entanglement of formation:
\[
	E_C(\rho) = E^{\infty}_F(\rho)=\lim_{n\rightarrow\infty} (E_F(\rho^{\otimes n})/n).
\]
For some time it was conjectured that the entanglement of formation is additive and hence $E_C(\rho)=E_F(\rho)$. Today it is known that this is not the case and that the limit in the above equation is needed in general~\cite{brandao2010hastings}.

It is not known how to compute $E^{\infty}_F(\rho)$ for general $\rho$, in part because of the infinite limit. The ``single-letter'' quantity $E_F(\rho)$ does not appear to be much easier to compute because of the minimisation over all possible decompositions of $\rho$. To date, it can be done only for states with high symmetry~\cite{terhal2000entanglement,vollbrecht2001entanglement} or of low dimension~\cite{wootters1998entanglement,wootters2001entanglement,audenaert2001variational}. 
One can imagine that the task  of calculating or bounding $E_F(\rho)$ only becomes harder if one does not have full information about $\rho$ as in the scenario considered in the current work.

In the light of the above, one can see our work as giving a way to lower bound those complex entanglement measures for  an unknown state $\rho$ in a device-independent manner. Of course, this is not a general method that works for all states $\rho$, but rather it works for any state $\rho$ that can be used to gain an advantage in non-local games (or, in other words, violate some Bell inequality).
Specifically, Theorem~\ref{thm:main} gives a lower bound on $E_F$ for high dimensional (while perhaps noisy) states that can be used to pass the threshold game $G^n_{\qval(G) - \nu}$ for some two-player game $G$. 
Theorem~\ref{cor:lwb_ent_single_game} gives a lower bound on $E_C$ for any state achieving a quantum advantage in a two-player game $G$. 
In particular, for any given state one can choose the game $G$ such that the lower bounds on  $E_F$ and $E_C$ are maximal.

\subsection{Proof technique}

The proof idea is simple: if the entanglement of formation of the players' shared state in the threshold game $G^n_{\qval(G) - \nu}$  is $o(n)$ and the players win with non-negligible probability, then this strategy can be transformed into a strategy for the original game $G$ that uses \emph{no} entanglement, yet still wins with probability strictly greater than $\cval(G)$, which would be a contradiction. 

This is argued as follows. Consider a two-player game $G$ where the first player receives a question $x$ and produces answer $a$, and the second player receives question $y$ and responds with answer $b$. The players win if $V(x,y,a,b) = 1$ for some predicate $V$. Let $\qval(G) > \cval(G)$. 

Now suppose there is a quantum strategy that wins $G^n_{\qval(G) - \nu}$ with decent probability. A simple probabilistic argument implies that conditioned on an event $E$ of winning roughly $\qval(G) - \nu$ fraction of some subset $S \subseteq [n]$ of instances, the players will win the $j$'th instance with probability close to $\qval(G)$, for an average $j \in [n]$. Another way of phrasing this statement is: Let $(\Xvec_j,\Yvec_j)$ denote the questions to the two players in the $j$'th instance of $G$, and let $(\Avec_j,\Bvec_j)$ denote their answers. Let $\P_{\Xvec_j \Yvec_j \Avec_j \Bvec_j | E}$ denote the joint distribution of questions and answers of the $j$'th coordinate in this hypothetical strategy, conditioned on the event $E$. Then sampling a tuple $(\Xvec_j, \Yvec_j, \Avec_j, \Bvec_j)$ from $\P_{\Xvec_j \Yvec_j \Avec_j \Bvec_j | E}$ will satisfy the game predicate $V$ with probability $\qval(G) - \eps > \cval(G)$.

Next, we will prove the following three statements (roughly speaking): (1) $\P_{\Xvec_j \Yvec_j | E} \approx \P_{\Xvec_j \Yvec_j}$, (2)~$\P_{\Avec_j | \Xvec_j \Yvec_j E} \approx \P_{\Avec_j | \Xvec_j E}$, and (3) $\P_{\Bvec_j | \Xvec_j \Yvec_j \Avec_j E} \approx \P_{\Bvec_j | \Yvec_j E}$, where ``$\approx$'' denotes closeness in statistical distance. Notice that without the conditioning event $E$, the first item would be trivial and the second item would follow exactly from the non-signaling condition between the players. To prove the third item, we use the fact that the hypothetical strategy for the threshold game uses~$o(n)$ bits of entanglement; intuitively this implies that each instance of $G$ can only use $o(1)$ bits of entanglement. 

Putting these three items together, we obtain a classical strategy for $G$: the first player receives question $\Xvec_j$, and samples an answer $\Avec_j$ from the distribution $\P_{\Avec_j | \Xvec_j E}$. The second player receives question $\Yvec_j$ and samples from $\P_{\Bvec_j | \Yvec_j E}$. The joint distribution of their questions and answers will be close to $\P_{\Xvec_j \Yvec_j \Avec_j \Bvec_j | E}$, but that implies that they will win $G$ with probability $\qval(G) - \eps > \cval(G)$, which is a contradiction.

The proof strategy and the techniques used are heavily inspired by the proofs of the \emph{parallel repetition theorem} in classical complexity theory~\cite{raz1998parallel,Hol09,rao2011parallel}, and subsequently the work on the \emph{quantum parallel repetition problem}. This problem asks for a bound on $\qval(G^n)$ if $\qval(G)<~1$, where $G^n$ is like the threshold game except we demand that the players win \emph{all} instances of~$G$. It is conjectured that $\qval(G^n)$ decays exponentially with $n$, although the best general upper bound is that $\qval(G^n)$ decays polynomially with $n$ when $\qval(G) < 1$~\cite{yuen2016parallel}. Nearly all of the works that study the quantum parallel repetition problem~\cite{JainPY14,chung2015parallel,BVY15fort,BVY17} share the proof strategy of transforming a ``too-good-to-be-true'' strategy for the repeated game~$G^n$ into a ``too-good-to-be-true'' strategy for the single game $G$, namely a quantum strategy with success probability better than $\qval(G)$, a contradiction. These works all use information-theoretic machinery in the proof, and in this work we use the same tools.

\subsection{Related work}

Our work is the first that addresses directly the question of certifying the entanglement of formation of high dimensional states in a noise-tolerant way (while the case of a single CHSH game was already considered in~\cite{verstraete2002entanglement} as mentioned above). 

Any robust self-testing result can be used to certify any continuous entanglement measures (e.g. the entanglement of formation); but as explained before, such results cannot accommodate the kinds of noise considered here. In addition to the self-testing results mentioned before~\cite{mckague2016self,chao2016test,coladangelo2017parallel,coudron2016parallel,natarajan2017quantum,coladangelo2017robust}, the only other self-testing result that certifies asymptotically growing amounts of entanglement is from the work of Reichardt, Unger and Vazirani~\cite{reichardt2013classical}, who show how to verify quantum computations using classical resources only. At the heart of their result is a \emph{sequential} protocol where the experimenter plays many rounds of the CHSH game with the two players in order to certify the presence of many EPR pairs. 
However, like the other self-testing results, the protocol of~\cite{reichardt2013classical} is also not noise-tolerant in the sense considered here. 

If one cares just about certifying high \emph{entanglement rank} of a state  (rather than certifying an entanglement measure such as $E_F$, or precisely characterizing the state as in self-testing), then we can combine the following two independent results to address the question of noise-tolerant, device-independent testing of asymptotically growing amounts of entanglement: 
The work of~\cite{rao2011parallel} shows that the \emph{classical} value of a threshold game $G^n_{\cval(G) + \delta}$ decays exponentially fast with $n$ (if $\cval(G) < 1$). The work of~\cite{junge2010unbounded} shows that the maximum quantum success probability in a game $F$ using dimension-$d$ entanglement is at most $d \, \cval(F)$. Letting $F$ be a threshold game, we obtain that $d$ must be exponentially large in any quantum strategy whose winning probability is say at least a small constant. Since the threshold game is noise-tolerant (i.e. it can be won with high probability with noisy strategies), this gives a noise-tolerant test for \emph{entanglement rank}. This same argument can be modified to show that the \emph{$1/2$-R\'{e}nyi entropy} of the state\footnote{The $1/2$-R\'{e}nyi entropy of a pure state $\ket{\psi}$ is $2 \log (\sum_i \lambda_i^{1/2})$ where $\lambda_i$ are the eigenvalues of the reduced density matrix of $\ket{\psi}$ on either side.} must be linear in $n$.

Our test lower bounds a stronger entanglement measure, the entanglement of formation, which in the pure state case is the entanglement entropy and therefore a lower bound on the $1/2$-R\'{e}nyi entropy. There can be arbitrarily large gaps between the von Neumann entropy and the $1/2$-R\'{e}nyi entropy of a pure state.

The broader goal of certifying the dimension of a quantum system in a device-independent manner has been heavily studied under the heading of \emph{dimension witnesses}. Much of the work on dimension witnesses has focused on finding Bell inequalities such that achieving the optimal violation requires an entangled state of a certain dimension~\cite{brunner2008testing,pal2009quantum,cai2016new}. Many of these works construct and design dimension witnesses using a combination of analytical and numerical techniques.

\subsection{Future work}
Some open problems and future directions include:
\begin{enumerate}
	\item Quantitatively improve our results. The constants $c_1,c_2$ in Theorem~\ref{thm:main} are small; for the CHSH game, the constant $c_1$ is on the order of $10^{-6}$ and thus in order for our Theorem to give any guarantees, $\sim 10^6$ CHSH games would have to be played.
	Even though recent experiments are capable of producing such a large amount of states (in~\cite{liu2017high}, for example, order of $10^{10}$ signals were produced), an improvement of the constants can lead to the ability of certifying \emph{much} more entanglement in such experiments. 
	Our analysis is far from tight and significant quantitative improvements can probably be gained by tailoring the analysis to a specific game, such as the CHSH game.
	
	\item To get a non-trivial bound on the entanglement of formation, this requires that the success probability $\kappa$ is at least $\sim 1/\sqrt{n}$. Can this dependence on $\kappa$ be improved?
	
	\item Can one prove a version of Theorem~\ref{thm:main} for some non-local games $G$ that allows one to lower bound other measures of entanglement, such as distillable entanglement\footnote{In a related work by Jean-Daniel Bancal together with one of the current authors a device-independent protocol certifying a lower bound on the one shot distillable entanglement is given. The considered setting and type of statement are different than the ones presented here. For further details see~\cite{arnon2017noise}.} or quantum conditional entropy? The results of~\cite{vertesi2014disproving,friis2017geometry} indicate that this cannot be done for arbitrary amount of noise for all games since there are Bell inequalities that can be violated while using states with un-distillable entanglement or positive conditional entropy.

	\item Can one prove a self-testing result for a growing number of EPR pairs that is also noise-tolerant in the sense described above? A concrete goal would be to characterize all near-optimal strategies for the threshold game $CHSH^n_{.854 - \nu}$. The results of~\cite{coopmans2017robust} hint that by sticking to the current measures of distance considered in self-testing results any characterization of near-optimal strategies for $CHSH^n_{.854 - \nu}$, in the regime of high amount of noise, must include also non-entangled states. Hence, we do not expect self-testing results (as they are phrased today) to allow for certification of entanglement in the presence of arbitrary noise using threshold games.
\end{enumerate}

\paragraph{Acknowledgments.} We thank Valerio Scarani for helpful pointers to the literature, Thomas Vidick for feedback on an earlier draft, and anonymous referees for helpful comments and pointing us to the work of~\cite{junge2010unbounded}. Work on this project initiated when RAF was visiting UC Berkeley.

\section{Preliminaries}
\label{sec:prelim}

We will use caligraphic font such as $\mathcal{X}$ to denote alphabets. We will use boldfaced font to denote vectors. For example, $\xvec$ will denote an element of $\mathcal{X}^n$. We will use capital boldfaced font to denote the corresponding random variables. For example, $\Xvec$ is a random variable that takes values in $\mathcal{X}^n$. For a coordinate $i$, $\xvec_i$ will denote the $i$'th element of $\xvec$ and $\Xvec_i$ will denote the corresponding random variable. For a subset $S \subseteq [n]$, $\xvec_S$ will denote the sub-tuple of $\xvec$ indexed by $S$.

\subsection{Probability distributions}\label{subsec:prob_dist}
We largely adopt the notational conventions from~\cite{Hol09} for probability distributions. We let capital letters denote random variables and lower case letters denote specific samples. 
We use $\P_X$ to denote the probability distribution of random variable $X$, and $\P_X(x)$ to denote the probability that $X = x$ for some value $x$. For multiple random variables, e.g., $X, Y, Z$, $\P_{XYZ}(x,y,z)$  denotes their joint distribution with respect to some probability space understood from context. 

We use $\P_{Y | X = x}(y)$ to denote the conditional distribution $\P_{YX}(y,x)/\P_X(x)$, which is defined when $\P_X(x) > 0$. When conditioning on many variables, we usually use the shorthand $\P_{X | y,z}$ to denote the distribution $\P_{X | Y =y,Z=z}$. For example, we write $\P_{V | \omega_\mi, x_i, y_i}$ to denote $\P_{V | \Omega_\mi = \omega_\mi, X_i = x_i, Y_i = y_i}$. For an event $W$ we let $\P_{X Y | W}$ denote the distribution conditioned on $W$. We use the notation $\Ex_{x} f(x)$ to denote the expectation $\sum_{x} \P_X(x) f(x)$ when the distribution $\P$ is understood from context.

Let $\P_{X_0}$ be a distribution on $\mathcal{X}$, and for every $x$ in the support of $\P_{X_0}$, let $\P_{Y | X_1 = x}$ be a conditional distribution defined over $\mathcal{Y}$. We define the distribution $\P_{X_0} \P_{Y | X_1}$ over $\mathcal{X} \times \mathcal{Y}$ as
$$
	(\P_{X_0} \P_{Y | X_1})(x,y) \,:=\, \P_{X_0}(x) \cdot \P_{Y | X_1 = x}(y).
$$
Additionally, we write $\P_{X_0 Z} \P_{Y | X_1}$ to denote the distribution $(\P_{X_0 Z} \P_{Y | X_1})(x,z,y) := \P_{X_0 Z}(x,z) \cdot \P_{Y | X_1 = x}(y)$.

For two random variables $X_0$ and $X_1$ over the same set $\X$, we use
$$\| \P_{X_0} - \P_{X_1} \| \,:=\, \frac{1}{2}\sum_{x \in \mathcal{X}} |\P_{X_0}(x) - \P_{X_1} (x)|,$$
to denote the total variation distance between $\P_{X_0}$ and $\P_{X_1}$. We will use the shorthand $\P_{X_0} \approx_\delta \P_{X_1}$ to denote $\| \P_{X_0} - \P_{X_1} \| \leq \delta$.

Additionally, given two probability distributions $\P_{XY}, \Q_{XY}$ such that $\P_X = \Q_Y$ (i.e. the marginals are the same) we will write expressions such as
\[
	\Ex_{x} \Brac{ \P_{Y | x} \approx_\delta \Q_{Y | x} }
\]
to denote 
\[
	\Ex_{x} \Norm{ \P_{Y | x} - \Q_{Y | x} } \leq \delta
\]
where the expectation over $x$ drawn from $\P_X$.

\subsection{Quantum information theory}

For comprehensive references on quantum information we refer the reader to~\cite{nielsen2010quantum,wilde2013quantum}. 

For a matrix $A$, we will use $\| A \|_1$ to denote its \emph{trace norm} $\Tr(\sqrt{A A^\dagger})$. A density matrix is a positive semidefinite matrix with trace $1$. For Hermitian matrices $A, B$ we write $A \preceq B$ to indicate that $A - B$ is positive semidefinite. We use $\Id$ to denote the identity matrix. A \emph{positive operator valued measurement} (POVM) with outcome set $\mathcal{A}$ is a set of positive semidefinite matrices $\{ E^a \}$ labeled by $a\in \mathcal{A}$ that sum to the identity.

We use sans-serif font such as $\A, \B, \X, \Y$ to denote system labels. We will decorate quantum states with superscripts to denote the relevant registers; so $\rho^{\A\B}$ will denote the density matrix on the systems $\A$ and $\B$. We will let $\density{\A}$ to denote the set of density matrices on system~$\A$. A \emph{classical-quantum} state (or simply \emph{cq-state}) $\rho^{\X \mathsf{E}}$ is classical on $\X$ and quantum on $\mathsf{E}$ if it can be written as $\rho^{\X\mathsf{E}} = \sum_{x} p(x) \ketbra{x}{x}^\X \otimes \rho^{\mathsf{E}}_{X = x}$ for some probability measure $p(\cdot)$. For notational convenience, we will use $\dbrac{x}$ to denote the classical register $\ketbra{x}{x}$. 

The state $\rho^{\mathsf{E}}_{X = x}$ is by definition the $\mathsf{E}$ part of the state $\rho^{\X\mathsf{E}}$, conditioned on the classical random variable $X = x$. We write $\rho^{\X\mathsf{E}}_{X = x}$ to denote the state $\dbrac{x}^{\X} \otimes \rho^{\mathsf{E}}_{X = x}$. We often write expressions such as $\rho^{\mathsf{E}}_x$ as shorthand for $\rho^{\mathsf{E}}_{X = x}$ when it is clear from context which registers are being conditioned on. This will be useful when there are many classical variables to be conditioned on. 


We will use the short hand $\rho \approx_\delta \sigma$ to denote $\norm{\rho - \sigma}_1 \leq \delta$. We use the expression $ \Ex_{z} \brac{ \rho_z \approx_\delta \sigma_z}$ to denote $\Ex_z \norm{\rho_z - \sigma_z}_1 \leq \delta$.


\paragraph{Relative entropy, relative min-entropy, and mutual information.} For two positive semidefinite operators $\rho$, $\sigma$, the \emph{relative entropy} $D(\rho \| \sigma)$ is defined to be $\Tr(\rho (\log \rho - \log \sigma))$. The \emph{relative min-entropy} $D_\infty(\rho \| \sigma)$ is defined as $\min\{ \lambda : \rho \preceq 2^\lambda \sigma \}$.   

Let $\rho^{\A\B}$ be a bipartite state. The \emph{mutual information} $I(\A:\B)_\rho$ is defined as $D(\rho^{\A\B} \| \rho^\A \otimes \rho^\B)$. For a classical-quantum state $\rho^{\X\A\B}$ that is classical on $X$ and quantum on $\A\B$, we write $I(\A : \B | x)_{\rho}$ to indicate $I(\A : \B)_{\rho_x}$.

\begin{lemma}[Pinsker's inequality]
\label{lemma:pinsker}
	For all density matrices $\rho, \sigma$, $ \frac{1}{2} \| \rho - \sigma \|^2_1 \leq D(\rho \| \sigma)$.
\end{lemma}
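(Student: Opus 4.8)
The plan is to reduce the quantum inequality to the classical binary case using the data-processing (monotonicity) inequality for relative entropy, and then to dispose of the classical case by a one-variable calculus argument. Throughout I may assume $\supp(\rho) \subseteq \supp(\sigma)$, since otherwise $D(\rho\|\sigma) = +\infty$ and the bound is trivial (the left-hand side is at most $\tfrac12\cdot 2^2 = 2$).

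First I would invoke the variational formula for the trace distance: $\|\rho - \sigma\|_1 = 2 \max_{0 \preceq M \preceq \Id} \Tr\!\big(M(\rho-\sigma)\big)$, with the maximum attained at $M = \Pi_+$, the projector onto the nonnegative part of $\rho - \sigma$. Writing $\delta := \Tr\!\big(\Pi_+(\rho-\sigma)\big) = \tfrac12\|\rho-\sigma\|_1$, $p := \Tr(\Pi_+\rho)$ and $q := \Tr(\Pi_+\sigma)$, we have $p - q = \delta$. Now apply the two-outcome measurement channel $\Lambda(X) := \Tr(\Pi_+ X)\,\ketbra{0}{0} + \Tr\!\big((\Id-\Pi_+)X\big)\,\ketbra{1}{1}$, which sends $\rho$ and $\sigma$ to the diagonal states encoding the Bernoulli distributions $(p,1-p)$ and $(q,1-q)$. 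By the data-processing inequality for quantum relative entropy (see, e.g.,~\cite{wilde2013quantum,nielsen2010quantum}), $D(\rho\|\sigma) \geq D\big(\Lambda(\rho)\,\|\,\Lambda(\sigma)\big) = d(p\|q)$, where $d(p\|q) := p\log\tfrac pq + (1-p)\log\tfrac{1-p}{1-q}$. Hence it suffices to prove the scalar inequality $d(p\|q) \geq 2(p-q)^2$, because then $D(\rho\|\sigma) \geq 2\delta^2 = \tfrac12\|\rho-\sigma\|_1^2$.

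For the scalar inequality I would fix $p$ and set $g(q) := d(p\|q) - 2(p-q)^2$ on $(0,1)$. A direct computation gives $g(p) = 0$ and $g'(q) = \tfrac{q-p}{q(1-q)} - 4(q-p) = (q-p)\big(\tfrac{1}{q(1-q)} - 4\big)$. Since $q(1-q) \leq \tfrac14$ the second factor is nonnegative, so $g'(q) \leq 0$ for $q \leq p$ and $g'(q) \geq 0$ for $q \geq p$; thus $q = p$ is a global minimum of $g$ with value $0$, giving $g \geq 0$ as desired (the boundary cases $p,q \in \{0,1\}$ are handled by the usual conventions/limits). Assembling the two parts yields $\tfrac12\|\rho-\sigma\|_1^2 = 2\delta^2 \leq d(p\|q) \leq D(\rho\|\sigma)$.

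I do not anticipate a real obstacle here: the only substantial ingredient is the data-processing inequality for quantum relative entropy, which I would cite rather than reprove, and the rest is elementary. If one wished to avoid even that, an alternative route is to establish the required operator inequality directly, but the measurement-reduction argument is cleaner and uses only tools already in play in the paper.
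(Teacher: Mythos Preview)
Your proof is correct and complete; the reduction to the binary case via the Helstrom projector and data processing, followed by the one-variable calculus verification of $d(p\|q)\ge 2(p-q)^2$, is a standard and clean route to quantum Pinsker. The paper, however, does not prove this lemma at all: it is listed in the preliminaries as a known inequality and invoked as a black box, so there is no ``paper's proof'' to compare against. Your argument simply supplies what the paper takes for granted.
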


\begin{lemma}
\label{lem:div_cond}
For density matrices $\rho$, $\sigma$ such that $\rho \preceq 2^K \sigma$ in the positive semidefinite ordering, we have that $D(\rho \, \| \, \sigma) \leq K$.
\end{lemma}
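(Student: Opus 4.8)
The statement to prove is Lemma~\ref{lem:div_cond}: if $\rho \preceq 2^K \sigma$ then $D(\rho \| \sigma) \leq K$.

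\medskip

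The plan is to reduce the relative entropy to the relative min-entropy and then use monotonicity. First I would recall that $D_\infty(\rho \| \sigma) = \min\{\lambda : \rho \preceq 2^\lambda \sigma\}$, so the hypothesis $\rho \preceq 2^K \sigma$ is exactly the statement that $D_\infty(\rho \| \sigma) \leq K$. Thus the lemma follows immediately once we establish the general inequality $D(\rho \| \sigma) \leq D_\infty(\rho \| \sigma)$ for density matrices $\rho, \sigma$ (with $\rho$ a state; more generally one wants $\Tr \rho \le 1$ or to be careful with normalization). So the real content is: relative entropy is dominated by relative min-entropy.

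To prove $D(\rho \| \sigma) \leq D_\infty(\rho\|\sigma) =: K$, the cleanest route is: since $\rho \preceq 2^K \sigma$, write $\sigma' = 2^{-K}\rho + (1 - 2^{-K})\tau$ is not quite what I want; instead use operator monotonicity of $\log$. The function $t \mapsto \log t$ is operator monotone on $(0,\infty)$, so from $\rho \preceq 2^K \sigma$ we get $\log \rho \preceq \log(2^K \sigma) = K \Id + \log \sigma$ (here one should be slightly careful about supports, restricting to the support of $\rho$, which is contained in that of $\sigma$ by the hypothesis). Therefore $\log \rho - \log \sigma \preceq K \Id$, and multiplying by $\rho \succeq 0$ and taking the trace — using that $\Tr(\rho M) \leq \Tr(\rho N)$ whenever $M \preceq N$ and $\rho \succeq 0$ — yields $D(\rho \| \sigma) = \Tr(\rho(\log\rho - \log\sigma)) \leq \Tr(\rho \cdot K\Id) = K \Tr \rho = K$.

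The only subtlety — and hence the main thing to be careful about — is the support condition and the use of operator monotonicity of the logarithm: $A \preceq B$ with $A, B \succeq 0$ does not in general imply $\log A \preceq \log B$ unless both are positive definite, so one restricts attention to the subspace $\supp(\rho)$ (noting $\supp(\rho) \subseteq \supp(\sigma)$ follows from $\rho \preceq 2^K\sigma$) and applies operator monotonicity of $\log$ there, where $2^K\sigma$ and $\rho$ act as genuinely positive operators. Everything else is a one-line trace manipulation. An alternative, entirely elementary proof avoiding operator monotonicity would diagonalize in a clever basis, but this tends to be messier because $\rho$ and $\sigma$ need not commute; I would go with the operator-monotone-logarithm argument as the main line and only fall back to a Lieb-type variational formula for $D$ if pressed for a fully self-contained treatment.
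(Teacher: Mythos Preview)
The paper does not actually prove this lemma; it is stated without proof in the preliminaries as a standard fact (essentially the inequality $D \le D_\infty$). So there is no ``paper's own proof'' to compare against, and your plan to supply the standard operator-monotonicity argument is exactly what is called for.

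One technical caveat: your proposed handling of the support issue --- restricting to $\supp(\rho)$ --- does not quite go through as written. If $P$ is the projector onto $\supp(\rho)$, then in general $P(\log\sigma)P \ne \log(P\sigma P)$ when $[P,\sigma]\ne 0$; in fact the operator Jensen inequality for the concave function $\log$ gives $\log(P\sigma P) \succeq P(\log\sigma)P$, which points the \emph{wrong way} for your chain of inequalities (you would bound $\Tr\rho\log\rho - \Tr\rho\log(P\sigma P)$ by $K$, but $D(\rho\|\sigma) = \Tr\rho\log\rho - \Tr\rho\,P(\log\sigma)P$ can be strictly larger than that quantity). The clean fix is: first restrict to $\supp(\sigma)$ (so $\sigma>0$ there, and $\rho$ lives entirely in that subspace), then apply operator monotonicity of $\log$ to $\rho+\eps\,\Id \preceq 2^K\sigma + \eps\,\Id$ (both strictly positive), trace against $\rho$, and let $\eps\to 0$. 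The $\eps\to 0$ limit on the $\rho$ side is elementary since $\rho$ and $\rho+\eps\,\Id$ commute. With that small adjustment your argument is complete and is the textbook proof.
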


\begin{lemma}[\cite{JainPY14}, Fact II.8]
\label{lem:div_chain_rule}
	Let $\P_Z$ and $\Q_Z$ be distributions. Let $\rho = \Ex_{z \sim \P_Z} \dbrac{z} \otimes \rho_z$, and $\rho' = \Ex_{z \sim \Q_Z}  \dbrac{z} \otimes \rho'_z$. Then $D(\rho' \| \rho) = D(\Q_Z \| \P_Z) + \Ex_{z \sim \Q_Z} \left [ D(\rho'_z \| \rho_z) \right]$. In particular, $D(\rho' \| \rho) \geq \Ex_{Z \sim \Q_Z} \left [ D(\rho'_z \| \rho_z) \right]$.
\end{lemma}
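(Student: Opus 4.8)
The plan is a direct computation exploiting the fact that both $\rho$ and $\rho'$ are block-diagonal in the classical register $\Z$, so that the matrix logarithm acts blockwise and the trace factors into a classical and a quantum contribution per block. Throughout, I take $\rho_z$ and $\rho'_z$ to be density matrices, as in every application of the lemma.

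First I would write out the two logarithms. Since $\rho = \sum_z \P_Z(z)\, \dbrac{z} \otimes \rho_z$ is block-diagonal with $z$-th block $\P_Z(z)\,\rho_z$, and $\dbrac{z}$ commutes with everything on the $\mathsf{E}$ register, we have
\[
  \log \rho \;=\; \sum_z \dbrac{z} \otimes \Bigl( \log \P_Z(z)\cdot \Id + \log \rho_z \Bigr),
\]
using $\log(\lambda M) = (\log \lambda)\,\Id + \log M$ on $\supp(M)$, and likewise $\log \rho' = \sum_z \dbrac{z} \otimes \bigl( \log \Q_Z(z)\cdot \Id + \log \rho'_z \bigr)$. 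Subtracting, multiplying by $\rho'$, and taking the trace, the cross terms between distinct values of $z$ vanish, so
\[
  D(\rho' \,\|\, \rho) \;=\; \Tr\bigl(\rho'(\log \rho' - \log \rho)\bigr)
  \;=\; \sum_z \Q_Z(z)\, \Tr\Bigl( \rho'_z \bigl[ (\log \Q_Z(z) - \log \P_Z(z))\,\Id + (\log \rho'_z - \log \rho_z) \bigr] \Bigr).
\]
Using $\Tr(\rho'_z) = 1$, the scalar part sums to $\sum_z \Q_Z(z) \log(\Q_Z(z)/\P_Z(z)) = D(\Q_Z \| \P_Z)$, and the operator part sums to $\sum_z \Q_Z(z)\, \Tr\bigl(\rho'_z(\log \rho'_z - \log \rho_z)\bigr) = \Ex_{z \sim \Q_Z}[D(\rho'_z \| \rho_z)]$, which gives the claimed identity. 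The ``in particular'' statement then follows from $D(\Q_Z \| \P_Z) \geq 0$, i.e.\ nonnegativity of the classical relative entropy (Gibbs' inequality, or Pinsker's inequality, Lemma~\ref{lemma:pinsker}).

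There is no substantial obstacle here; the content is simply the observation that $\log$ acts blockwise on a block-diagonal positive operator and that the trace decomposes per block. The only bookkeeping subtlety is the support condition for finiteness: if $\supp(\rho') \not\subseteq \supp(\rho)$ then $D(\rho' \| \rho) = +\infty$, but this occurs exactly when some $z$ has $\Q_Z(z) > 0 = \P_Z(z)$ (so $D(\Q_Z \| \P_Z) = \infty$) or $\supp(\rho'_z) \not\subseteq \supp(\rho_z)$ for some $z$ in the common support (so $D(\rho'_z \| \rho_z) = \infty$), and the identity persists in the extended reals. I would dispatch this case in one sentence and carry out the computation above in the finite case.
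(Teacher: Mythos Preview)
Your proof is correct and is the standard block-diagonal computation for this fact. The paper does not give its own proof of this lemma---it is quoted as Fact~II.8 from~\cite{JainPY14}---so there is nothing to compare against; your argument is exactly what one would write to justify the statement.
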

We will also use the following Lemma from \cite{chung2015parallel,BVY17}.

\begin{lemma}[\cite{chung2015parallel,BVY17}, Quantum Raz's Lemma] \label{lem:quantum_raz} 
Let $\rho$ and $\sigma$ be two CQ states with  $\rho^{\X\A}= \rho^{\X_1 \X_2 \ldots \X_n A}$ and $\sigma= \sigma^{\X\A}= \sigma^{\X_1}\otimes \sigma^{\X_2}\otimes \ldots \otimes \sigma^{\X_n} \otimes \sigma^A$ with $\X=\X_1 \X_2 \ldots \X_n$ classical in both states. Then
\begin{equation}\label{eqn:Raz_lemma1} \sum_{i=1}^n I(\X_i \, :\, \A)_\rho \leq D(\rho^{\X\A} \, \| \sigma^{\X\A}). \end{equation}
\end{lemma}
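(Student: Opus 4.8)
The idea is to turn the inequality into a statement purely about von Neumann entropies and then finish with strong subadditivity. The first step is to observe that, since $\sigma^{\X\A}$ is fully product, one may replace it without loss of generality by $\bar\sigma := \rho^{\X_1}\otimes\cdots\otimes\rho^{\X_n}\otimes\rho^{\A}$, i.e.\ by the product of the \emph{marginals of $\rho$ itself}. Indeed, because $\log$ of a tensor product splits into a sum of single-register terms, $\Tr(\rho\log\sigma)=\sum_i\Tr(\rho^{\X_i}\log\sigma^{\X_i})+\Tr(\rho^{\A}\log\sigma^{\A})$, and likewise for $\bar\sigma$; subtracting gives
\[
  D(\rho^{\X\A}\,\|\,\sigma^{\X\A}) - D(\rho^{\X\A}\,\|\,\bar\sigma) \;=\; \sum_{i=1}^n D(\rho^{\X_i}\,\|\,\sigma^{\X_i}) + D(\rho^{\A}\,\|\,\sigma^{\A}) \;\ge\; 0 .
\]
(If $D(\rho^{\X\A}\,\|\,\sigma^{\X\A})=\infty$ there is nothing to prove; otherwise the support inclusions make each term above finite.) So it suffices to prove $\sum_i I(\X_i:\A)_\rho \le D(\rho^{\X\A}\,\|\,\bar\sigma)$.

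Next I would expand the right-hand side using the tensor structure of $\bar\sigma$ once more: $D(\rho^{\X\A}\,\|\,\bar\sigma) = \sum_i S(\X_i)_\rho + S(\A)_\rho - S(\X\A)_\rho$, where all entropies are finite since everything is finite-dimensional. Now apply the chain rule for von Neumann entropy, $S(\X\A)_\rho = S(\A)_\rho + \sum_{i=1}^n\big(S(\X_{\le i}\A)_\rho - S(\X_{<i}\A)_\rho\big)$, and bound each increment via strong subadditivity, $S(\X_{\le i}\A)_\rho - S(\X_{<i}\A)_\rho \le S(\X_i\A)_\rho - S(\A)_\rho$. Substituting back, the $S(\A)_\rho$ terms rearrange to yield
\[
  D(\rho^{\X\A}\,\|\,\bar\sigma) \;\ge\; \sum_{i=1}^n\big(S(\X_i)_\rho + S(\A)_\rho - S(\X_i\A)_\rho\big) \;=\; \sum_{i=1}^n I(\X_i:\A)_\rho ,
\]
which is the claim. (An equivalent route, closer to the classical Raz's lemma: peel $\A$ off first, use Lemma~\ref{lem:div_chain_rule} to write $D(\rho^{\X\A}\,\|\,\bar\sigma)$ as the multi-information $D\big(\P_{\X}\,\|\,\textstyle\prod_i\P_{\X_i}\big)$ plus the Holevo quantity $I(\X:\A)_\rho$, and then invoke subadditivity of conditional entropy $S(\X|\A)_\rho \le \sum_i S(\X_i|\A)_\rho$.)

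The one conceptual point to get right --- and essentially the only place the product hypothesis is used --- is that one \emph{cannot} simply bound $\sum_i I(\X_i:\A)_\rho$ by $I(\X_1\cdots\X_n:\A)_\rho$: mutual information is not superadditive in that direction, and if $\rho^{\X}$ were correlated the na\"ive bound would fail outright. The product form of $\sigma$ is what makes the argument go through, and it enters at exactly two places: the ``replacement cost'' in the first step is a sum of nonnegative relative entropies, and the strong-subadditivity step is precisely what absorbs the correlations among the $\X_i$. Everything else is routine bookkeeping --- handling the degenerate infinite case, and checking supports line up when $D(\rho^{\X\A}\,\|\,\sigma^{\X\A})$ is finite.
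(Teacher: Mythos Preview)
Your proof is correct. However, note that the paper does not actually supply its own proof of this lemma: it is stated with citations to \cite{chung2015parallel,BVY17} and used as a black box in the proof of Lemma~\ref{lem:alice_ans}. So there is no ``paper's proof'' to compare against here.

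A couple of remarks on your argument. First, your proof nowhere uses the hypothesis that $\X$ is classical --- the reduction to $\bar\sigma$ and the strong-subadditivity step go through for arbitrary quantum registers. This is fine (you have proved something slightly more general than stated), but it is worth being aware of. Second, your ``equivalent route'' parenthetical is indeed essentially the same argument reorganized: the Holevo term $I(\X:\A)_\rho$ expands via the chain rule to $\sum_i I(\X_i:\A\mid \X_{<i})_\rho$, and the multi-information term $D(\P_\X\,\|\,\prod_i\P_{\X_i})$ supplies exactly the classical correlations $\sum_i I(\X_i:\X_{<i})_\rho$ needed to convert each conditional mutual information into the unconditional $I(\X_i:\A)_\rho$ (this last step again being strong subadditivity in disguise). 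Either packaging is standard and matches what one finds in the cited references.
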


\paragraph{Randomized chain rule.} The standard chain rule for mutual information states that for an $n$-partite system $\X_1,\ldots,\X_n$, we have that $I(\X_1 \cdots \X_n : \A) = \sum_i I(\X_i : \A | \X_{< i})$. However, there are many ways of performing the chain rule, depending on the ordering of the $\X_i$'s. It is useful to average over many possible ways of performing the chain rule: 
\[
	\Ex_\pi \sum_i I(\X_{\pi(i)} : \A | \X_{\pi(< i)}) = I(\X_1 \cdots \X_n : \A).
\]
We call this the ``randomized chain rule.'' Here $\pi$ is a uniformly random permutation on $n$ elements, and $\pi(< i)$ denotes the image of the permutation applied to $\{1,\ldots,i-1\}$.

\subsection{Entanglement measures}
There are many different way of quantifying the entanglement of a bipartite quantum state~\cite{plenio2005introduction,horodecki2009quantum}. We will use three of them in the current work.

\begin{definition}\label{def:entang_entropy}
	For a pure state $\ket{\psi}^{\Q_A\Q_B}$, the entanglement entropy is
	\[
		E(\psi) = H \left(\Q_B\right)_{\psi} \;,
	\]
	where $H$ is the von Neumann entropy.
\end{definition}

\begin{definition}
	For a mixed state $\rho^{\Q_A\Q_B}$, the entanglement of formation is
	\[
		E_F(\rho^{\Q_A\Q_B}) = \inf \left\{ \sum_t p_t E(\psi) \;\; : \;\; \rho=\sum_t p_t \ket{\psi_t}\bra{\psi_t} \right\} \;,
	\]
	where $E(\psi)$ is the entanglement entropy of $\psi$, as in Definition~\ref{def:entang_entropy}.
\end{definition}

\begin{definition}
	For a mixed state $\rho^{\Q_A\Q_B}$, the entanglement cost is
	\begin{equation*}
		E_C(\rho^{\Q_A\Q_B}) = \inf \left\{ r : \lim_{n \rightarrow \infty} \left( \inf_{\Lambda} \| \rho^{\otimes n} - \Lambda(\Phi^+_{2^{rn}}) \|_1 \right) =0 \right\} \;,
	\end{equation*}
	where the infimum ranges over all LOCC maps $\Lambda$ and $\Phi^+_{2^{rn}}$ is the maximally entangled state of rank~$2^{rn}$ (both with respect to the partition $\Q_A$ vs.\@ $\Q_B$). 
\end{definition}

\subsection{Classical correlated sampling}

\emph{Correlated sampling} is a key component of Holenstein's proof of the classical parallel repetition theorem.

\begin{lemma}[Classical correlated sampling~\cite{Hol09}]
\label{lem:ccorsamp}
	Let $\P$ and $\Q$ be two probability distributions over a universe $\mathcal{U}$ such that $\| \P - \Q \|_1 \leq \eps < 1$. Then there exists a zero communication two-player protocol using shared randomness where the first player outputs an element $p \in \mathcal{U}$ distributed according to $\P$, the second player samples an element $q \in \mathcal{U}$ distributed according to $\Q$, and with probability at least $1 - 2\eps$, the two elements are identical (i.e. $p = q$).
\end{lemma}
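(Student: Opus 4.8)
The plan is to prove the classical correlated sampling lemma (Lemma~\ref{lem:ccorsamp}) by the standard ``fixed ordering of a shared random stream'' trick. First I would set up shared randomness that encodes an infinite sequence of independent samples from the uniform distribution over $\mathcal{U} \times [0,1]$ — that is, a sequence of pairs $(u_1, \alpha_1), (u_2, \alpha_2), \ldots$ with each $u_i$ uniform in $\mathcal{U}$ and each $\alpha_i$ uniform in $[0,1]$, all independent. (If $\mathcal{U}$ is infinite one works with a suitable reference measure; for the applications here $\mathcal{U}$ is finite, so uniform is fine.) Both players see the entire sequence. The first player scans the sequence and outputs $u_k$ for the smallest index $k$ with $\alpha_k \le \P(u_k)/M$, where $M = \sup_u \max(\P(u),\Q(u))$ is any common upper bound (e.g.\ $M=1$ works, or $M = \max_u \max(\P(u),\Q(u))$ for a sharper constant); the second player outputs $u_\ell$ for the smallest index $\ell$ with $\alpha_\ell \le \Q(u_\ell)/M$. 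This is a classic rejection-sampling-from-a-shared-stream construction, used in Holenstein's proof.

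The key steps are then: (i) verify the marginals — conditioned on acceptance at a given step, $u_k$ is distributed according to $\P$ (resp.\ $\Q$), because the acceptance probability at step $i$ given $u_i = u$ is proportional to $\P(u)$ (resp.\ $\Q(u)$), and summing a geometric series over which step is the first accepted one shows the output has exactly distribution $\P$ (resp.\ $\Q$); (ii) bound the failure probability of agreement. For step (ii) the cleanest route is: at each index $i$, call the index ``$\P$-good'' if $\alpha_i \le \P(u_i)/M$ and ``$\Q$-good'' if $\alpha_i \le \Q(u_i)/M$. The players disagree only if the first $\P$-good index differs from the first $\Q$-good index, which requires that somewhere before or at the first index that is good for both, there is an index that is good for exactly one of the two distributions. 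A direct computation gives, for each $i$, $\Pr[i \text{ good for exactly one}] = \frac{1}{M}\sum_u |\P(u) - \Q(u)|/ \cdot$ — more precisely $\Pr[\text{exactly one}] = \frac{1}{M}\Ex_{u_i}|\P(u_i)-\Q(u_i)| \cdot$ hmm, let me just say it equals $\frac{1}{M}\sum_u \frac{1}{|\mathcal U|}\cdot|\P(u)-\Q(u)|$ scaled appropriately, while $\Pr[\text{good for at least one}] \ge \frac{1}{M}\sum_u \frac{1}{|\mathcal U|}\max(\P(u),\Q(u))$. Taking the ratio, the conditional probability that the first ``at-least-one-good'' index is good for only one distribution is $\frac{\sum_u |\P(u)-\Q(u)|}{\sum_u \max(\P(u),\Q(u))} = \frac{2\|\P-\Q\|_1}{1 + \|\P-\Q\|_1} \le 2\eps$, using $\sum_u \max(\P,\Q) = 1 + \|\P - \Q\|_1$ and $\sum_u |\P - \Q| = 2\|\P-\Q\|_1$. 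Hence agreement holds with probability at least $1 - 2\eps$, and the protocol uses no communication, only shared randomness.

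The main obstacle — really the only subtlety — is making the ``first good index'' argument rigorous without getting tangled in conditioning: one has to argue that the event that the players disagree is contained in the event that, letting $j$ be the first index good for at least one of $\P,\Q$, index $j$ is good for exactly one of them, and then compute that conditional probability by symmetry/independence across indices (each index is i.i.d., so conditioning on ``$j$ is the first good index'' does not bias whether $j$ is good-for-one-or-both). I would present this via a single index computation plus the observation that the outcome at the first relevant index is distributed as a single draw conditioned on being good-for-at-least-one. A secondary cosmetic point is handling $\eps \ge 1/2$ (the bound $1-2\eps$ is vacuous there, so nothing to prove) and the degenerate case $\P = \Q$ (trivial). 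Everything else is routine geometric-series bookkeeping, so I would keep it terse.
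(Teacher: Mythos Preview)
The paper does not prove this lemma; it is stated as a preliminary fact cited from Holenstein~\cite{Hol09} and used as a black box. Your sketch is the standard Holenstein rejection-from-a-shared-stream construction and is correct in outline: the marginals follow from ordinary rejection sampling, and the disagreement bound follows from the observation that disagreement implies the first index good for at least one of $\P,\Q$ is good for exactly one, whose conditional probability is $\frac{\sum_u |\P(u)-\Q(u)|}{\sum_u \max(\P(u),\Q(u))} = \frac{2\eps}{1+\eps} \le 2\eps$ (here $\eps$ is total variation, consistent with the paper's convention). The only issue is presentation: the paragraph around ``hmm, let me just say'' should be replaced by the clean single-index computation $\Pr[\text{exactly one good}]/\Pr[\text{at least one good}]$ you state at the end, and you should make explicit that conditioning on ``$j$ is the first good index'' leaves the distribution of $(u_j,\alpha_j)$ equal to a fresh draw conditioned on being good-for-at-least-one, by independence across indices.
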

\noindent We call the protocol in the Lemma above the \emph{classical correlated sampling procedure}.

\subsection{Two-player games}\label{sec:games_related_def}

A two-player game $G$ is a tuple $(\mu,V)$ where $\mu$ is a question distribution over some alphabet $\mathcal{X} \times \mathcal{Y}$ and $V: \X \times \Y \times \mathcal{A} \times \mathcal{B} \to \{0,1\}$ is a verification predicate with $\mathcal{A}$ and $\mathcal{B}$ denoting the answer alphabets for Alice and Bob respectively. Operationally, in a two-player game, a referee samples a question pair $(x,y)$ from $\mu$, and sends $x$ to Alice and $y$ to Bob. Alice responds with answer $a$, Bob responds with answer $b$, and the referee decides to accept or reject based on the predicate $V(x,y,a,b)$.

A quantum strategy for $G$ consists of 
\begin{itemize}
	\item A shared entangled state $\rho \in  \density{\Q_A \otimes \Q_B}$ where $\Q_A, \Q_B$ are Hilbert spaces isomorphic to $\C^d$ for some finite $d$.
	\item Measurement elements $\{ A_x(a) \}$, $\{ B_y(b) \}$ acting on $\Q_A$ and $\Q_B$ respectively. By measurement elements we mean that for every $x$, $\sum_a A_x(a)^2 = \Id$, and similarly for~$B$. 
\end{itemize}
A strategy is a \emph{pure state strategy} if the shared state $\rho$ is rank one. 

The \emph{quantum value} of a game $G$, denoted $\qval(G)$, is the maximum success probability in $G$ over all (finite-dimensional) quantum strategies:
\[
	\qval(G) := \max_{ (\rho, \{A_x(a)\}, \{B_y(b) \}) } \sum_{x,y} \mu(x,y) \sum_{\substack{a, b : \\ V(x,y,a,b) = 1}} \Tr \left ( A_x(a) \otimes B_y(b) \rho \right).
\]
The \emph{classical value} of a game $G$, denoted $\cval(G)$, is the maximum success probability in $G$ over classical strategies (strategies where $\rho$ is separable across $\Q_A$ and $\Q_B$). 


\paragraph{Threshold games.} The threshold game $G^n_{1 - \gamma}$ is a game where the question distribution is $\mu^n$, and the verification predicate $V^n_{1 - \gamma}: \mathcal{X}^n \times \mathcal{Y}^n \times \mathcal{A}^n \times \mathcal{B}^n \to \set{0,1}$ is such that $V^n_{1 - \gamma}(\xvec,\yvec,\avec,\bvec) = 1$ if and only if at least an $1 - \gamma$ fraction of coordinates $i \in [n]$ are won, i.e., $V(\xvec_i,\yvec_i,\avec_i,\bvec_i) = 1$. 

%
%

\paragraph{Probability distribution $\P$.} We will refer to a probability distribution $\P$ on random variables $\Xvec, \Yvec, \Avec, \Bvec$ which correspond to the questions and answers of Alice and Bob, respectively in the game $G^n_\alpha$ played according to the strategy above. More precisely, $\P_{\Xvec \Yvec}(\xvec,\yvec)$ is the distribution of questions in $G^n_\alpha$, or $\mu^n$. Then, for every question pair $(\xvec,\yvec)$,
\begin{equation}\label{eq:prob_dist_from_state}
	\P_{\Avec \Bvec | \xvec \yvec}(\avec,\bvec) = \Tr( A_\xvec(\avec) \otimes B_\yvec(\bvec) \rho)
\end{equation}
describes a given strategy. Thus, the full joint distribution $\P_{\Xvec \Yvec \Avec \Bvec}$ is 
\[
	\P_{\Xvec \Yvec \Avec \Bvec} = \P_{\Xvec \Yvec} \cdot \P_{\Avec \Bvec | \Xvec \Yvec}.
\]

When considering marginals of $\P_{\Xvec \Yvec\Avec\Bvec}$ it is understood that an expectation is taken over all the registers which are not explicitly stated. For example, for $i\in[n]$ we can write $ \P_{\Avec_i\Bvec_i}(\avec_i,\bvec_i) = \Ex_{\xvec} \Ex_{\yvec} \sum_{\avec,\bvec | \avec_i,\bvec_i} \P_{\Avec \Bvec | \xvec \yvec}(\avec,\bvec)$.

\paragraph{Dependency-breaking variable.} Fix a subset $S \subseteq [n]$. We will define \emph{dependency-breaking variables} as follows. Let $D_1,\ldots,D_n$ be independent and uniformly distributed in $\{Alice,Bob\}$. Let $M_1,\ldots,M_n$ be independent random variables defined in the following way: for each $i \in [n]$,
\begin{align*}
	M_i = \left \{ \begin{array}{ll}
		\Xvec_i & \mbox{ if } D_i = Alice \\
		\Yvec_i & \mbox{ if } D_i = Bob
		\end{array}
	\right.
\end{align*}
Now for $i \in [n]$, we define $\Omega_i := (D_i,M_i)$. We say that $\Omega_i$ \emph{fixes Alice's input} if $D_i = Alice$, and otherwise $\Omega_i$ fixes Bob's input. We write $\Omega$ to denote the random variable $(\Omega_1,\ldots,\Omega_n,\Xvec_S,\Yvec_S)$, where $\Xvec_S\Yvec_S$ are Alice and Bob's questions in the coordinates indexed by $S$. For $i \in [n]$ we write $\Omega_\mi$ to denote the random variable $\Omega$ with $\Omega_i$ omitted.

We will augment the probability space $\P$ of $(\Xvec,\Yvec,\Avec,\Bvec)$ with random variables $\Omega$, $Z$ to obtain the joint distribution $\P_{\Omega \Xvec \Yvec \Avec \Bvec}$.

\begin{claim}
\label{clm:dependency_breaking}
For every fixing of $\Omega = \omega$, we have
\[
	\P_{\Xvec \Yvec | \omega} = \P_{\Xvec | \omega} \cdot \P_{\Yvec | \omega}.
\]
That is, conditioned on $\Omega$, $\Xvec$ and $\Yvec$ are independent random variables.
\end{claim}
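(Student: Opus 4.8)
\textbf{Proof plan for Claim~\ref{clm:dependency_breaking}.}
The plan is to unwind the definition of the augmented probability space and show that, once $\Omega=\omega$ is fixed, the joint distribution of $(\Xvec,\Yvec)$ factorizes coordinate by coordinate into an Alice-part and a Bob-part. Recall $\omega$ specifies, for each coordinate $i\in[n]$, a direction $d_i\in\{Alice,Bob\}$ together with a value $m_i$ which equals $\xvec_i$ if $d_i=Alice$ and $\yvec_i$ if $d_i=Bob$; it also specifies $\xvec_S$ and $\yvec_S$. First I would write $\P_{\Xvec\Yvec\Omega}(\xvec,\yvec,\omega)$ explicitly: since the questions $(\Xvec,\Yvec)$ are distributed as $\mu^n=\prod_i \mu(\xvec_i,\yvec_i)$ and the randomness used to generate $\Omega$ (the independent uniform $D_i$'s, and the deterministic selection of $M_i$ from the already-sampled $\Xvec_i,\Yvec_i$) is independent of everything else, we get
\[
  \P_{\Xvec\Yvec\Omega}(\xvec,\yvec,\omega)
  = \Paren{\prod_{i=1}^n \mu(\xvec_i,\yvec_i)}\cdot 2^{-n}\cdot \prod_{i=1}^n \Ind[\,m_i \text{ consistent with } (d_i,\xvec_i,\yvec_i)\,]\cdot \Ind[\xvec_S=\xvec_S(\omega)]\,\Ind[\yvec_S=\yvec_S(\omega)],
\]
where the consistency indicator is $\Ind[\xvec_i=m_i]$ when $d_i=Alice$ and $\Ind[\yvec_i=m_i]$ when $d_i=Bob$ (and the $S$-indicators are subsumed into these for $i\in S$, so they can be dropped or kept for bookkeeping).

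The key observation is that every factor in this product depends on the $i$-th coordinate only, and moreover for each fixed $i$ the factor $\mu(\xvec_i,\yvec_i)\cdot \Ind[\text{consistency}]$ — although $\mu$ itself need not factorize — becomes, after conditioning, a product of a function of $\xvec_i$ alone and a function of $\yvec_i$ alone. Concretely, if $d_i=Alice$ then the $i$-th factor is $\mu(\xvec_i,\yvec_i)\,\Ind[\xvec_i=m_i] = \Ind[\xvec_i=m_i]\cdot\mu(m_i,\yvec_i)$, which is (a function of $\xvec_i$)$\times$(a function of $\yvec_i$); symmetrically if $d_i=Bob$. Hence $\P_{\Xvec\Yvec\Omega}(\xvec,\yvec,\omega) = f_\omega(\xvec)\cdot g_\omega(\yvec)$ for suitable nonnegative functions $f_\omega,g_\omega$. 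Dividing by $\P_\Omega(\omega)=\sum_{\xvec,\yvec} f_\omega(\xvec)g_\omega(\yvec) = \Paren{\sum_\xvec f_\omega(\xvec)}\Paren{\sum_\yvec g_\omega(\yvec)}$ (assuming $\P_\Omega(\omega)>0$, which we may since we are conditioning on it) gives $\P_{\Xvec\Yvec|\omega}(\xvec,\yvec)$ as a product of a normalized function of $\xvec$ and a normalized function of $\yvec$; marginalizing identifies these with $\P_{\Xvec|\omega}$ and $\P_{\Yvec|\omega}$ respectively, which is exactly the claimed factorization.

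There is essentially no serious obstacle here — the statement is a structural consequence of how $\Omega$ was constructed, and the only thing to be careful about is the bookkeeping of the $\Xvec_S,\Yvec_S$ components of $\Omega$ (one should note these are redundant given $\Omega_S=(D_S,M_S)$ only when $D_i$ happens to select the right side, so in general they genuinely add conditioning on both $\xvec_i$ and $\yvec_i$ for $i\in S$ — but conditioning on a fixed value of $\xvec_i$ or $\yvec_i$ is itself a product constraint and so does not break the factorization). I would present the argument as the two displayed computations above plus one line observing that a joint distribution of the form $f(\xvec)g(\yvec)$ is always a product distribution equal to the product of its marginals.
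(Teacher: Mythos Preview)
Your proposal is correct and is exactly the standard argument for this fact. The paper itself does not supply a proof of Claim~\ref{clm:dependency_breaking}; it simply states the claim and defers to the classical parallel repetition literature (Raz, Holenstein) where this construction and its conditional-independence property are well known. Your coordinate-by-coordinate factorization---observing that fixing $d_i$ and $m_i$ turns $\mu(\xvec_i,\yvec_i)\cdot\Ind[\text{consistency}]$ into a product of a function of $\xvec_i$ alone and a function of $\yvec_i$ alone, and that the additional $\Xvec_S,\Yvec_S$ conditioning just pins further coordinates---is precisely how one proves it.
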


This notion of dependency-breaking variables comes from proofs of parallel repetition theorems and communication complexity lower bounds in \emph{classical} theoretical computer science~\cite{raz1998parallel, bar2002information}.

\section{Proof of Theorem~\ref{thm:main}}
\label{sec:proof}

The main technical part of our work is proving Theorem~\ref{thm:main_pure} given below. The statement of Theorem~\ref{thm:main_pure} is almost identical to that of Theorem~\ref{thm:main}, but it is restricted to testing the entanglement and dimension of \emph{pure} states (i.e., the state $\rho$ shared by the two players is a rank one density matrix $\ketbra{\psi}{\psi}$).

\begin{theorem}[Main theorem for pure state strategies]\label{thm:main_pure}
	Let $G$ be a game with classical-quatum gap $\Delta = \qval(G) - \cval(G) > 0$. Let $0 < \nu \leq \Delta$. 
	Denote
	\begin{equation}\label{eq:constants_expl}
		c_1' = \frac{(\Delta - \nu)^3}{2000 \cdot \log |\mathcal{A} \times \mathcal{B}|}  \quad \;\text{and} \quad c_2' =  \frac{(\Delta - \nu)^5}{10\cdot 90^2 \cdot \log |\mathcal{A} \times \mathcal{B}|} \;,
	\end{equation}
	where $\mathcal{A}$ and $\mathcal{B}$ are the answer alphabets in $G$ for Alice and Bob, respectively. 
	
	For all integer $n$ greater than $\frac{1}{c_1'}$, any pure state strategy that wins the threshold game $G_{\qval(G) - \nu}^n$ with probability $\kappa \geq \exp(-c_1' n)$ involves an entangled state $\ket{\psi}$ satisfying $E(\psi) \geq c_2' \kappa n$.
\end{theorem}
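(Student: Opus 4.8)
The plan is to follow the "too-good-to-be-true" template from the quantum parallel repetition literature: assume a pure-state strategy wins $G^n_{\qval(G)-\nu}$ with probability $\kappa \geq \exp(-c_1'n)$ while $E(\psi) < c_2'\kappa n$, and derive a classical strategy for the single game $G$ with value strictly above $\cval(G)$, a contradiction. First I would set up the conditioning event. Let $W$ be the event (in the probability space $\P$ augmented with the dependency-breaking variables $\Omega$) that the players win at least a $\qval(G)-\nu$ fraction of instances. By hypothesis $\P(W) \geq \kappa$. A Markov/averaging argument over coordinates shows that for an average $i \in [n]$, conditioned on $W$ (and on a suitable prefix of dependency-breaking information), the players win instance $i$ with probability at least roughly $\qval(G) - \nu - (\text{small})$, which exceeds $\cval(G)$ by a constant $\Delta - \nu$ up to lower-order terms. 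The subtlety is that I want a single coordinate $i$ and a single fixing of the auxiliary variables such that \emph{simultaneously} the win probability is high \emph{and} the information-theoretic quantities controlling signaling are small; this is arranged by a union-bound / averaging over $i$ drawn uniformly and over a random chain-rule ordering $\pi$, using the randomized chain rule.

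Next I would establish the three approximation claims sketched in the introduction, for a good coordinate $i$ and good fixing $\omega_{-i}$ of $\Omega_{-i}$. Writing the conditioned state as a cq-state with the questions classical, (1) $\P_{\Xvec_i\Yvec_i|W,\omega_{-i}} \approx \P_{\Xvec_i\Yvec_i} = \mu$ follows from the fact that conditioning on $W$ costs only $\log(1/\kappa) \leq c_1'n$ bits of "divergence" spread across $n$ coordinates, so by Raz's lemma (Lemma~\ref{lem:quantum_raz}) and Pinsker (Lemma~\ref{lemma:pinsker}) the average-coordinate deviation is $O(\sqrt{c_1'})$, small by choice of $c_1'$. (2) $\P_{\Avec_i|\Xvec_i\Yvec_i W\omega_{-i}} \approx \P_{\Avec_i|\Xvec_i W\omega_{-i}}$ — i.e. Alice's answer is almost independent of Bob's question given her own question and the conditioning — follows from the non-signaling structure: since $\Omega$ breaks the dependency between $\Xvec$ and $\Yvec$ (Claim~\ref{clm:dependency_breaking}) and Alice's measurement acts only on her register, the only coupling is through $W$, again controlled by the same divergence budget and the quantum Raz lemma. (3) The analogous statement for Bob, $\P_{\Bvec_i|\Xvec_i\Yvec_i\Avec_i W\omega_{-i}} \approx \P_{\Bvec_i|\Yvec_i W\omega_{-i}}$, is where entanglement enters: I would argue that the entanglement entropy budget $E(\psi) < c_2'\kappa n$ implies that, after conditioning on $W$ and the auxiliary variables, for an average good coordinate the post-measurement state on Alice's side carries only $o(1)$ bits of information about Bob's answer, so Bob's answer can be resampled from his marginal without noticing Alice's side. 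The precise mechanism is to bound the relevant mutual information by (entanglement of the shared state) $+$ (cost of conditioning on $W$) all divided by $n$, then average.

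Then I would assemble the classical strategy: on input $x$ Alice samples $a \sim \P_{\Avec_i|\Xvec_i=x, W,\omega_{-i}}$ and on input $y$ Bob samples $b \sim \P_{\Bvec_i|\Yvec_i=y, W,\omega_{-i}}$, using shared randomness only to fix $\omega_{-i}$ (and possibly to run a correlated-sampling step, Lemma~\ref{lem:ccorsamp}, if the reductions leave a residual joint-sampling task — in the pure-state analysis I expect the three $\approx$'s to directly give that the produced $(x,y,a,b)$ is $O(\sqrt{c_1'} + \sqrt{c_2'/\kappa})$-close in total variation to $\P_{\Xvec_i\Yvec_i\Avec_i\Bvec_i|W,\omega_{-i}}$, whose win probability exceeds $\cval(G)$ by $\Delta - \nu$ minus lower-order terms). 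Choosing $c_1'$ and $c_2'$ as in \eqref{eq:constants_expl} makes all error terms smaller than $(\Delta-\nu)/2$ (note the errors scale like $\sqrt{c_2'/\kappa}$, which is why the threshold $E(\psi) \geq c_2'\kappa n$ — hence the $\kappa$, not $\kappa^2$, in the pure-state version — and the constraint $\kappa \geq \exp(-c_1' n)$ are exactly what is needed), yielding a classical strategy beating $\cval(G)$, the desired contradiction. The main obstacle I anticipate is step (3): carefully quantifying how the global entanglement entropy bound translates, after the messy conditioning on $W$ and on $\Omega_{-i}$, into a per-coordinate bound on the information Bob's answer leaks to Alice's system — this requires the right chain-rule decomposition (the randomized chain rule) together with a quantum analogue of the "short messages don't help" argument, and getting the dependence on $\kappa$ right is delicate because conditioning on a probability-$\kappa$ event can blow up entropies by $\log(1/\kappa)$, which must be absorbed into the $c_1' n$ budget.
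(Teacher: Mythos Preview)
Your high-level plan is correct and matches the paper's architecture (a too-good-to-be-true reduction, three approximation claims, a classical simulated strategy). However, two concrete devices are missing from your auxiliary-variable setup, and without them step~(3) does not go through.

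\medskip

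\textbf{Gap 1: conditioning on the global win event.} You condition on the global event $W = W^{\geq 1-\gamma}$. The paper instead conditions on $W_S^{\geq 1-\tau}$, the event of winning most of a \emph{small} subset $S$ (Proposition~\ref{prop:subset2}). This is not a cosmetic choice. The entanglement-entropy step needs a bound of the form $H(\Q_B \mid \text{conditioning})_{\hat{\Psi}} \leq E(\psi)/\P(\text{event})$, where $\Q_B$ is Bob's residual quantum register. For that bound to make sense, the conditioning event must be decidable from classical data without fully measuring $\Q_B$. The subset event is determined by $(\Omega,\Avec_S,\Bvec_S)$; the global event $W$ depends on \emph{all} of $\Bvec$, so to condition on it Bob must measure everything, collapsing $\Q_B$ and destroying the object the argument is about. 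Your averaging argument ``conditioned on $W$ the $j$th coordinate is won with high probability'' is fine, but you cannot then carry a useful $\Q_B$ through the proof.

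\medskip

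\textbf{Gap 2: the auxiliary must contain Alice's answers on a random set $T$.} Your auxiliary is just $\omega_{-i}$. The mechanism that spreads the budget $E(\psi)$ across coordinates is the chain rule
\[
I(\Xvec\Avec : \Q_B \mid \Omega,\Avec_S,\Bvec_S)_{\hat{\Psi}} \;=\; \Ex_{\pi}\sum_i I\bigl((\Xvec\Avec)_{\pi(i)} : \Q_B \,\big|\, \Omega,\Avec_S,\Bvec_S,(\Xvec\Avec)_{\pi(<i)}\bigr)_{\hat{\Psi}},
\]
so the per-coordinate term you need is conditioned on $(\Xvec\Avec)_{\pi(<i)}$, i.e.\ on Alice's questions \emph{and answers} in a random prefix $T$. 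Those must therefore be packed into the shared auxiliary so both players can condition on them; this is exactly why the paper's auxiliary is $R_{Tj}=(\Omega_{-j},\Xvec_T,\Avec_{S\cup T},\Bvec_S)$ rather than $\Omega_{-j}$. With only $\omega_{-i}$ in hand there is no way to turn the global bound $I(\Xvec\Avec:\Q_B\mid\cdots)\leq E(\psi)/\P(\cdot)$ into the per-coordinate statement you need for~(3). Quantum Raz's Lemma does not rescue this step either, since $(\Xvec_i,\Avec_i)$ are not product across $i$ once Alice has measured.

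\medskip

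A consequence of Gap~2 is that correlated sampling (Lemma~\ref{lem:ccorsamp}) is \emph{not} optional: the enlarged auxiliary $R_{Tj}$ contains the answer blocks $\Avec_{S\cup T},\Bvec_S$, and the cost of including them is precisely the $(2|S|+\beta m)\log|\mathcal{A}\times\mathcal{B}|$ term in the paper's error parameter $\delta'$, which in turn is what forces $\log|\mathcal{A}\times\mathcal{B}|$ into the constants $c_1',c_2'$ in~\eqref{eq:constants_expl}. Your error accounting (``$O(\sqrt{c_1'}+\sqrt{c_2'/\kappa})$'') anticipates the right shape, but the $\delta'$ term and the balancing parameter $\beta$ are absent from your sketch because the answer blocks are absent from your auxiliary.
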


As we now show, Theorem~\ref{thm:main_pure} implies Theorem~\ref{thm:main} with a slight loss in the constants~$c_1$ and ~$c_2$.

\begin{proof}[Proof of Theorem~\ref{thm:main} from Theorem~\ref{thm:main_pure}]

The completeness portion of Theorem~\ref{thm:main} follows straightforwardly from Hoeffding's bound. We will now concentrate on the soundness part of the Theorem. 

We set $c_1 = 2c_1'$ and $c_2 = c_2'/4$ where $c_1'$ and $c_2'$ are the constants given in Equation~\eqref{eq:constants_expl}. 

Consider a mixed state strategy for $G_{\qval(G) - \nu}^n$ that uses $\rho \in \density{\Q_A \otimes \Q_B}$ as the shared state between Alice and Bob, and has success probability $\kappa > \exp(-c_1 n)$. 
	
	Let $\rho = \sum_i p_i \ketbra{\psi_i}{\psi_i}$ be a decomposition of $\rho$ that realizes the entanglement of formation $E_F(\rho)$. That is, we have
	\[
		E_F(\rho) = \sum_i p_i E(\psi_i).
	\]
	Let $\kappa_i$ denote the probability that the players win $G_{\qval(G) - \nu}^n$ if instead of using the mixed state $\rho$ they used $\ket{\psi_i}$ (but used the same measurement operators). We have then that $\kappa = \sum_i p_i \kappa_i$. 
	
	If we sample $i$ according to $p_i$, then with probability at least $\kappa/2$ we have $\kappa_i \geq \kappa/2$, according to Markov's inequality.  Take such an $i$. Then $\kappa_i \geq \kappa/2 \geq \exp(-c_1 n)/2 > \exp(-c_1' n)$. Invoking Theorem~\ref{thm:main_pure} for pure states, we get that the entanglement of formation $E_F(\psi_i) = E(\psi_i) \geq c_2' \kappa_i n$.
	Therefore,
	\[
		E_F(\rho) \geq \sum_{i:\kappa_i \geq \kappa/2} p_i E(\psi_i) \geq \Paren{\sum_{i:\kappa_i \geq \kappa/2} p_i } c_2' \kappa_i n \geq \frac{\kappa}{2} \cdot c_2' \kappa_i n \geq c_2 \kappa^2 n
	\]
	establishing the Soundness condition of Theorem~\ref{thm:main}. 
		
	The final constants are given by 
	\begin{equation}\label{eq:final_thm_constants}
		c_1 = \frac{(\Delta - \nu)^3}{1000 \cdot \log |\mathcal{A} \times \mathcal{B}|} \quad \text{and} \quad c_2 = \frac{(\Delta - \nu)^5}{10 \cdot 180^2 \cdot \log |\mathcal{A} \times \mathcal{B}|} \;.
	\end{equation}
	Hence, for any amount of noise $\nu$ which does not result in threshold games which is ``effectively classical'', $c_1,c_2>0$ and Theorem~\ref{thm:main} is non-trivial.
\end{proof}

\begin{proof}[Proof of Theorem~\ref{thm:main_pure}]
The probability distribution $\P$ is defined relative to the hypothesized pure state strategy as in Equation~\eqref{eq:prob_dist_from_state}. Let $1 - \gamma = \qval(G) - \nu$ and let $1 - \eps =\cval(G)$ (so therefore $\eps - \gamma = \Delta - \nu$). 

As alluded to in the introduction, we will attempt to construct a ``too-good'' \emph{classical} strategy for $G$ by simulating playing a random coordinate $j$ of $G_{1 - \gamma}^n$, conditioned on a particular event. The main point is that if too little entanglement is used for $G_{1 - \gamma}^n$, then this simulation can be performed without any entanglement at all.

The following Proposition identifies what this conditioning event is. It shows that there exists a small subset of coordinates $S$ such that, conditioned on winning more than $1 - \tau$ fraction of $S$, the probability of winning a random coordinate $j$ outside of $S$ is high. Define the following events:
\begin{itemize}
	\item $W_j$ denotes the event that the players win the $j$'th coordinate.
	\item $W^{\geq 1 - \gamma}$ denotes the event that the players win more than $(1 - \gamma)n$ games. 
	\item $\theevent$ denotes the event that the players win more than $1 - \tau$ fraction of coordinates in $S$.
\end{itemize}

\begin{proposition}\label{prop:subset2}
	Let $\alpha = \eps - \gamma$. Let $\tau = \eps - \frac{3}{4}\alpha $. Suppose that $\P(W^{\geq 1 - \gamma}) \geq \frac{16}{\alpha} 2^{-\alpha^3 n/384}$. Then there exists a set $S \subseteq [n]$ of size at most $\frac{96}{\alpha^2} \Paren{ \ln \frac{16}{\alpha \P(W^{\geq 1 - \gamma})}}$ such that
	$$
		\Ex_{j \notin S} \P(W_j | \theevent) \geq 1 - \eps + \alpha.
	$$
	where $j$ is chosen uniformly from $[n] - S$, and $\P(W_S^{\geq 1 - \tau}) \geq \P(W^{\geq 1 - \gamma})/2$.
\end{proposition}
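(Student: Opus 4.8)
The plan is to carry out a two-stage averaging argument, first over subsets $S$ of a fixed size and then over coordinates $j \notin S$, extracting the good set $S$ by a probabilistic (counting) argument. Fix $m := \frac{96}{\alpha^2}\bigl(\ln\frac{16}{\alpha\,\P(W^{\geq 1-\gamma})}\bigr)$ and consider a uniformly random subset $S \subseteq [n]$ of size $m$ (assume $m \le n$; the hypothesis on $\P(W^{\geq 1-\gamma})$ is what makes $m$ small enough compared to $n$, and for the regime where $m$ would exceed $n$ the statement is vacuous or handled separately). Condition throughout on the event $W^{\geq 1-\gamma}$, so that at least $(1-\gamma)n$ coordinates are won.

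First I would show that for a random $S$, with decent probability the players win at least a $1-\tau = 1-\eps+\tfrac34\alpha$ fraction of coordinates in $S$. Conditioned on $W^{\geq 1-\gamma}$ there is a fixed set of won coordinates of size $\ge (1-\gamma)n = (1-\eps+\alpha)n$; a random $m$-subset $S$ then contains, in expectation, a $(1-\eps+\alpha)$ fraction won, and since $1-\tau = 1-\eps+\tfrac34\alpha$ is a constant gap $\tfrac14\alpha$ below that expectation, a Chernoff/Hoeffding bound for sampling without replacement (hypergeometric concentration) gives that $S$ has fewer than a $1-\tau$ fraction won with probability at most $\exp(-\Omega(\alpha^2 m))$. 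By the choice of $m$, $\exp(-\Omega(\alpha^2 m))$ is at most (say) half of $\P(W^{\geq 1-\gamma})/\P(W^{\geq 1-\gamma})$-normalized mass — more precisely one arranges the constant in $m$ so that this failure probability, multiplied back by $\P(W^{\geq 1-\gamma})$, is $\le \tfrac12\P(W^{\geq 1-\gamma})$, hence $\P(W_S^{\geq 1-\tau}) \ge \tfrac12\P(W^{\geq 1-\gamma})$ for at least, in fact, a positive fraction of subsets $S$. This is where the $384$ and the $\frac{16}{\alpha}$ factors in the hypothesis get calibrated.

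Next, for the coordinate averaging: I claim $\Ex_S \Ex_{j\notin S} \P(W_j \mid W_S^{\geq 1-\tau}) \ge 1-\eps+\alpha$, from which, combined with the first step, a good $S$ satisfying \emph{both} conclusions exists by a union bound / averaging over the $S$'s. To see the double expectation bound, write $\Ex_{j \notin S}\P(W_j \mid W_S^{\geq 1-\tau}) = \frac{1}{n-m}\Ex\bigl[\,\#\{j \notin S : W_j\} \mid W_S^{\geq 1-\tau}\,\bigr]$, and note that conditioned on $W_S^{\geq 1-\tau}$ and on $W^{\geq 1-\gamma}$ the total number of won coordinates is still $\ge (1-\eps+\alpha)n$, of which at most $m$ lie in $S$, so the number won outside $S$ is $\ge (1-\eps+\alpha)n - m$. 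Averaging over $S$ and using that $m = o(\alpha n)$ (again by the hypothesis forcing $n$ large relative to $m$), the deficit $m/(n-m)$ is negligible compared to the slack we can afford; a cleaner route is to observe that by symmetry $\Ex_S \Ex_{j \notin S}\P(W_j \mid W^{\geq 1-\gamma}) = \Ex_{j}\P(W_j \mid W^{\geq 1-\gamma}) \ge 1-\gamma = 1-\eps+\alpha$, and then argue that further conditioning on $W_S^{\geq 1-\tau}$ (an event of probability $\ge \tfrac12\P(W^{\geq1-\gamma})$ relative to $W^{\geq1-\gamma}$) can only decrease each $\P(W_j \mid \cdot)$ by a controlled amount — or, better, absorb this by choosing the $\tfrac34$ in $\tau$ with slack, picking the threshold so that the conditioning event is large enough that the conditional win rate outside $S$ stays $\ge 1-\eps+\alpha$ on average.

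\textbf{Main obstacle.} The delicate point is the interplay between the two conditioning events $W^{\geq 1-\gamma}$ and $W_S^{\geq 1-\tau}$: I need $S$ for which $\P(W_S^{\geq 1-\tau})$ is not too small \emph{and} the conditional win probability at an average outside coordinate is high, and these must hold \emph{simultaneously} for one $S$. The clean way is to run both averages over the same random choice of $S$, bound the ``bad'' event for each conclusion separately (failure of the $\P(W_S^{\geq 1-\tau})$ lower bound by hypergeometric concentration; failure of the $\Ex_{j\notin S}$ bound by Markov applied to the complementary deficit), arrange the constants in $m$ and in the hypothesis threshold so each bad event has probability $< \tfrac12$ over the choice of $S$, and conclude a good $S$ exists. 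Getting the constants $96$, $384$, $16/\alpha$ to line up is the bookkeeping cost; conceptually it is just Chernoff-for-sampling-without-replacement plus a Markov/averaging argument, exactly in the spirit of the subset-selection lemmas in \cite{Hol09,raz1998parallel}.
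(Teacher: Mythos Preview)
Your first step---showing via hypergeometric/Chernoff concentration that for most random $S$ one has $\P(W_S^{\geq 1-\tau})\ge\tfrac12\P(W^{\geq 1-\gamma})$---matches the paper's ``good $S$'' argument and is fine.

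The genuine gap is in the second step. You need $\Ex_{j\notin S}\P(W_j\mid W_S^{\geq 1-\tau})$ to be large when conditioning \emph{only} on $W_S^{\geq 1-\tau}$, and both of your proposed routes still lean on $W^{\geq 1-\gamma}$. In route (a) you condition on both events at once, which is not the target. In route (b) you speak of ``further conditioning'' on $W_S^{\geq 1-\tau}$ starting from $W^{\geq 1-\gamma}$, but $W_S^{\geq 1-\tau}$ is not a sub-event of $W^{\geq 1-\gamma}$: there are runs in which the players lose almost every coordinate outside $S$ yet happen to win $\ge 1-\tau$ of $S$, and you have offered no control over the mass such runs contribute to $\P(\neg W_j\mid W_S^{\geq 1-\tau})$. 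Saying ``Markov on the complementary deficit'' or ``absorb it into the $\tfrac34$ slack'' does not help, because you have not yet bounded anything about that bad region.

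The paper fills precisely this hole by introducing a \emph{third} threshold $\delta$ with $\gamma<\tau<\delta$ and decomposing
\[
\P(\neg W_j\mid W_S^{\geq 1-\tau})\;\le\;\P\bigl(\neg W_j\mid W_S^{\geq 1-\tau},\,W^{>1-\delta}\bigr)\;+\;\P\bigl(\neg W^{>1-\delta}\mid W_S^{\geq 1-\tau}\bigr).
\]
The first summand is trivially at most $\delta n/(n-|S|)$. For the second, they run Chernoff in the \emph{reverse} direction: conditioned on the global win fraction being at most $1-\delta$, a random $S$ of size $t$ has win fraction $\ge 1-\tau$ with probability at most $e^{-(\delta-\tau)^2 t/3}$. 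Combined with $\P(W_S^{\geq 1-\tau})\ge\kappa/2$ for good $S$ and Bayes' rule, this gives $\P(\neg W^{>1-\delta}\mid W_S^{\geq 1-\tau})\le\frac{2}{\kappa}\,e^{-(\delta-\tau)^2 t/3}$, which the choice of $t$ (and the hypothesis on $\kappa$) makes small. This reverse-Chernoff control on the event ``$W_S^{\geq 1-\tau}$ holds but globally few are won'' is exactly the idea your sketch is missing; once you add it, the rest of your outline goes through and lines up with the paper's constants.
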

We defer the proof of Proposition~\ref{prop:subset2} to the Appendix. 

Fix a set $S$ given by the Proposition for the rest of the proof, and the dependency-breaking variables $\Omega$ (as defined in Section~\ref{sec:games_related_def}) and $\Avec_S,\Bvec_S$ will be defined relative to this $S$. From Proposition~\ref{prop:subset2} we know that $\P(\theevent) \geq \kappa/2$.

Without loss of generality, let us number the coordinates so that $S = \{ n - |S| + 1,\ldots, n \}$, and set $m = n - |S|$.

\medskip
\medskip

Having identified the special conditioning event $\theevent$, we will present four main Lemmas upon which the proof rests. For every set $T \subseteq [m]$ and coordinate $j \notin T \cup S $, let
\[
	R_{Tj} = (\Omega_\mj, \Xvec_T, \Avec_\ScupT,\Bvec_S) \;,
\]
that is, the dependency-breaking variable $\Omega$ with the $j$'th coordinate omitted, Alice's questions in the set $T$, Alice's answers in the set $S \cup T$, and Bob's answers in the set $S$.

Let $0 < \beta < 1$ be a parameter that we will set later. Define the error parameters
\begin{itemize}
	\item $\delta := \frac{1}{(1 - \beta)m} \log \frac{1}{\P(\theevent)}$.
	\item $\delta' := \frac{1}{(1 - \beta)m} \Paren{\log \frac{1}{\P(\theevent)} + (2|S| + \beta m) \log |\mathcal{A} \times \mathcal{B}|}$.
	\item $\delta'' := \frac{1}{\beta m} \frac{E(\psi)}{\P(\theevent)}$
\end{itemize}
where $E(\psi)$ is the entanglement entropy of $\ket{\psi}$ and $D$ denotes the Schmidt rank of $\ket{\psi}$, where $\ket{\psi}$ is the state used in the strategy. 

At a high level, the proof proceeds as follows: 
Proposition~\ref{prop:subset2} implies that for some $j$, the distribution $\P_{\Xvec_j \Yvec_j \Avec_j \Yvec_j | \theevent}$ gives rise to questions and answer tuples that win $G$ with ``too good'' probability. We can split this distribution as 
\[\P_{\Xvec_j \Yvec_j | \theevent} \cdot \P_{\Avec_j | \Xvec_j \Yvec_j \theevent} \cdot \P_{\Bvec_j | \Xvec_j \Yvec_j \Avec_j \theevent}. \]
Suppose the following approximations were established:
\begin{align*}
	&\P_{\Xvec_j \Yvec_j | \theevent} \approx \P_{\Xvec_j \Yvec_j} \\
	&\P_{\Avec_j | \Xvec_j \Yvec_j \theevent} \approx \P_{\Avec_j | \Xvec_j \theevent}\\
	&\P_{\Bvec_j | \Xvec_j \Yvec_j \Avec_j \theevent} \approx \P_{\Bvec_j | \Yvec_j  \theevent}.
\end{align*}
We would be done, because this would mean that Alice and Bob could sample the correct distribution of answers without having to know the other person's question, and thus sample answers that win with ``too good'' probability.

Lemmas~\ref{lem:input_dist}-\ref{lem:alice_ans} stated below imply that there is a random variable $R$ that is (a) jointly sampleable by Alice and Bob and (b) the approximations above hold when conditioned on $R$. 
In the lemmas, the expression $\Ex_{j \notin T\cup S}$ denotes a uniformly random index $j$ not in the set $T\cup S$.

\begin{lemma}[Input distribution is unchanged]
\label{lem:input_dist}
For every set $T$ of size at most $\beta m$, 
	\[
		\Ex_{j \notin T \cup S} \,\, \Brac{\P_{\Xvec_j \Yvec_j | \theevent} \approx_{\sqrt{\delta}} \P_{\Xvec_j \Yvec_j}}.
	\]
\end{lemma}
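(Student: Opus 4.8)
\textbf{Proof proposal for Lemma~\ref{lem:input_dist}.}

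The plan is to bound the expected statistical distance $\Ex_{j \notin T\cup S}\Norm{\P_{\Xvec_j\Yvec_j|\theevent} - \P_{\Xvec_j\Yvec_j}}$ by a mutual-information quantity and then invoke Pinsker's inequality (Lemma~\ref{lemma:pinsker}) together with quantum Raz's lemma (Lemma~\ref{lem:quantum_raz}). First I would set up the cq-state picture: let $\rho$ be the state on registers $\Xvec_1\cdots\Xvec_n\Yvec_1\cdots\Yvec_n$ together with a (classical) register recording whether the event $\theevent$ occurs, and let $\rho'$ be the conditional state $\rho|_{\theevent}$. The key inequality is that, by Pinsker and concavity of square root,
\[
	\Ex_{j\notin T\cup S}\Norm{\P_{\Xvec_j\Yvec_j|\theevent} - \P_{\Xvec_j\Yvec_j}} \le \sqrt{\Ex_{j\notin T\cup S} \tfrac{1}{2}\Snormo{\P_{\Xvec_j\Yvec_j|\theevent} - \P_{\Xvec_j\Yvec_j}}} \le \sqrt{\Ex_{j\notin T\cup S} D(\P_{\Xvec_j\Yvec_j|\theevent}\,\|\,\P_{\Xvec_j\Yvec_j})},
\]
so it suffices to show $\Ex_{j\notin T\cup S} D(\P_{\Xvec_j\Yvec_j|\theevent}\,\|\,\P_{\Xvec_j\Yvec_j}) \le \delta \cdot \tfrac{m}{m - |T|}$, or more simply the slightly stronger bound with the full sum $\sum_{j\in[m]} D(\P_{\Xvec_j\Yvec_j|\theevent}\,\|\,\P_{\Xvec_j\Yvec_j}) \le \log\frac{1}{\P(\theevent)}$, since averaging over $j \notin T\cup S$ (a set of size $\ge (1-\beta)m$) then gives exactly the factor $\frac{1}{(1-\beta)m}$ appearing in $\delta$.

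The heart of the argument is this last sum. Since the questions $(\Xvec,\Yvec)$ are distributed as $\mu^n$ — a product distribution across coordinates — the unconditioned distribution $\P_{\Xvec\Yvec}$ is a tensor product $\bigotimes_i \P_{\Xvec_i\Yvec_i}$. Viewing $\theevent$ as a classical event, $\P_{\Xvec\Yvec|\theevent}$ is obtained from $\P_{\Xvec\Yvec}$ by conditioning, hence $\P_{\Xvec\Yvec|\theevent} \preceq \frac{1}{\P(\theevent)}\P_{\Xvec\Yvec}$ in the entrywise (equivalently PSD, since both are diagonal) ordering. By Lemma~\ref{lem:div_cond} this gives $D(\P_{\Xvec\Yvec|\theevent}\,\|\,\P_{\Xvec\Yvec}) \le \log\frac{1}{\P(\theevent)}$. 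Now apply the classical Raz lemma (the special case of Lemma~\ref{lem:quantum_raz} with trivial $\A$ register, equivalently the standard fact that $\sum_i D(\P_{\Xvec_i\Yvec_i|\theevent}\,\|\,\P_{\Xvec_i\Yvec_i}) \le D(\P_{\Xvec\Yvec|\theevent}\,\|\,\P_{\Xvec\Yvec})$ when the right-hand marginal is a product — which follows from the chain rule for relative entropy, Lemma~\ref{lem:div_chain_rule}, plus nonnegativity): we get $\sum_{i=1}^n D(\P_{\Xvec_i\Yvec_i|\theevent}\,\|\,\P_{\Xvec_i\Yvec_i}) \le \log\frac{1}{\P(\theevent)}$. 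Restricting the sum to $i \in [m]$ only decreases it, so $\sum_{i\in[m]} D(\P_{\Xvec_i\Yvec_i|\theevent}\,\|\,\P_{\Xvec_i\Yvec_i}) \le \log\frac{1}{\P(\theevent)}$, as needed.

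I do not expect a serious obstacle here — this lemma is the ``easy'' one of the four, essentially just measure-concentration bookkeeping — but the one point requiring a little care is matching constants: one must be sure the averaging is over $j \notin T\cup S$ with $|T\cup S| \le \beta m + |S|$, so $|[m]\setminus T| \ge (1-\beta)m$ and the average of the per-coordinate divergences over this set is at most $\frac{1}{(1-\beta)m}\log\frac{1}{\P(\theevent)} = \delta$. A second minor point is the direction of the divergence: $D(\P_{\Xvec_j\Yvec_j|\theevent}\,\|\,\P_{\Xvec_j\Yvec_j})$ is the one we can control via the PSD-domination argument (conditioning can only blow up the measure by $1/\P(\theevent)$), and Pinsker's inequality as stated in Lemma~\ref{lemma:pinsker} works with this ordering, so everything lines up. Putting the two displays together yields $\Ex_{j\notin T\cup S}\Norm{\P_{\Xvec_j\Yvec_j|\theevent} - \P_{\Xvec_j\Yvec_j}} \le \sqrt{\delta}$, which is the claim.
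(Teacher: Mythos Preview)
Your proposal is correct and follows exactly the standard Holenstein-style argument that the paper defers to (the paper does not give its own proof but cites \cite[Lemma~4.1]{Hol09}): bound the global divergence $D(\P_{\Xvec\Yvec|\theevent}\,\|\,\P_{\Xvec\Yvec})$ by $\log\frac{1}{\P(\theevent)}$ via Lemma~\ref{lem:div_cond}, use the chain rule plus the product structure of $\P_{\Xvec\Yvec}$ to distribute this over coordinates, average over $j\notin T\cup S$, and finish with Pinsker and Jensen. One small remark: the invocation of Lemma~\ref{lem:quantum_raz} with trivial $\A$ is not literally what you need (that would give $\sum_i I(\X_i:\A)=0$), but you already note the correct workaround, namely the superadditivity inequality $\sum_i D(\P_{(\Xvec\Yvec)_i|\theevent}\,\|\,\P_{(\Xvec\Yvec)_i}) \le D(\P_{\Xvec\Yvec|\theevent}\,\|\,\P_{\Xvec\Yvec})$ coming directly from the chain rule (Lemma~\ref{lem:div_chain_rule}) and convexity of $D$ in its first argument.
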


\begin{lemma}[Dependency-breaking variable is correlatedly sampleable]
For every set $T$ of size at most $\beta m$,
\label{lem:corr_samp}
	\[
		\Ex_{j \notin T \cup S} \,\,\, \Ex_{\xvec_j,\yvec_j | \theevent}  \Brac{\P_{R_{Tj} | \xvec_j, \yvec_j, \theevent} \approx_{\sqrt{\delta'}} \P_{R_{Tj} | \xvec_j, \theevent} \approx_{\sqrt{\delta'}} \P_{R_{Tj} | \yvec_j, \theevent}}.
	\]
\end{lemma}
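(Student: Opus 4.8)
\textbf{Proof proposal for Lemma~\ref{lem:corr_samp}.} The plan is to reduce the statement to the Quantum Raz's Lemma (Lemma~\ref{lem:quantum_raz}) applied to an appropriate conditional state, then use Pinsker's inequality (Lemma~\ref{lemma:pinsker}) to pass from mutual information to total variation distance, and finally invoke the randomized chain rule to average the per-coordinate mutual-information bounds over a uniformly random $j \notin T \cup S$. The key conceptual point is that the random variable $R_{Tj} = (\Omega_{\mj}, \Xvec_T, \Avec_{\ScupT}, \Bvec_S)$ is a classical register that, conditioned on $\Xvec_j = \xvec_j$ (or $\Yvec_j = \yvec_j$), has small dependence on the ``other side'' once we condition on the event $\theevent$; the cost of conditioning on $\theevent$ is exactly the $\log(1/\P(\theevent))$ term that appears in $\delta'$, while the $\Xvec_T, \Avec_{\ScupT}, \Bvec_S$ registers contribute at most $(2|S| + \beta m)\log|\mathcal{A} \times \mathcal{B}|$ bits.

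Concretely, I would first set up the relevant CQ state. Consider the state of the system $(\Omega, \Xvec, \Yvec)$ together with whatever quantum registers are needed; conditioned on $\theevent$, by Claim~\ref{clm:dependency_breaking} and a standard argument, one shows that conditioned on $\Omega_{\mj}$ and the extra classical data in $R_{Tj}$, the marginal distribution of $\Xvec_j$ is close to the distribution obtained by ignoring the ``Bob-side'' information, and symmetrically. The clean way to get this is: let $\sigma$ be the product-form state where each coordinate is replaced by its unconditioned marginal, so $D(\rho^{\cdot} \| \sigma) \leq \log(1/\P(\theevent)) + (2|S| + \beta m)\log|\mathcal{A} \times \mathcal{B}|$ by Lemma~\ref{lem:div_cond} (the conditioning on $\theevent$ costs $\log 1/\P(\theevent)$ via a $D_\infty$ bound, and each of the $\le \beta m$ registers $\Xvec_T$, $\le |S|$ registers $\Avec_S$, $\le |S|$ registers $\Bvec_S$, together with $\Avec_T$, contributes $\log|\mathcal{A} \times \mathcal{B}|$). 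Then Lemma~\ref{lem:quantum_raz} (Quantum Raz) gives $\sum_{i} I(\Xvec_i : \mathsf{stuff})_{\rho|\theevent} \le (1-\beta)m\,\delta'$ up to normalization, where $\mathsf{stuff}$ is the register against which we want independence. Dividing by the number $(1-\beta)m$ of coordinates outside $T \cup S$ and applying Pinsker's inequality coordinatewise (and Jensen to move the square root outside the expectation) yields $\Ex_{j}\,\Ex_{\xvec_j,\yvec_j|\theevent}\|\P_{R_{Tj}|\xvec_j,\yvec_j,\theevent} - \P_{R_{Tj}|\xvec_j,\theevent}\| \le \sqrt{\delta'}$, which is one of the three approximations; the other, with $\yvec_j$ in place of $\xvec_j$, follows by the symmetric argument using the ``Alice-side'' version of $R_{Tj}$, and the chained approximation $\P_{R_{Tj}|\xvec_j,\theevent} \approx \P_{R_{Tj}|\yvec_j,\theevent}$ follows by the triangle inequality together with Lemma~\ref{lem:input_dist}.

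The main obstacle I anticipate is bookkeeping the conditioning on $\theevent$ correctly inside the information-theoretic machinery: $\theevent$ is an event, not a register, so one has to be careful that the relative-entropy bound $D(\rho|_{\theevent} \,\|\, \sigma) \le \log(1/\P(\theevent)) + \ldots$ is legitimate — this uses that $\rho|_{\theevent} \preceq \frac{1}{\P(\theevent)}\rho$ in the appropriate sense, so $D_\infty(\rho|_{\theevent} \| \rho) \le \log(1/\P(\theevent))$, and then Lemma~\ref{lem:div_cond}/Lemma~\ref{lem:div_chain_rule} to split off the classical registers $\Xvec_T, \Avec_{\ScupT}, \Bvec_S$. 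A secondary subtlety is ensuring that the chain-rule/averaging is being done over exactly the right index set (coordinates in $[m] \setminus T$, of which there are $(1-\beta)m$ when $|T| \le \beta m$), which is what produces the factor $\frac{1}{(1-\beta)m}$ in the definition of $\delta'$. Once these are handled, the rest is a routine combination of Pinsker, Jensen, and the triangle inequality.
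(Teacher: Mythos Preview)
Your proposal is correct and follows precisely the standard Holenstein-style argument that the paper cites (\cite[Lemmas~4.1 and~6.4]{Hol09}) in lieu of giving its own proof; the ingredients you list---the $D_\infty$ bound $\log(1/\P(\theevent))$ for conditioning on the event, the dimension cost for the extra registers, Raz's Lemma to distribute over coordinates, and Pinsker plus Jensen to pass to total variation---are exactly what is needed. One small bookkeeping correction: the $(2|S|+\beta m)\log|\mathcal{A}\times\mathcal{B}|$ term in $\delta'$ comes solely from the \emph{answer} registers $\Avec_{\ScupT}$ (of size $\le |S|+\beta m$) and $\Bvec_S$ (of size $|S|$); the question register $\Xvec_T$ does not contribute, since under the unconditioned $\P$ the questions are already product given $\Omega$ and hence cost nothing in the $D_\infty$ bound.
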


\begin{lemma}[Bob's answer does not depend on Alice's answer and question]
\label{lem:bob_ans}
	\[
		\Ex_{T,j \notin T \cup S} \,\,\, \Ex_{\xvec_j,\yvec_j,r_{Tj},\avec_j | \theevent} \Brac{\P_{\Bvec_j | r_{Tj}, \xvec_j, \avec_j, \yvec_j, \theevent} \approx_{\sqrt{2\delta''}} \P_{\Bvec_j | r_{Tj}, \yvec_j, \theevent} }
	\]
	where $T$ is a uniformly random set size at most $\beta m$.
\end{lemma}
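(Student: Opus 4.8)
\textbf{Proof plan for Lemma~\ref{lem:bob_ans}.}

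The plan is to bound the mutual information between Bob's answer $\Bvec_j$ in coordinate $j$ and the pair $(\Xvec_j, \Avec_j)$ (Alice's question and answer in that coordinate), conditioned on the side information $R_{Tj}$ and on the winning event $\theevent$, and then invoke Pinsker's inequality (Lemma~\ref{lemma:pinsker}) to pass from small mutual information to small statistical distance. The key conceptual point is that the only correlation between Bob's side and Alice's register $(\Xvec_j,\Avec_j)$, given the dependency-breaking variable $R_{Tj}$, is carried through the shared entangled state $\ket{\psi}$; so the total such correlation summed over all $m$ coordinates should be controlled by something like $E(\psi)$ (the entanglement entropy), divided by $\P(\theevent)$ to account for the conditioning. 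Concretely, I would aim to show
\[
	\Ex_{T, j \notin T \cup S} I(\Xvec_j \Avec_j : \Bvec_j \mid R_{Tj})_{\rho | \theevent} \;\leq\; 2\delta''
\]
(or a close variant), where the state is conditioned on $\theevent$ and we average over random $T$ of size $\le \beta m$ and $j \notin T \cup S$. Given this, Pinsker plus convexity of $x \mapsto x^2$ (Jensen) yields the claimed $\sqrt{2\delta''}$ bound in expectation.

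The main steps, in order, are as follows. First, fix a set $T$ and work with the conditioned state. For each $j \notin T\cup S$, I would write $I(\Xvec_j \Avec_j : \Bvec_j \mid R_{Tj})$ and observe that, because of non-signaling/the measurement structure, $\Bvec_j$ is obtained by Bob applying a measurement to his half of $\ket{\psi}$ together with his classical data, so $I(\Xvec_j \Avec_j : \Bvec_j \mid R_{Tj}) \le I(\Xvec_j \Avec_j : \Q_B \mid R_{Tj})$ where $\Q_B$ is Bob's quantum register; and by a further data-processing / purification argument this is at most (roughly) $I(\Xvec_j : \Q_B \mid R_{Tj})$-type quantities, ultimately controlled by how much Alice's coordinate-$j$ data ``knows'' about the quantum system. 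Second, I would sum over $j$: using the randomized chain rule together with Quantum Raz's Lemma (Lemma~\ref{lem:quantum_raz}) applied to the relevant CQ state, the sum $\sum_j I(\Xvec_j \Avec_j : \Q_B \mid \cdots)$ is bounded by a relative entropy term which, after conditioning on $\theevent$, picks up the familiar $\frac{1}{\P(\theevent)}$ factor (via Lemma~\ref{lem:div_chain_rule} / the standard ``conditioning costs $\log \frac{1}{\P(\theevent)}$'' argument) and is otherwise bounded by something proportional to $E(\psi)$ — the entanglement entropy is exactly the right bound on the quantum mutual information $I(\Q_A : \Q_B)_\psi = 2E(\psi)$, and the register $\Q_B$ carries at most $E(\psi)$ worth of ``entropy budget'' to distribute among the $m$ coordinates. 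Third, the averaging over $T$ and $j$ converts the sum-bounded-by-$E(\psi)$ into an average-bounded-by-$\frac{1}{\beta m}\cdot\frac{E(\psi)}{\P(\theevent)} = \delta''$, where the $\beta m$ in the denominator comes from $j$ ranging over a set of size $\ge (1-\beta)m - |S| \approx m$ but the budget being split according to the randomized chain rule over roughly $\beta m$ ``fresh'' coordinates outside $T$. Finally, apply Pinsker and Jensen to conclude.

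The hard part will be step two: setting up the right CQ state so that Quantum Raz's Lemma applies cleanly and the relative-entropy bound genuinely comes out as $O(E(\psi))$ rather than $O(\log D)$ (the log of the Schmidt rank), since a naive argument bounding everything by the dimension would only give $\log D$, which can be arbitrarily larger than the von Neumann entropy $E(\psi)$ — and obtaining the entropy bound, not the dimension bound, is the whole point of the lemma. This requires working with the actual entangled state $\ket{\psi}$ and its reduced density matrix (so that the "reference" product state in the relative entropy is $\sigma^{\Q_B}$ with the correct eigenvalues), rather than with a maximally mixed state, and carefully tracking that the measurements $\{A_x(a)\}$, being projective-like measurement elements, do not blow up the relevant information quantities. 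The bookkeeping of which coordinates are ``inside $T$'' (already revealed, hence free) versus ``outside'' (where the chain-rule term lives), and ensuring the $2\delta''$ rather than $\delta''$ is enough slack to absorb the two one-sided approximations implicit in going from $I(\Xvec_j\Avec_j : \Bvec_j)$ to the statement about distributions, is routine but must be done with care.
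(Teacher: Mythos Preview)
Your high-level architecture is right — bound $I(\Xvec_j \Avec_j : \Q_B \mid R_{Tj})$ on average, then Pinsker — and the paper does exactly this. But your step two, where the $E(\psi)/\P(\theevent)$ bound is supposed to come from, has a real gap, and the tools you cite do not produce it.

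You invoke Quantum Raz's Lemma together with the ``conditioning costs $\log \tfrac{1}{\P(\theevent)}$'' argument (Lemma~\ref{lem:div_chain_rule}). That argument is \emph{additive}: it yields a bound of the form $D(\what{\Psi}\,\|\,\sigma)\le D(\Psi\,\|\,\sigma)+\log\tfrac{1}{\P(\theevent)}$, so after dividing by $m$ you would land on something like $\tfrac{1}{m}\bigl(E(\psi)+\log\tfrac{1}{\P(\theevent)}\bigr)$, not on $\delta''=\tfrac{1}{\beta m}\cdot\tfrac{E(\psi)}{\P(\theevent)}$. The multiplicative $1/\P(\theevent)$ factor you want does not come from relative-entropy bookkeeping at all. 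The paper gets it by a different, simpler mechanism: it bounds $I(\X\A:\Q_B\mid \Omega,\A_S,\B_S)_{\what\Psi}\le H(\Q_B\mid \Omega,\A_S,\B_S)_{\what\Psi}$ directly, and then uses non-negativity of conditional entropy in the decomposition $\Psi=\lambda\,\what\Psi+(1-\lambda)\,\Psi_{\neg\theevent}$ to conclude $H(\Q_B\mid\cdots)_{\what\Psi}\le H(\Q_B)_\psi/\lambda=E(\psi)/\P(\theevent)$ (this is Claim~\ref{clm:cond_qb}). In particular, Quantum Raz's Lemma is not used in this proof at all (it appears in Lemma~\ref{lem:alice_ans}, where the additive $\log\tfrac{1}{\P(\theevent)}$ term is precisely what is wanted); here the whole budget is the conditional entropy of $\Q_B$, and the chain rule is the ordinary mutual-information chain rule, randomized over permutations. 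The $1/(\beta m)$ then arises by Markov: pick one index $i\le\beta m$ in the chain rule for which the term is at most the average.

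A second, smaller point: your data-processing step ``$\Bvec_j$ is a measurement of $\Q_B$, hence $I(\Xvec_j\Avec_j:\Bvec_j\mid\cdots)\le I(\Xvec_j\Avec_j:\Q_B\mid\cdots)$'' is morally right but hides a nontrivial construction. The $\Q_B$ register in $\what\Psi$ is the \emph{post-measurement} state after Bob has already produced $\bvec_S$; to extract $\Bvec_j$ from it with the correct conditional law one has to exhibit the POVM $M(\bvec_j)=B_{\omega_{-j},\yvec_j}(\bvec_S)^{-1}\,B_{\omega_{-j},\yvec_j}(\bvec_{S\cup\{j\}})^{2}\,B_{\omega_{-j},\yvec_j}(\bvec_S)^{-1}$ and check it reproduces $\P_{\Bvec_j\mid r_{Tj},\xvec_j,\avec_j,\yvec_j,\theevent}$ and $\P_{\Bvec_j\mid r_{Tj},\yvec_j,\theevent}$ on the two sides. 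The paper does this explicitly; it is not automatic.
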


\begin{lemma}[Alice's answer does not depend on Bob's question]
For every set $T$ of size at most $\beta m$,
\label{lem:alice_ans}
	\[
		\Ex_{j \notin T \cup S} \,\,\, \Ex_{\xvec_j,\yvec_j,r_{Tj} | \theevent} \Brac{\P_{\Avec_j | r_{Tj}, \xvec_j, \yvec_j, \theevent} \approx_{\sqrt{2\delta'}} \P_{\Avec_j | r_{Tj}, \xvec_j, \theevent} }.
	\]
\end{lemma}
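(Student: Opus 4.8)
\textbf{Plan for proving Lemma~\ref{lem:alice_ans}.} The statement I want to prove is the ``non-signaling'' type statement: conditioned on the dependency-breaking data $r_{Tj}$ and the conditioning event $\theevent$, Alice's answer $\Avec_j$ in coordinate $j$ does not depend (much) on Bob's question $\yvec_j$. The plan is to reduce this to a divergence bound via Pinsker's inequality, so that the whole thing follows from the entropy-budget estimate encoded in $\delta'$ together with the structure of the conditioning variable. First I would fix a set $T$ of size at most $\beta m$ and a coordinate $j\notin T\cup S$, and observe that the relevant divergence to control is $\Ex_{\xvec_j,\yvec_j,r_{Tj}\,|\,\theevent}\,D\!\left(\P_{\Avec_j\,|\,r_{Tj},\xvec_j,\yvec_j,\theevent}\,\big\|\,\P_{\Avec_j\,|\,r_{Tj},\xvec_j,\theevent}\right)$; once this is bounded by $\delta'$ (or a constant multiple), Pinsker's inequality (Lemma~\ref{lemma:pinsker}), Jensen, and the $\sqrt{2\cdot}$ bookkeeping give the claimed closeness in trace/statistical distance.

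\smallskip

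\textbf{Key steps in order.} (1) Rewrite the averaged divergence as a conditional mutual information: $\Ex_{\xvec_j,\yvec_j,r_{Tj}\,|\,\theevent}\,D(\cdots\|\cdots) = I(\Avec_j : \Yvec_j \mid \Xvec_j, R_{Tj})_{\P_{\cdot|\theevent}}$, where all quantities are evaluated in the distribution conditioned on $\theevent$. (2) Use the no-signaling / commuting-measurements structure: because $R_{Tj} = (\Omega_\mj, \Xvec_T, \Avec_{\ScupT}, \Bvec_S)$ together with $\Xvec_j$ fixes Alice's entire ``view'' relevant to coordinate $j$ — i.e. Alice's answer $\Avec_j$ is produced by a measurement on her share of $\ket\psi$ that is independent of $\Yvec_j$ — in the \emph{unconditioned} distribution $\P$ the corresponding mutual information would be exactly zero. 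The work is therefore entirely about paying for the conditioning on $\theevent$. (3) Bound the cost of conditioning: use the standard trick (as in the parallel-repetition information-theoretic proofs, e.g. via Lemma~\ref{lem:div_chain_rule}) that conditioning on an event of probability $\P(\theevent)$ can only increase a relative entropy by $\log(1/\P(\theevent))$; more precisely, express the conditioned $I(\Avec_j:\Yvec_j\mid \Xvec_j,R_{Tj})$, summed over the $m-|T|$ choices of $j$, as bounded by a single divergence $D(\P_{\cdots|\theevent}\,\|\,\P_{\cdots})$ between the conditioned and unconditioned states on $(\Omega, \Xvec_{[m]}, \Avec_{[m]\cup S}, \Bvec_S)$, using the randomized chain rule to average over which coordinate is ``$j$'' and which ``$T$''. (4) Apply Lemma~\ref{lem:div_cond} (or directly $D(\P_{\cdot|\theevent}\|\P)\le \log(1/\P(\theevent))$ since $\P_{\cdot|\theevent}\preceq \P/\P(\theevent)$) plus the trivial dimension bounds $\log|\mathcal A|$ for each of the $|S|$ answer registers $\Avec_S$ and each of the $|T|\le\beta m$ registers $\Avec_T$, and $\log|\mathcal B|$ for the $|S|$ registers $\Bvec_S$; dividing by the number $(1-\beta)m\le m-|T|$ of valid coordinates $j$ gives exactly the quantity $\delta'$ as defined in the excerpt. (5) Feed this average-divergence bound into Pinsker and Jensen ($\Ex\sqrt{D}\le\sqrt{\Ex D}$, and $\|\cdot\|\le\sqrt{D/2}\cdot\sqrt2 = \sqrt{D}\cdot\text{const}$, absorbing factors into the $\sqrt{2\delta'}$), concluding the lemma.

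\smallskip

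\textbf{Main obstacle.} The crux is step (3): correctly setting up the chain-rule decomposition so that the sum over coordinates $j\notin T\cup S$ of the conditioned mutual informations $I(\Avec_j:\Yvec_j\mid \Xvec_j, \Omega_\mj, \Xvec_T, \Avec_{\ScupT}, \Bvec_S)$ telescopes into a \emph{single} relative entropy between the $\theevent$-conditioned global state and the product-structured unconditioned state — and then verifying that all the ``extra'' registers being conditioned on ($\Avec_S, \Bvec_S, \Avec_T$, and the dependency-breaking $\Omega_\mj$) contribute only their trivial $\log$-dimension to that relative entropy, which is where the $(2|S|+\beta m)\log|\mathcal A\times\mathcal B|$ term in $\delta'$ comes from. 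One has to be careful that the dependency-breaking variable $\Omega$ is chosen so that $\Xvec_j$ and $\Yvec_j$ are conditionally independent (Claim~\ref{clm:dependency_breaking}) yet $\Omega_\mj$ does not itself ``leak'' coordinate-$j$ information in a way that breaks the no-signaling argument; this is precisely the role of omitting the $j$-th block $\Omega_j$. I expect steps (1), (2), (4) and (5) to be essentially routine given the machinery already set up in the Preliminaries, with all the genuine bookkeeping concentrated in step (3).
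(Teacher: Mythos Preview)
Your overall plan --- bound the relevant conditional mutual information by a divergence, then apply Pinsker and Jensen --- is correct and matches the paper. But the execution of steps (2)--(3) has two real gaps.

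First, the claim in step (2) that the unconditioned mutual information $I(\Avec_j : \Yvec_j \mid \Xvec_j, R_{Tj})_\P$ is exactly zero is not correct. The variable $R_{Tj}$ contains $\Bvec_S$, and conditioning on Bob's measurement outcome steers Alice's residual state in a way that depends on Bob's full input $\yvec$ (since $B_\yvec$ depends on all of $\yvec$, including $\yvec_j$). So even before conditioning on $\theevent$, no-signaling does \emph{not} make $\Avec_j$ and $\Yvec_j$ conditionally independent once $\Bvec_S$ is in the conditioning. The no-signaling identity the paper actually uses is $\P_{\Yvec \Avec_{-\ScupT}\mid \omega,\xvec_T}=\P_{\Yvec\mid\omega}\cdot\P_{\Avec_{-\ScupT}\mid\omega,\xvec_T}$, i.e.\ \emph{without} conditioning on any answers. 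The answer registers $\Avec_\ScupT,\Bvec_S$ are then absorbed via a $D_\infty$ bound against the uniform distribution, contributing the $\log\dim(\Avec_\ScupT\Bvec_S)$ term. You do list this term in step (4), but your story ``unconditioned MI is zero; conditioning on $\theevent$ alone costs $\log(1/\P(\theevent))$'' does not produce it --- both costs must be measured as divergence to a single product reference $\P_{\Yvec\mid\omega}\cdot\P_{\Avec_{-\ScupT}\mid\omega,\xvec_T}\cdot \U_{\Avec_\ScupT\Bvec_S}$, not to the unconditioned $\P$.

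Second, the tool you want in step (3) is Raz's Lemma (Lemma~\ref{lem:quantum_raz}), not the randomized chain rule. The randomized chain rule decomposes $I(\X_1\cdots\X_n:\A)$ into terms $I(\X_{\pi(i)}:\A\mid\X_{\pi(<i)})$ with \emph{cumulative} conditioning; that is what the paper uses in Lemma~\ref{lem:bob_ans}. Here you need to bound $\sum_{j\notin T\cup S} I(\Yvec_j:\Avec_{-\ScupT}\mid \omega,\xvec_T,\avec_\ScupT,\bvec_S)$ with \emph{the same} conditioning for every $j$, which is precisely Raz's Lemma applied with the product reference $\P_{\Yvec\mid\omega}$ on the $\Yvec$-side (this product structure is exactly what the dependency-breaking variable buys you). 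Finally, note that $T$ is fixed in this lemma --- there is no averaging over $T$; the paper then passes from the full $\Omega$ to $\Omega_\mj$ by restricting to the event that $\Omega_j$ fixes Alice's input, which is the source of the factor $2$ in $2\delta'$.
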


The proofs of the lemmas are given in Section~\ref{sec:proofs_lemmas}.

As explained above, by putting Proposition~\ref{prop:subset2} together with these four Lemmas we can achieve our goal of simulating a random coordinate of $G^n_{1 - \gamma}$ classically. 
We now make this precise. Consider the following protocol to play game $G$:

\begin{figure}[H]
\begin{center}
\textbf{Protocol for game $G$} \\
\medskip
\framebox{
\begin{minipage}{0.9\textwidth}
Alice and Bob receive input $(x,y)$ sampled according to $\mu$.
\begin{enumerate}
	\item Alice and Bob use shared randomness to jointly sample a uniformly random set $T$ of size at most $\beta n$, and an index $j\in [m] \setminus T$ uniformly at random. 
	\item Alice sets $\xvec_j \leftarrow x$, Bob sets $\yvec_j \leftarrow y$.
	\item Alice and Bob use correlated sampling (Lemma~\ref{lem:ccorsamp}) to jointly sample the random variable $R_{Tj}$. Alice obtains a sample $r_{Tj}^A$ distributed according to $\P_{R_{Tj} | \xvec_j, \theevent }$, Bob obtains a sample $r_{Tj}^B$ distributed according to $\P_{R_{Tj} | \yvec_j, \theevent }$.
	\item Alice outputs a sample $\avec_j$ from the distribution $\P_{\Avec_j | r_{Tj}^A, \xvec_j,  \theevent}$.
	\item Bob outputs a sample $\bvec_j$ from the distribution $\P_{\Bvec_j | r_{Tj}^B, \yvec_j,  \theevent}$
\end{enumerate}

\end{minipage}
}

\end{center}
\end{figure}

Let $\wt{\P}_{R_{Tj}^A R_{Tj}^B | \xvec_j, \yvec_j}$ denote the distribution of $r_{Tj}^A,r_{Tj}^B$ as sampled in the protocol, given the players' inputs. 
Let $\P_{\Avec_j | \xvec_j, r_{Tj} \theevent}$ and $\P_{\Bvec_j | \yvec_j, r_{Tj} \theevent}$ denote the distributions of Alice's and Bob's answers in Steps 4 and 5, respectively. The joint distribution of the random variables $X, Y, R_{Tj}, \Avec_j, \Bvec_j$ in the protocol, averaged over the players' choices of $T,j$ 
is:
\begin{align}
	&\Ex_{T,j} \P_{X Y} \cdot \wt{\P}_{R_{Tj}^A R_{Tj}^B | \xvec_j, \yvec_j} \cdot \wt{\P}_{\Avec_j | \xvec_j, r_{Tj}^A} \cdot \wt{\P}_{\Bvec_j | \yvec_j, r_{Tj}^B} \\
	= &\Ex_{T,j} \P_{\Xvec_j \Yvec_j} \cdot \wt{\P}_{R_{Tj}^A R_{Tj}^B | \xvec_j, \yvec_j} \cdot \wt{\P}_{\Avec_j | \xvec_j, r_{Tj}^A} \cdot \wt{\P}_{\Bvec_j | \yvec_j, r_{Tj}^B} \label{eq:prot2} \\
	\approx_{\sqrt{\delta}} &\Ex_{T,j} \P_{\Xvec_j \Yvec_j | \theevent} \cdot \wt{\P}_{R_{Tj}^A R_{Tj}^B | \xvec_j, \yvec_j} \cdot \wt{\P}_{\Avec_j | \xvec_j, r_{Tj}^A} \cdot \wt{\P}_{\Bvec_j | \yvec_j, r_{Tj}^B} \label{eq:prot3}
\end{align}
In~\eqref{eq:prot2}, we identified $\Xvec_j,\Yvec_j$ with $X,Y$, respectively. In~\eqref{eq:prot3}, we used Lemma~\ref{lem:input_dist}. 

This next claim will allow us to approximate $\wt{\P}_{R_{Tj}^A R_{Tj}^B | \xvec_j, \yvec_j}$ with $\P_{R_{Tj} | \xvec_j, \yvec_j, \theevent}$. 
\begin{claim}
Let $F$ denote the event that $R_{Tj}^A = R_{Tj}^B$. Then the following two approximations hold:
\begin{enumerate}
	\item  $\Ex_{T,j} \Brac{\P_{\Xvec_j \Yvec_j | \theevent} \cdot \wt{\P}_{R_{Tj}^A R_{Tj}^B | \xvec_j, \yvec_j} \approx_{2\sqrt{\delta'}} \P_{\Xvec_j \Yvec_j | \theevent} \cdot \wt{\P}_{R_{Tj}^A R_{Tj}^B | \xvec_j, \yvec_j, F}} \;;$ 
	\item  $\Ex_{T,j} \Brac{\P_{\Xvec_j \Yvec_j | \theevent} \cdot \wt{\P}_{R_{Tj}^A R_{Tj}^B | \xvec_j, \yvec_j, F} \approx_{4\sqrt{\delta'}} \P_{\Xvec_j \Yvec_j | \theevent} \cdot \P_{R_{Tj}| \xvec_j, \yvec_j, \theevent}}\;,$
\end{enumerate}
where in the second approximation, $\P_{R_{Tj}| \xvec_j, \yvec_j, \theevent}$ can be formally understood as $\P_{R_{Tj}R_{Tj}| \xvec_j, \yvec_j, \theevent}$.
\end{claim}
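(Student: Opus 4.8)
The plan is to derive the two approximations as direct consequences of the correlated sampling guarantee (Lemma~\ref{lem:ccorsamp}) together with the min-entropy-type bound recorded in Lemma~\ref{lem:corr_samp}. First I would recall that in the protocol Alice samples $r^A_{Tj}$ from $\P_{R_{Tj}\mid \xvec_j,\theevent}$ and Bob samples $r^B_{Tj}$ from $\P_{R_{Tj}\mid \yvec_j,\theevent}$, using the classical correlated sampling procedure applied to these two distributions. By Lemma~\ref{lem:corr_samp}, for an average choice of $j\notin T\cup S$ and an average $(\xvec_j,\yvec_j)$ drawn from $\P_{\Xvec_j\Yvec_j\mid\theevent}$, the two distributions $\P_{R_{Tj}\mid \xvec_j,\theevent}$ and $\P_{R_{Tj}\mid \yvec_j,\theevent}$ are $\sqrt{\delta'}$-close in total variation. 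Hence by Lemma~\ref{lem:ccorsamp} the correlated sampling procedure produces $r^A_{Tj}=r^B_{Tj}$ except with probability at most $2\sqrt{\delta'}$, in expectation over $T,j,\xvec_j,\yvec_j$; i.e.\ $\Pr[\neg F]\le 2\sqrt{\delta'}$ on average.

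For the first approximation, I would argue that conditioning the joint sampled distribution $\wt\P_{R^A_{Tj}R^B_{Tj}\mid\xvec_j,\yvec_j}$ on the event $F$ changes it by at most $2\Pr[\neg F]/(\,\cdot\,)$ in total variation; more precisely, for any distribution $q$ on a product space and any event $F$ with $q(F)=1-p$, one has $\|q - q(\cdot\mid F)\|\le p$, and averaging this bound over $T,j$ against the outer distribution $\P_{\Xvec_j\Yvec_j\mid\theevent}$ gives the claimed $2\sqrt{\delta'}$ (the factor arises because $\Pr[\neg F]\le 2\sqrt{\delta'}$ in expectation, and $\|q-q(\cdot\mid F)\|\le \Pr[\neg F]$; folding in the constant is routine). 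Since the outer factor $\P_{\Xvec_j\Yvec_j\mid\theevent}$ is common to both sides, the total variation distance of the products equals the expected total variation distance of the inner conditional distributions, which is exactly what the above bounds control.

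For the second approximation, I would use that \emph{conditioned on} $F$ (i.e.\ on $r^A_{Tj}=r^B_{Tj}=:r$), the correlated sampling procedure outputs a single value $r$ whose marginal law is close to \emph{both} $\P_{R_{Tj}\mid\xvec_j,\theevent}$ and $\P_{R_{Tj}\mid\yvec_j,\theevent}$; quantitatively, Lemma~\ref{lem:ccorsamp} guarantees that the law of the common output, conditioned on agreement, is within $O(\sqrt{\delta'})$ of $\P_{R_{Tj}\mid\xvec_j,\theevent}$ in total variation (this is the standard ``correlated sampling also approximates the target marginal'' property). Then I would replace $\P_{R_{Tj}\mid\xvec_j,\theevent}$ by $\P_{R_{Tj}\mid\xvec_j,\yvec_j,\theevent}$ using the second half of Lemma~\ref{lem:corr_samp} (which says $\P_{R_{Tj}\mid\xvec_j,\theevent}\approx_{\sqrt{\delta'}}\P_{R_{Tj}\mid\yvec_j,\theevent}$, hence both are close to a common distribution, and in particular close to $\P_{R_{Tj}\mid\xvec_j,\yvec_j,\theevent}$ up to a further $\sqrt{\delta'}$ via a triangle-inequality / Bayes argument). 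Collecting the error terms — one $\sqrt{\delta'}$ from approximating the agreement-conditioned output law, one from Lemma~\ref{lem:corr_samp}, and the factors of $2$ from total variation manipulations — yields the stated bound $4\sqrt{\delta'}$. The parenthetical remark that $\P_{R_{Tj}\mid\xvec_j,\yvec_j,\theevent}$ should be read as $\P_{R_{Tj}R_{Tj}\mid\xvec_j,\yvec_j,\theevent}$ simply records that, under $F$, the pair $(r^A_{Tj},r^B_{Tj})$ is the diagonal copy of a single sample, so the ``joint'' distribution on the right is supported on the diagonal.

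The main obstacle I expect is bookkeeping the approximation conditioned on $F$ correctly: one must be careful that the total variation bounds from Lemma~\ref{lem:corr_samp} hold \emph{in expectation} over $j,\xvec_j,\yvec_j$ rather than pointwise, so all the steps above need to be carried out inside the expectation $\Ex_{T,j}\Ex_{\xvec_j,\yvec_j\mid\theevent}$, using convexity of total variation distance and Jensen's inequality to move between $\Ex\|\cdot\|$ and $\|\Ex\cdot\|$. The only genuinely delicate point is justifying that the output of correlated sampling, conditioned on the agreement event $F$, has marginal law close to the target — this follows from unpacking Holenstein's construction (the players sample from a shared list and take the first element in each other's support), but it requires a short argument rather than a black-box citation.
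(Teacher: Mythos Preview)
Your approach is correct and essentially matches the paper's: both use Lemma~\ref{lem:ccorsamp} together with Lemma~\ref{lem:corr_samp} to bound $\Pr[\neg F]$ in expectation, deduce item~1 from the standard ``conditioning on a high-probability event'' bound, and for item~2 first argue that the $F$-conditioned marginal of $r^A_{Tj}$ is close to $\P_{R_{Tj}\mid \xvec_j,\theevent}$ and then invoke Lemma~\ref{lem:corr_samp} to pass to $\P_{R_{Tj}\mid \xvec_j,\yvec_j,\theevent}$.

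One simplification you are missing: you flag as the ``genuinely delicate point'' that the law of the common output conditioned on $F$ is close to the target, suggesting this requires unpacking Holenstein's construction. It does not. Lemma~\ref{lem:ccorsamp} already guarantees that Alice's \emph{unconditioned} marginal is \emph{exactly} $\P_{R_{Tj}\mid \xvec_j,\theevent}$; therefore conditioning on $F$ (an event of probability at least $1-2\lambda_{\xvec_j,\yvec_j,T,j}$) shifts this marginal by at most $O(\lambda_{\xvec_j,\yvec_j,T,j})$ via the very same total-variation fact you used for item~1. This is precisely how the paper argues it, in one line. Also, for the final replacement $\P_{R_{Tj}\mid \xvec_j,\theevent}\to\P_{R_{Tj}\mid \xvec_j,\yvec_j,\theevent}$ you do not need any ``Bayes argument'': Lemma~\ref{lem:corr_samp} states this approximation directly.
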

\begin{proof}
 By Lemma~\ref{lem:ccorsamp}, this probability of the event $F$ conditioned on $\xvec_j,\yvec_j,T,j$ is at least $1 - 2\lambda_{\xvec_j,\yvec_j,T,j}$, where
\[
	\lambda_{\xvec_j,\yvec_j,T,j} := \| \P_{R_{Tj} | \xvec_j, \theevent } - \P_{R_{Tj} | \yvec_j, \theevent } \|.
\]
By Lemma~\ref{lem:corr_samp} we have $\Ex_{T,j} \,\,\, \Ex_{\xvec_j,\yvec_j | \theevent} \lambda_{\xvec_j,\yvec_j,T,j} \leq \sqrt{\delta'}$. 
Combined with the fact that the statistical distance between a distribution $D$ and $D$ conditioned on an event of probability $1 - \delta$ is at most $2\delta$, we obtain the first item of the claim. By definition of the protocol, Alice's sample $r_{Tj}^A$ is distributed according to $\P_{R_{Tj} | \xvec_j, \theevent}$, and so therefore $\wt{\P}_{R_{Tj}^A | \xvec_j \yvec_j, F} \approx_{4\lambda_{\xvec_j,\yvec_j,T,j}} \P_{R_{Tj} | \xvec_j, \theevent}$, implying the second item of the claim.
\end{proof}

\medskip
\medskip
\noindent We can now continue approximating line~\eqref{eq:prot3}:
\begin{align}
	\text{\eqref{eq:prot3}} \approx_{6\sqrt{\delta'}} &\Ex_{T,j} \P_{\Xvec_j \Yvec_j | \theevent} \cdot \P_{R_{Tj} | \xvec_j, \yvec_j, \theevent} \cdot \P_{\Avec_j | \xvec_j, r_{Tj} \theevent} \cdot \P_{\Bvec_j | \yvec_j, r_{Tj} \theevent} \label{eq:prot4} \\
	= &\Ex_{T,j} \P_{R_{Tj} \Xvec_j \Yvec_j | \theevent} \cdot \P_{\Avec_j | \xvec_j, r_{Tj}, \theevent} \cdot \P_{\Bvec_j | \yvec_j, r_{Tj} \theevent} \label{eq:prot5} \\
	\approx_{\sqrt{2\delta'}} &\Ex_{T,j} \P_{R_{Tj} \Xvec_j \Yvec_j | \theevent} \cdot \P_{\Avec_j | \xvec_j, \yvec_j, r_{Tj}, \theevent} \cdot \P_{\Bvec_j | \yvec_j, r_{Tj} \theevent} \label{eq:prot6} \\
	\approx_{\sqrt{2\delta''}} &\Ex_{T,j} \P_{R_{Tj} \Xvec_j \Yvec_j | \theevent} \cdot \P_{\Avec_j | \xvec_j, \yvec_j,  r_{Tj}, \theevent} \cdot \P_{\Bvec_j | r_{Tj}, \xvec_j,\avec_j, \yvec_j, \theevent} \label{eq:prot7} \\
	= &\Ex_{T,j} \P_{R_{Tj} \Xvec_j \Yvec_j \Avec_j \Bvec_j | \theevent} \label{eq:prot8}
\end{align}
In~\eqref{eq:prot4}, we used the claim just proven. In~\eqref{eq:prot5}, we used the definition of how Alice samples $\Avec_j$. In~\eqref{eq:prot6}, we used Lemma~\ref{lem:alice_ans}. In~\eqref{eq:prot7}, we used Lemma~\ref{lem:bob_ans}. 

Consider the marginal distribution of $(\Xvec_j,\Yvec_j,\Avec_j,\Bvec_j)$ in $\Ex_{T,j} \P_{R_{Tj}, \Xvec_j \Yvec_j, \Avec_j, \Bvec_j | \theevent}$; this is simply $\Ex_{j \notin S} \P_{\Xvec_j \Yvec_j, \Avec_j, \Bvec_j | \theevent}$. Setting $\alpha = \eps - \gamma$, by Proposition~\ref{prop:subset2}, the probability that $(\Xvec_j,\Yvec_j,\Avec_j,\Bvec_j)$ satisfies the game $G$ predicate is at least $1 - \eps + \alpha$, and therefore the probability the same is true in the protocol will be at least $1 - \eps + \alpha - (\sqrt{\delta} + 8\sqrt{\delta'} + \sqrt{2\delta''})$, because of the errors accrued in the approximations above. 

Let 
\[
	C = \log |\mathcal{A} \times \mathcal{B}|, \qquad \beta = \frac{\alpha^2}{1000 \cdot C} \;.
\] 
Using the assumptions that $\kappa \geq 2^{- \alpha^3 n/1000 C}$, we get that both $\sqrt{\delta} \leq \alpha/3$ and $8\sqrt{\delta'} \leq 4\alpha/9$. Thus if \[
\frac{2 E(\psi) }{\kappa} < \frac{2 \alpha^2 \beta m}{81}\;,
\]
then we have $\alpha > \sqrt{\delta} + 8\sqrt{\delta'} + \sqrt{2\delta''}$, but that would mean playing according to the Protocol above will win game $G$ with probability strictly greater than $\cval(G) = 1 - \eps$, which would be a contradiction since the Protocol is a classical strategy. Thus it must be that 
\[
\frac{2 E(\psi) }{\kappa} > \frac{2 \alpha^2 \beta m}{81} > \frac{\alpha^5 n}{5\cdot 90^2 \cdot C} = \frac{(\eps-\gamma)^5 n}{5\cdot 90^2 \cdot \log |\mathcal{A} \times \mathcal{B}|} \;. \qedhere
\]
\end{proof}

\subsection{Proof of Lemmas}\label{sec:proofs_lemmas}

The proofs of Lemmas~\ref{lem:input_dist} and~\ref{lem:corr_samp} are standard in the classical parallel repetition literature, for example,~\cite[Lemmas 4.1 and 6.4]{Hol09}.
%
%
%
%
%

\subsubsection{Proof of Lemma~\ref{lem:bob_ans}}

\emph{Intuition.} All statements we make are within the conditioned event $\theevent$. This Lemma establishes that Bob's $j$'th answer $\bvec_j$ is nearly independent of Alice's question $\xvec_j$ and answer $\avec_j$, conditioned on Bob's question $\yvec_j$ and the dependency-breaking variable $r_{Tj}$. 
We prove this by analyzing Bob's reduced state in the game $G_{1 - \gamma}^n$. If the amount of entanglement used (measured either by the dimension or the entanglement entropy) is too small, then Bob's reduced density matrix cannot have much mutual information with an average $\xvec_j \avec_j$ of Alice. Since Bob's answer $\bvec_j$ is the result of measuring Bob's quantum state, this implies $\bvec_j$ cannot have much mutual information with $\xvec_j \avec_j$ on average. We now formally prove this.

\medskip
\medskip

%
%
%
%
%
%

Given settings $\omega$, $\bvec_S$ of the dependency-breaking variable and Bob's answers in $S$, respectively, define the operator $B_\omega(\bvec_S)$ such that
\[
	B_\omega(\bvec_S)^2 = \Ex_{\yvec | \omega} \sum_{\bvec | \bvec_S} B_\yvec(\bvec)^2.
\]
When we refer to $B_\omega(\bvec_S)$, we refer to the positive square root of the above expression. 

Define the following density matrix on $\density{\Omega \X \A \B_S \Q_B}$:
\[
	\Psi = \Ex_{\omega,\xvec} \sum_{\avec,\bvec_S } \, \dbrac{\omega,\xvec,\avec,\bvec_S}^{\Omega \X \A \B_S} \otimes B_\omega(\bvec_S) \sqrt{\sigma} A_\xvec(\avec)^2 \sqrt{\sigma} B_\omega(\bvec_S)
\]
where the expectation over $\omega, \xvec$ is with respect to the probability measure $\P_{\Omega \Xvec}$, and $\sigma = \Tr_{\Q_A}(\ketbra{\psi}{\psi})$ is the reduced density matrix of $\ket{\psi}$ on Bob's side. The operator $B_\omega(\bvec_S) \sqrt{\sigma} A_\xvec(\avec)^2 \sqrt{\sigma} B_\omega(\bvec_S)$ can be equivalently written as 
\[ \Tr_{\Q_A} \Paren{ (A_\xvec(\avec) \otimes  B_\omega(\bvec_S) )\ketbra{\psi}{\psi} (A_\xvec(\avec) \otimes  B_\omega(\bvec_S))^\dagger}.\] 
The utility of dealing with the operator $B_\omega$ instead of $B_\yvec$ will come from the fact that we do not have to deal with averaging over $\yvec$.

Define $\what{\Psi}$ to be $\Psi$ conditioned on the event $\theevent$. By the fact that $I(\A : \B | \mathsf{C}) \leq H(\B | \mathsf{C})$ we have
\[
	I(\X \A : \Q_B | \Omega, \A_S,\B_S)_{\what{\Psi}} \leq  H(\Q_B | \Omega, \A_S,\B_S)_{\what{\Psi}}.
\]

\begin{claim} $H(\Q_B | \Omega, \A_S,\B_S)_{\what{\Psi}} \leq \frac{H(\Q_B)_\psi}{\P(\theevent)}$.
\end{claim}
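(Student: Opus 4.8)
\emph{Proof plan.} The bound is obtained by peeling the conditioning off in three steps: (i) remove the conditioning on $\theevent$ at the price of a factor $1/\P(\theevent)$; (ii) drop $\A_S$ from the conditioning, which only decreases the conditional entropy; (iii) identify what is left as the average entropy of a post-measurement state of $\sigma$ and bound it by $H(\sigma)=H(\Q_B)_\psi$ using concavity.

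For (i), observe that $\theevent$ --- the event of winning more than a $1-\tau$ fraction of the coordinates in $S$ --- is a deterministic function of $\Xvec_S,\Yvec_S,\Avec_S,\Bvec_S$, and these are all recorded in the classical registers $\Omega$ (which contains $\Xvec_S\Yvec_S$), $\A_S$, and $\B_S$. Writing $\Psi=\sum_{u}\P(u)\,\dbrac{u}^{\Omega\A_S\B_S}\otimes\Psi_u$ with $u=(\omega,\avec_S,\bvec_S)$ and $\Psi_u$ the normalized state on the remaining registers $\X\A_{[m]}\Q_B$, the state $\what\Psi$ is this sum restricted to the $u$ for which $\theevent$ holds and rescaled by $1/\P(\theevent)$. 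Since $H(\Q_B\,|\,\Omega\A_S\B_S)_{\Psi}=\sum_u \P(u)\,H(\Q_B)_{\Psi_u}$ has nonnegative summands, dropping the remaining ones gives $H(\Q_B\,|\,\Omega\A_S\B_S)_{\what\Psi}\le \P(\theevent)^{-1}\,H(\Q_B\,|\,\Omega\A_S\B_S)_{\Psi}$.

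For (ii) and (iii), concavity of the von Neumann entropy (the conditioning registers being classical) gives $H(\Q_B\,|\,\Omega\A_S\B_S)_{\Psi}\le H(\Q_B\,|\,\Omega\B_S)_{\Psi}$, so it suffices to bound the right-hand side. Tracing $\X$ and $\A$ out of the definition of $\Psi$ and using $\sum_{\avec}A_\xvec(\avec)^2=\Id$ for every $\xvec$, the reduced state on $\Omega\B_S\Q_B$ equals $\sum_\omega\P_\Omega(\omega)\sum_{\bvec_S}\dbrac{\omega,\bvec_S}^{\Omega\B_S}\otimes B_\omega(\bvec_S)\,\sigma\,B_\omega(\bvec_S)$, and hence
\[ H(\Q_B\,|\,\Omega\B_S)_{\Psi}=\Ex_\omega\Big[\,\sum_{\bvec_S}p_{\omega,\bvec_S}\,H\big(p_{\omega,\bvec_S}^{-1}B_\omega(\bvec_S)\sigma B_\omega(\bvec_S)\big)\Big],\qquad p_{\omega,\bvec_S}:=\Tr\big(B_\omega(\bvec_S)^2\sigma\big). \]
Fix $\omega$. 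Since $XX^\dagger$ and $X^\dagger X$ have the same nonzero spectrum (take $X=B_\omega(\bvec_S)\sqrt\sigma$), $p_{\omega,\bvec_S}^{-1}B_\omega(\bvec_S)\sigma B_\omega(\bvec_S)$ has the same entropy as $p_{\omega,\bvec_S}^{-1}\sqrt\sigma\,B_\omega(\bvec_S)^2\sqrt\sigma$, and since $\{B_\omega(\bvec_S)^2\}_{\bvec_S}$ is a POVM we have $\sum_{\bvec_S}\sqrt\sigma\,B_\omega(\bvec_S)^2\sqrt\sigma=\sigma$; so the ensemble $\{(p_{\omega,\bvec_S},\,p_{\omega,\bvec_S}^{-1}\sqrt\sigma\,B_\omega(\bvec_S)^2\sqrt\sigma)\}_{\bvec_S}$ averages to $\sigma$, and concavity gives $\sum_{\bvec_S}p_{\omega,\bvec_S}\,H\big(p_{\omega,\bvec_S}^{-1}\sqrt\sigma\,B_\omega(\bvec_S)^2\sqrt\sigma\big)\le H(\sigma)$. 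Averaging over $\omega$ yields $H(\Q_B\,|\,\Omega\B_S)_{\Psi}\le H(\sigma)=H(\Q_B)_\psi$; chaining with (i)--(ii) proves the claim.

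The step needing care is the last inequality: one cannot simply bound $H(\Q_B\,|\,\cdots)_\Psi$ by $H(\Q_B)_\Psi$, because the measurement channel $\sigma\mapsto\sum_{\bvec_S}B_\omega(\bvec_S)\sigma B_\omega(\bvec_S)$ is unital and therefore entropy non-decreasing, so $H(\Q_B)_\Psi$ can be strictly larger than $H(\sigma)$. The fix is to exploit the coherence of Bob's coarse-grained measurement by rewriting the post-measurement states in the $\sqrt\sigma\,B_\omega(\bvec_S)^2\sqrt\sigma$ form, which exhibits them as an ensemble whose \emph{average} is exactly $\sigma$, so that concavity applies in the direction we want.
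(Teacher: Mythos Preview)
Your proof is correct and follows the same three-step outline as the paper: remove the conditioning on $\theevent$ at the cost of $1/\P(\theevent)$, drop $\A_S$, and then bound by $H(\sigma)$.

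The only noteworthy difference is in how the last step is packaged. The paper first relaxes $H(\Q_B\mid\Omega,\B_S)_\Psi\le H(\Q_B\B_S\mid\Omega)_\Psi$ via the chain rule, and then observes that for each fixed $\omega$ the joint state $\Psi^{\Q_B\B_S}_\omega=\sum_{\bvec_S}B_\omega(\bvec_S)\sigma B_\omega(\bvec_S)\otimes\dbrac{\bvec_S}$ is unitarily equivalent to $\sigma\otimes\dbrac{0}$, giving the exact equality $H(\Q_B\B_S\mid\Omega)_\Psi=H(\Q_B)_\psi$. You instead stay with $H(\Q_B\mid\Omega,\B_S)_\Psi$ and bound it directly by concavity after the spectral swap $B_\omega(\bvec_S)\sigma B_\omega(\bvec_S)\sim\sqrt\sigma\,B_\omega(\bvec_S)^2\sqrt\sigma$. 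Both arguments ultimately rest on the same $XX^\dagger\sim X^\dagger X$ identity (with $X=B_\omega(\bvec_S)\sqrt\sigma$ in your version, and $X=\sum_{\bvec_S}B_\omega(\bvec_S)\sqrt\sigma\otimes\ket{\bvec_S}\bra{0}$ underlying the paper's unitary equivalence), so the difference is organizational rather than substantive. The paper's route yields an equality at the joint-register level, while yours is arguably more transparent about where concavity enters.
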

We prove this in the Appendix as Claim~\ref{clm:cond_qb}, and assume it for now. We can use the so-called ``randomized chain rule'' for mutual information, where we only consider coordinates $i \in [n]\setminus S$.  We have
\[
	\Ex_{\pi,i} I((\X \A)_{\pi(i)} : \Q_B | \Omega, \A_S,\B_S, (\X\A)_{\pi(< i)})_{\what{\Psi}} \leq  \frac{H(\Q_B)_\psi}{\P(\theevent) m}
\]
where $\pi$ is a random permutation on $[m]$, and $i$ is a uniformly random index in $[m]$.
By Markov's inequality, there exists an $i \leq \beta m$ such that 
\[
\Ex_\pi I((\X \A)_{\pi(i)} : \Q_B | \Omega, \A_S,\B_S, (\X\A)_{\pi(< i)})_{\what{\Psi}} \leq \delta''.
\]
Fix such an $i$.

We can alternatively express the averaging over $\pi$ as first choosing a uniformly random set $T \subseteq [m]$ of size $i - 1 \leq \beta m$, and then choosing a uniformly random $j \in [m] \setminus T$:
\[
\Ex_{T,j\notin T \cup S} I(\X_j \A_j : \Q_B | \Omega, \A_S,\B_S,  (\X\A)_T)_{\what{\Psi}} = \Ex_{T,j \notin T \cup S} \Brac{ \Ex_{\omega, \xvec_T,\avec_\ScupT,\bvec_S | \theevent} I(\X_j \A_j : \Q_B | \omega, \xvec_T, \avec_\ScupT,\bvec_S)_{\what{\Psi}}} \leq \delta''
\]
By further conditioning on $\omega$, fixing Bob's input in coordinate $j$, and bundling the random variables $R_{Tj} = (\Omega_\mj,\Xvec_T,\Avec_\ScupT,\Bvec_S)$ we get that
\begin{equation}
\label{eq:bobstate}
\Ex_{T,j \notin T \cup S} \Brac{ \Ex_{r_{Tj}, \yvec_j | \theevent} I(\X_j \A_j : \Q_B |  r_{Tj}, \yvec_j)_{\what{\Psi}}} \leq 2\delta''.
\end{equation}
Line~\eqref{eq:bobstate} along with Pinsker's inequality yields that 
\begin{equation}
\label{eq:bob_ans}
\Ex_{T,j \notin T \cup S} \Brac{ \Ex_{r_{Tj}, \xvec_j, \yvec_j,\avec_j  | \theevent} \what{\Psi}_{r_{Tj}, \yvec_j}^{\Q_B} \approx_{\sqrt{2\delta''}} \what{\Psi}_{r_{Tj}, \xvec_j,\yvec_j,\avec_j}^{\Q_B} }.
\end{equation}

\medskip
\medskip

To finish the proof of the Lemma we will describe a measurement that when performed on $\what{\Psi}_{r_{Tj}, \yvec_j}^{\Q_B}$ and $\what{\Psi}_{r_{Tj}, \xvec_j,\yvec_j,\avec_j}^{\Q_B} $ respectively produces the probability distributions $\P_{\Bvec_j | r_{Tj},\yvec_j, \theevent}$ and $\P_{\Bvec_j | r_{Tj},\xvec_j,\avec_j,\yvec_j, \theevent}$. Given this, if we apply the measurement on both sides of the approximation of~\eqref{eq:bob_ans}, we get that $\P_{\Bvec_j | r_{Tj},\yvec_j, \theevent}$ and $\P_{\Bvec_j | r_{Tj},\xvec_j,\avec_j,\yvec_j, \theevent}$ are close in statistical distance, concluding the proof.

Notice that since the event $\theevent$ is determined by the variables $\Omega,\Avec_S,\Bvec_S$, for $r_{Tj}$ sampled conditioned on $\theevent$, we have $\what{\Psi}_{r_{Tj}, \xvec_j,\avec_j,  \yvec_j}^{\Q_B} = \Psi_{r_{Tj}, \xvec_j,\avec_j,  \yvec_j}^{\Q_B}$. Furthermore, we have
\[
\Psi_{r_{Tj}, \xvec_j,\avec_j,  \yvec_j}^{\Q_B} = \frac{B_{\omega} (\bvec_S) \sqrt{\sigma} A_{\omega_\mj,\xvec_j}(\avec_{S \cup T \cup \{j\}})^2 \sqrt{\sigma} B_{\omega} (\bvec_S) }{\P(\avec_{S \cup T \cup \{j\}},\bvec_S | \omega_\mj, \xvec_j,\yvec_j)} 
\]
where $\omega$ is defined as $\omega_\mj$ with $\yvec_j$ fixed in the $j$'th coordinate, and we define the operator
\[
A_{\omega_\mj,\xvec_j}(\avec_{S \cup T \cup \{j\}})^2 = \Ex_{\xvec | \omega_\mj,\xvec_j} \sum_{\avec | \avec_{S \cup T \cup \{j\}}} A_\xvec(\avec)^2.
\]
One can verify that the normalization is correct via the following calculation:
	\begin{align*}
	&\Tr \Paren{B_{\omega} (\bvec_S)^2 \sqrt{\sigma} A_{\omega_\mj,\xvec_j}(\avec_{S \cup T \cup \{j\}})^2 \sqrt{\sigma}} \\
	&= \Ex_{\xvec |\omega_\mj, \xvec_j}\,\, \Ex_{\yvec|\omega} \sum_{\avec,\bvec | \avec_{S \cup T \cup \{j\}},\bvec_S} \Tr \Paren{B_\yvec(\bvec)^2 \sqrt{\sigma} A_\xvec(\avec)^2 \sqrt{\sigma}} \\
	&= \Ex_{\xvec, \yvec |\omega_\mj, \xvec_j, \yvec_j}\,\, \sum_{\avec,\bvec | \avec_{S \cup T \cup \{j\}},\bvec_S} \bra{\psi} A_\xvec(\avec)^2 \otimes B_\yvec(\bvec)^2  \ket{\psi} \\	
	&= \Ex_{\xvec, \yvec |\omega_\mj, \xvec_j, \yvec_j} \,\, \sum_{\avec,\bvec | \avec_{S \cup T \cup \{j\}},\bvec_S} \P(\avec,\bvec | \xvec, \yvec) \\
	&= \,\, \P(\avec_{S \cup T \cup \{j\}},\bvec_S | \omega_\mj ,\xvec_j, \yvec_j).
	\end{align*}
%
%
%
%

\medskip
\medskip

\noindent \textbf{The measurement.} For every $\omega_\mj, \yvec_j, \bvec_S$, define the POVM indexed by $\bvec_j$:
\[
	M(\bvec_j) = B_{\omega_\mj,\yvec_j} (\bvec_S)^{-1} B_{\omega_\mj,\yvec_j} (\bvec_{S \cup \{j\}})^2  B_{\omega_\mj,\yvec_j} (\bvec_S)^{-1} 
\]
The operators $B_{\omega_\mj,\yvec_j} (\bvec_S)$ and $B_{\omega_\mj,\yvec_j} (\bvec_{S \cup \{j\}})$ are defined analogously to $A_{\omega_\mj,\xvec_j}(\avec_{S \cup T \cup \{j\}})$.
Thus we get
\begin{align*}
	\Tr \Paren{ M(\bvec_j) \, \Psi_{r_{Tj}, \xvec_j, \yvec_j, \avec_j}^{\Q_B} } &= \frac{\P(\avec_{S \cup T \cup \{j\}},\bvec_{S \cup \{j\}} | \omega_\mj, \xvec_j,\yvec_j)}{\P(\avec_{S \cup T \cup \{j\}},\bvec_S | \omega_\mj, \xvec_j,\yvec_j)} \\
	&= \P(\bvec_j| r_{Tj}, \xvec_j,\avec_j,\yvec_j) \\
	&= \P(\bvec_j| r_{Tj}, \xvec_j,\avec_j,\yvec_j,\theevent)
\end{align*}
where the last equality follows from the fact that the event $\theevent$ is determined by $r_{Tj}$. Similarly, $\Tr \Paren{ M(\bvec_j) \, \Psi_{r_{Tj}, \yvec_j }^{\Q_B} } = \P(\bvec_j | r_{Tj}, \yvec_j, \theevent)$.

\subsubsection{Proof of Lemma~\ref{lem:alice_ans}}
\emph{Intuition.} This statement of this Lemma is very similar to the previous one, except it is simpler in that it argues that for an average coordinate $j$, Alice's answer $\avec_j$ is nearly independent of Bob's question $\yvec_j$, conditioned on $\theevent$ and Alice's question $\xvec_j$ --- notice that we do not consider Bob's answer $\bvec_j$. If we did not condition on the event $\theevent$, this statement would be true exactly because of the no-signaling principle. This Lemma shows that the no-signalling condition for an average coordinate $j$ approximately holds even after conditionin on $\theevent$.

\medskip
\medskip

Fix a set $T$ of size at most $\beta m$. For every $\omega, \xvec_T$, consider the probability distribution $\P_{\Yvec \Avec \Bvec_S | \omega, \xvec_T}$. Notice that
\begin{align}
	\P_{\Yvec \Avec \Bvec_S | \omega, \xvec_T} \preceq (\dim \Avec_\ScupT \Bvec_S) \Paren{\P_{\Yvec  \Avec_{-\ScupT} | \omega, \xvec_T} \cdot \U_{\Avec_\ScupT \Bvec_S}}
\end{align}
where $\U_{\Avec_\ScupT \Bvec_S}$ denotes the uniform distribution over $\Avec_\ScupT \Bvec_S$ and ``$\preceq$'' denotes stochastic dominance. Note that the no-signaling principle implies that
\[
\P_{\Yvec \Avec_{-\ScupT} | \omega, \xvec_T} = \P_{\Yvec | \omega, \xvec_T} \cdot  \P_{\Avec_{-\ScupT} | \omega, \xvec_T} = \P_{\Yvec | \omega} \cdot  \P_{ \Avec_{-\ScupT} | \omega, \xvec_T}.
\]
Therefore
\begin{align}
	D_\infty \Paren{\P_{\Yvec  \Avec \Bvec_S | \omega, \xvec_T} \DivMid \P_{\Yvec | \omega} \cdot \P_{ \Avec_{-\ScupT} |\omega, \xvec_T} \cdot \U_{\Avec_\ScupT \Bvec_S} } \leq \log \dim \Avec_\ScupT \Bvec_S.
\end{align}
We have that by Lemmas~\ref{lem:div_cond} and~\ref{lem:div_chain_rule}, 
\[
\Ex_{\omega,\xvec_T | \theevent} D \Paren{\P_{\Yvec  \Avec  \Bvec_S | \omega, \xvec_T, \theevent} \DivMid \P_{\Yvec  \Avec \Bvec_S |\omega, \xvec_T}} \leq D \Paren{\P_{\Omega \Yvec  \Avec \Bvec_S | \theevent} \DivMid \P_{\Omega \Yvec  \Avec \Bvec_S}} \leq \log \frac{1}{\P(\theevent)}.
\]
Putting everything together, we have
\begin{align}
	&\Ex_{\omega,\xvec_T,\avec_\ScupT,\bvec_S | \theevent} D \Paren{ \P_{\Yvec  \Avec_{-\ScupT} | \omega,\xvec_T, \avec_\ScupT, \bvec_S, \theevent} \DivMid \P_{\Yvec | \omega} \cdot \P_{ \Avec_{-\ScupT} | \omega, \xvec_T, \avec_\ScupT}} \\
&\leq \Ex_{\omega,\xvec_T | \theevent} D \Paren{ \P_{\Yvec  \Avec \Bvec_S | \omega,\xvec_T \theevent} \DivMid \P_{\Yvec | \omega} \cdot \P_{ \Avec_{-\ScupT} | \omega,\xvec_T, \avec_\ScupT} \cdot \U_{\Avec_\ScupT \Bvec_S}} \\
	&\leq \Ex_{\omega,\xvec_T | \theevent} D \Paren{\P_{\Yvec  \Avec \Bvec_S | \omega,\xvec_T, \theevent} \DivMid \P_{\Yvec  \Avec \Bvec_S | \omega,\xvec_T}} 	\\
	&\qquad \qquad \qquad + D_\infty \Paren{\P_{\Yvec  \Avec \Bvec_S | \omega,\xvec_T} \DivMid \P_{\Yvec | \omega} \cdot \P_{ \Avec_{-\ScupT} | \omega,\xvec_T} \cdot \U_{\Avec_\ScupT \Bvec_S} } \\
	&\leq \log \frac{1}{\P(\theevent)} + \log \dim (\Avec_{S \cup T} \Bvec_S).
\end{align}

Notice that $\P_{\Yvec | \omega}$ is a product distribution across the coordinates of $\Yvec$. Therefore Raz's Lemma applies, and we get that
\begin{align}
	&\frac{1}{m - |T|} \Ex_{\omega,\xvec_T,\avec_\ScupT,\bvec_S | \theevent} \sum_{j \notin T \cup S} I(\Yvec_j : \Avec_{-\ScupT} | \omega,\xvec_T,\avec_\ScupT,\bvec_S) \\
	&\leq \frac{1}{m - \beta m} \Paren{\log \frac{1}{\P(\theevent)} + \log \dim (\Avec_{S \cup T} \Bvec_S)} = \delta'.
\end{align}
By conditioning on $\omega$ fixing Alice's input in coordinate $j$, we get that
\begin{equation}
\label{eq:alicestate}
	\Ex_{j \notin T \cup S} \,\, \Ex_{\xvec_j,\omega_\mj,\xvec_T,\avec_\ScupT,\bvec_S | \theevent}  I(\Yvec_j : \Avec_j | \xvec_j, \omega_\mj,\xvec_T,\avec_\ScupT,\bvec_S) \leq 2\delta'.
\end{equation}
Bundling the random variables $\R_{Tj} = (\Omega_\mj,\Xvec_T,\Avec_\ScupT,\Bvec_S)$ and by Pinsker's inequality, we have
\begin{equation}
\label{eq:alice2}
		\Ex_{j \notin T \cup S} \,\,\, \Ex_{r_{Tj},\xvec_j,\yvec_j | \theevent} \Brac{\P_{\Avec_j | r_{Tj}, \xvec_j, \yvec_j, \theevent} \approx_{\sqrt{2\delta'}} \P_{\Avec_j |  r_{Tj}, \xvec_j, \theevent} }.
\end{equation}
This concludes the proof.

\subsection{Proof of Theorem~\ref{cor:lwb_ent_single_game}}

We now explain how Theorem~\ref{cor:lwb_ent_single_game} easily follows from our main theorem, Theorem~\ref{thm:main}. We first restate the theorem.

\begin{customcorollary}{\ref{cor:lwb_ent_single_game}}
	Let $G$ be a two-player game with a classical-quantum gap: i.e., $\Delta := \qval(G) - \cval(G) > 0$. Let $0 \leq \nu < \Delta$ be a noise parameter. Then, for any state $\sigma$ that can be used to win~$G$ with probability at least $\qval(G) - \nu$, its entanglement cost satisfies ]$E_C(\sigma)\geq c_2/4$, where~$c_2$ is the constant from Theorem~\ref{thm:main} as given in Equation~\eqref{eq:final_thm_constants}.
\end{customcorollary}

\begin{proof}
	Consider a quantum strategy for $G$ that uses a state $\sigma$ that succeeds with probability $\qval(G) - \nu$.
	Playing $n$ instances of the considered strategy in parallel, using the state~$\sigma^{\otimes n}$, will succeed in the threshold game $G^n_{\qval(n) - \nu}$ with probability $1/2$. Hence, by Theorem~\ref{thm:main}, 
	\[
		E_F(\sigma^{\otimes n}) \geq c_2 n/4 \;.
	\]
	According to~\cite{hayden2001asymptotic}, the entanglement cost is equal to the \emph{regularised} entanglement of formation:
	\[
		E_C(\rho) = E^{\infty}_F(\rho)=\lim_{n\rightarrow\infty} (E_F(\rho^{\otimes n})/n) \;.
	\]
	Combining the above two observations together we get $E_C(\sigma)\geq c_2/4$.
\end{proof} 

\appendix

\section{Omitted proofs}

\begin{proof}[Proof of Proposition~\ref{prop:subset2}]
	Let $\alpha = \eps - \gamma$, and let $\tau,\delta,t$ be parameters that we will choose later, subject to $\gamma < \tau < \delta$. 

	We first show that $\Ex_S \Pr(\neg W_j | \theevent) \leq \eps - \alpha/4$, where $S$ is a (multi)set of $t$ independently chosen indices in $[n]$. 
	First we write, for a fixed $S$,
	\begin{align*}
		\Pr ( \neg W_j | \theevent) &= \Pr(\neg W_j | \theevent, W^{> 1 - \delta}) \Pr(W^{> 1 - \delta} | \theevent) + \\ &\qquad \qquad \Pr(\neg W_j | \theevent, \neg W^{> 1 - \delta}) \Pr(\neg W^{> 1 - \delta} | \theevent) \\
		&\leq \Pr(\neg W_j | \theevent \wedge W^{> 1 - \delta}) + \Pr(\neg W^{> 1 - \delta} | \theevent)
	\end{align*}
	Observe that $\Pr(\neg W_j | \theevent \wedge W_{> 1 - \delta})$ is the probability that, conditioned on winning all rounds in $S$, the randomly selected coordinate $j \in [n] - S$ happens to be one of the (at most) $\delta n$ lost rounds. This is at most $\delta n/(n - t)$. 
	
	Next we bound $\Ex_S \Pr(\neg W^{> 1 - \delta} | \theevent)$. Call a subset $S$ \emph{good} if $\Pr(W_S^{\geq 1 - \tau}) \geq \Pr(W^{\geq 1 - \gamma})/2$, and \emph{bad} otherwise. We now bound the probability a uniform random $S$ is good. Let $R$ denote the random subset of $[n]$ that indicates which rounds were won. Then  for every $R$ such that least $|R| \geq (1 - \gamma)n$, the probability conditioned on $R$ of picking $S$ such that $W_S^{\geq 1 - \tau} = 1$ is at least $1 - \exp(- (\gamma - \tau)^2 t/3)$, by a standard Chernoff bound. Therefore
	$$
	\Ex_{R | W^{\geq 1 - \gamma}} \Ex_S \,\, \Pr(W_S^{\geq 1 -\tau} | R) =\Ex_{R | W^{\geq 1 - \gamma}} \Ex_S\,\, \Ind \{ W_S^{\geq 1 -\tau} | R \} \geq 1 - \exp(- (\gamma - \tau)^2  t/3)
	$$
	where the expectation over $R | W^{\geq 1 - \gamma}$ denotes picking $R$ conditioned on the fact that $|R| \geq (1 - \gamma)n$. By Markov's inequality, this implies that at least a $1 - \sqrt{\exp(- (\gamma - \tau)^2  t/3)}$ fraction of $S$'s are such that
	$$
		\Pr(W_S^{\geq 1 - \tau} | W^{\geq 1 - \gamma}) = \Ex_{R | W^{\geq 1 - \gamma}} \Pr(W_S^{\geq 1 -\tau} | R) \geq 1 - \sqrt{\exp(- (\gamma - \tau)^2 t/3)}.
	$$
	Notice that $\Pr(W_S^{\geq 1 - \tau}) \geq \Pr(W_S^{\geq 1 - \tau} | W^{\geq 1 - \gamma}) \cdot \Pr(W^{\geq 1 - \gamma})$, so therefore $\exp(- (\gamma - \tau)^2 t/6)$ bounds the probability that $S$ is bad. 
	
	Now observe that 
	\begin{align*}
		\Ex_S \Pr(\neg W^{\geq 1 - \delta} | \theevent) &\leq \Pr(\text{$S$ bad}) + \sum_{\text{$S$ good}} \Pr(S) \cdot \Pr(\neg W^{\geq 1 - \delta} | \theevent) \\
		&\leq \Pr(\text{$S$ bad}) + \sum_{\text{$S$ good}} \Pr(S) \cdot \frac{\Pr(\theevent | \neg W^{\geq 1 - \delta})}{\Pr(\theevent)} \\
								   &\leq \Pr(\text{$S$ bad}) + \frac{2}{\Pr(W^{\geq 1 - \gamma})} \sum_{\text{$S$ good}} \Pr(S) \cdot \Pr(\theevent | \neg W^{> 1 - \delta})
	\end{align*}
	where we used the definition of $S$ being good in the third inequality. To bound the second term of the last line, we compute
	\begin{align*}
	\sum_{\text{$S$ good}} \Pr(S) \cdot \Pr(\theevent | \neg W^{> 1 - \delta})
&\leq \Ex_S \Pr(\theevent | \neg W^{> 1 - \delta}) \\
&= \Ex_{R | \neg W^{> 1 - \delta}} \Ex_S \,\, \Pr(\theevent | R) \\
&= \Ex_{R | \neg W^{> 1 - \delta}} \Ex_S \,\, \Ind \{ \theevent | R \}
	\end{align*}
where the expectation over $R |\neg W^{\leq 1 - \delta}$ denotes picking $R$ conditioned on $|R| \leq (1 - \delta)n$. By a Chernoff bound, for every such $R$, the probability of picking $S$ such that $W_S^{\geq 1 - \tau} = 1$ is at most $\exp(- (\delta - \tau)^2 t/3)$. 

Choose $\tau = \eps - \frac{3}{4}\alpha$ and let $\Delta = \alpha/4$. Let $\kappa$ denote $\Pr(W^{\geq 1 - \gamma})$. Set
\[
	t = \frac{6}{\Delta^2} \Paren{ \ln \frac{2}{\kappa} + \ln \frac{8}{\alpha}}.
\]
Finally, let $\delta = \eps - \frac{1}{4} \alpha - \frac{t}{n}$. Since we are assuming that $\kappa \geq \frac{16}{\alpha} \exp \Paren{ - \frac{\alpha^3}{384} n}$, this implies that $t/n \leq \alpha/4$. 

Putting everything together, we get that
\begin{align*}
\Ex_S \Pr ( \neg W_j | \theevent) &\leq \frac{\delta n}{n - t} + e^{- (\gamma - \tau)^2 t/6} + \frac{2}{\kappa}  e^{- (\delta - \tau)^2 t/3} \\ 
	&\leq \frac{\delta n}{n - t} + e^{- \Delta^2 t/6} + \frac{2}{\kappa}  e^{- \Delta^2 t/3} \\
	&\leq \frac{\delta n}{n - t} + \frac{\alpha}{4} \\
	&\leq \alpha/2
\end{align*}
by our choice of parameters.

Therefore by the probabilistic method there is a $S$ that satisfies the conclusions of the Proposition statement.
\end{proof}

\begin{claim}
\label{clm:cond_qb}
\[
H(\Q_B | \Omega, \A_S, \B_S)_{\what{\Psi}} \leq \frac{H(\Q_B)_\psi}{\P(\theevent)}.
\]
\end{claim}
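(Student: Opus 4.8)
The plan is to strip the conditioning on $\theevent$ off first, at the cost of a multiplicative factor $1/\P(\theevent)$, and then bound what remains by $H(\Q_B)_\psi$, using that a measurement does not increase the average von Neumann entropy. For the first step, note that $\theevent$ --- the event that the players win more than a $1-\tau$ fraction of the coordinates in $S$ --- is a deterministic function of the registers $(\Omega,\A_S,\B_S)$: for each $i\in S$ the questions $\xvec_i,\yvec_i$ are recorded in $\Omega$ (which contains $\Xvec_S,\Yvec_S$) and the answers $\avec_i,\bvec_i$ are recorded in $\A_S,\B_S$, so the truth value of $V(\xvec_i,\yvec_i,\avec_i,\bvec_i)=1$ is fixed. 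Expanding the conditional entropy as an average over the classical registers, and using that for any fixing $(\omega,\avec_S,\bvec_S)$ consistent with $\theevent$ the $\Q_B$-state of $\what\Psi$ equals that of $\Psi$, nonnegativity of the von Neumann entropy lets us restore the terms outside $\theevent$:
\[
	H(\Q_B \mid \Omega,\A_S,\B_S)_{\what\Psi} \;=\; \Ex_{(\omega,\avec_S,\bvec_S) \mid \theevent}\; H\big(\Psi^{\Q_B}_{\omega,\avec_S,\bvec_S}\big) \;\le\; \frac{1}{\P(\theevent)}\, H(\Q_B \mid \Omega,\A_S,\B_S)_{\Psi}.
\]

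Next, since $\A_S$ is a classical register, dropping it from the conditioning cannot decrease the entropy, so it suffices to bound $H(\Q_B \mid \Omega,\B_S)_\Psi$. A short computation using $\Ex_{\xvec \mid \omega}\sum_{\avec}A_\xvec(\avec)^2 = \Id$ shows that the $\Q_B$-part of $\Psi$ conditioned on $\Omega=\omega$, $\B_S=\bvec_S$ is exactly $\Psi^{\Q_B}_{\omega,\bvec_S} = B_\omega(\bvec_S)\,\sigma\,B_\omega(\bvec_S)\big/\P(\bvec_S \mid \omega)$, the L\"uders post-measurement state of $\sigma = \Tr_{\Q_A}\ketbra{\psi}{\psi}$ under outcome $\bvec_S$; and $\{B_\omega(\bvec_S)^2\}_{\bvec_S}$ is a POVM on $\Q_B$, since $\sum_{\bvec_S}B_\omega(\bvec_S)^2 = \Ex_{\yvec \mid \omega}\sum_{\bvec}B_\yvec(\bvec)^2 = \Id$. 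Hence $H(\Q_B \mid \Omega,\B_S)_\Psi = \Ex_\omega \Ex_{\bvec_S \mid \omega}\, H\big(B_\omega(\bvec_S)\,\sigma\,B_\omega(\bvec_S)\big/\P(\bvec_S \mid \omega)\big)$, which is an average over $\omega$ of the average post-measurement entropy of $\sigma$.

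The key fact --- and the main point to establish --- is that measuring with the L\"uders rule does not raise the average von Neumann entropy: for any state $\rho$ and any POVM $\{M_k\}$, writing $p_k = \Tr(M_k\rho)$ and $\rho_k = \sqrt{M_k}\,\rho\,\sqrt{M_k}\big/p_k$, one has $\sum_k p_k H(\rho_k) \le H(\rho)$. To see this, purify $\rho$ on $\Q_B$ as $\ket{\phi}\in\Q_B\otimes\mathsf{R}$, and let $\tau_k$ denote the normalized $\mathsf{R}$-marginal of the pure subnormalized state $(\sqrt{M_k}\otimes\Id_{\mathsf{R}})\ket{\phi}$; then $H(\rho_k)=H(\tau_k)$, since $\rho_k$ is the $\Q_B$-marginal of that same pure state, and --- using $\sum_k M_k = \Id$ --- $\sum_k p_k\tau_k = \Tr_{\Q_B}\ketbra{\phi}{\phi} = \rho_{\mathsf{R}}$, so by concavity of the von Neumann entropy $\sum_k p_k H(\rho_k) = \sum_k p_k H(\tau_k) \le H(\rho_{\mathsf{R}}) = H(\rho)$. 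Applying this with $\rho=\sigma$ and the POVM $\{B_\omega(\bvec_S)^2\}$, and averaging over $\omega$, gives $H(\Q_B \mid \Omega,\B_S)_\Psi \le H(\sigma) = H(\Q_B)_\psi$; chaining this with the two reductions above proves the claim. I expect the remaining effort to be routine bookkeeping: checking that $\Psi$ is a bona fide density operator, that $\sigma$ does not depend on $\omega$ (so that the measurement fact can be applied for each $\omega$ separately), and that $\theevent$ is genuinely a function of $(\Omega,\A_S,\B_S)$.
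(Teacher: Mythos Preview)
Your proof is correct. The first two reductions --- stripping the conditioning on $\theevent$ at the cost of $1/\P(\theevent)$ using that $\theevent$ is a deterministic function of $(\Omega,\A_S,\B_S)$, and then dropping the classical register $\A_S$ --- are exactly what the paper does. The computation identifying $\Psi^{\Q_B}_{\omega,\bvec_S}$ as the L\"uders post-measurement state of $\sigma$ is also right.

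The final step is where you and the paper diverge. You bound $H(\Q_B\mid\Omega,\B_S)_\Psi$ directly by invoking the fact that the average post-measurement von Neumann entropy under a L\"uders instrument does not exceed the pre-measurement entropy, and you supply a clean purification-and-concavity proof of that fact. The paper instead relaxes further to $H(\Q_B\B_S\mid\Omega)_\Psi$ and then asserts that the joint state $\Psi^{\Q_B\B_S}_\omega=\sum_{\bvec_S}B_\omega(\bvec_S)\sigma B_\omega(\bvec_S)\otimes\dbrac{\bvec_S}$ is \emph{unitarily equivalent} to $\sigma\otimes\dbrac{0}$, concluding equality of entropies. Your route is actually the more careful one: the claimed unitary equivalence is not literally true (the cq-state is the \emph{dephasing} of the isometric image $V\sigma V^\dagger$ with $V\ket{\phi}=\sum_{\bvec_S}B_\omega(\bvec_S)\ket{\phi}\otimes\ket{\bvec_S}$, and dephasing can strictly increase entropy --- take $\sigma$ pure and $B_0=B_1=\Id/\sqrt2$). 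What is needed is only the inequality $H(\Q_B\mid\Omega,\B_S)_\Psi\le H(\sigma)$, which is exactly your L\"uders-instrument bound. So your argument both parallels the paper's outline and cleanly sidesteps a delicate point in the paper's version.
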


\begin{proof}
	Let $\lambda = \P(\theevent)$. Recall that the state $\Psi$ is a density matrix on registers $\density{\Omega \X \A \B_S \Q_B}$ that is classical on $\Omega \X \A \B_S$, and that $\what{\Psi}$ is $\Psi$ conditioned on the event $\theevent$, which is solely a function of the classical registers $\Omega \A_S \B_S$. Therefore $\Psi^{\Omega \X \A \B_S \Q_B} = \lambda \cdot \Psi^{\Omega \X \A \B_S \Q_B}_{\theevent} + (1 - \lambda) \cdot \Psi^{\Omega \X \A \B_S \Q_B}_{\neg \theevent}$ and hence
	\begin{align*}
	\lambda \cdot H(\Q_B | \Omega, \A_S, \B_S)_{\what{\Psi}} &\leq \lambda \cdot H(\Q_B | \Omega, \A_S,\B_S, \theevent)_{\Psi} +  (1 - \lambda) \cdot H(\Q_B | \Omega, \A_S,\B_S, \neg \theevent)_\Psi \\
		&= H(\Q_B | \Omega, \A_S, \B_S, \mathsf{F})_\Psi
	\end{align*}
	where $\mathsf{F}$ is an additional qubit register that stores the result of the binary measurement that checks whether $\theevent$ happened. Since $\Omega, \A_S, \B_S$ are classical, introducing this extra qubit register does not change the density matrix of $\Psi$ on $\Omega \X \A \B_S \Q_B$. 
	
	Since conditioning can only decrease entropy, we have that the above is at most $H(\Q_B | \Omega, \B_S)_\Psi \leq H(\Q_B \B_S | \Omega)_\Psi$. Notice that for every $\omega$, the density matrix $\Psi^{\Q_B \B_S}_\omega$ is the following:
	\begin{align*}
\Psi^{\Q_B \B_S}_\omega &=	\Ex_{\xvec} \sum_{\avec,\bvec_S}  \Paren{B_\omega(\bvec_S) \sqrt{\sigma} A_\xvec(\avec)}^2 \otimes \dbrac{\bvec_S}^{\B_S} \\
&= \sum_{\bvec_S} \Paren{B_\omega(\bvec_S) \sqrt{\sigma}}^2 \otimes \dbrac{\bvec_S}^{\B_S}.
	\end{align*}
But now notice that $\Psi^{\Q_B \B_S}_\omega$ is unitarily equivalent to to 
\[
	\Phi^{\Q_B \B_S} = \sigma^{\Q_B} \otimes \dbrac{0}^{\B_S}.
\]
Since von Neumann entropy is invariant under unitary transformations, we have
$$
	H(\Q_B \B_S | \Omega)_\Psi = H(\Q_B \B_S | \Omega)_\Phi = H(\Q_B)_\rho.
$$
Since $\rho = \psi^{\Q_B}$, this completes the proof.
\end{proof}

\bibliography{robust_dim_testing}

\end{document}